\newtheorem{theorem}{Theorem}[section]
\newtheorem{lemma}[theorem]{Lemma}
\newtheorem{claim}[theorem]{Claim}
\newtheorem{corollary}[theorem]{Corollary}
\newtheorem{definition}[theorem]{Definition}
\newtheorem{observation}{Observation}
\newtheorem{remark}{Remark}
\DeclareMathOperator{\Tr}{Tr}
\newcommand*{\ket}[1]{|#1\rangle}
\newcommand*{\opro}[2]{|#1\rangle\langle#2|}
\newcommand*{\ipro}[2]{\langle #1|#2\rangle}
\newcommand*{\E}{\mathop{\mathbb{E}}}
\newcommand*{\Ora}{\mathcal{O}}
\newcommand*{\Oru}{\mathcal{O}^{f,d}_{\mathrm{unif}}}
\newcommand*{\Z}{\mathbb{Z}}
\newcommand*{\vecf}{\mathcal{F}}
\newcommand*{\nvecf}{\hat{\mathcal{F}}}
\newcommand*{\vecg}{\mathcal{G}}
\newcommand*{\convg}{\mathbf{G}}
\newcommand*{\vecS}{\overline{S}}
\newcommand*{\vecT}{\overline{T}}
\newcommand{\X}{\mathbf{X}}
\newcommand{\Path}{P}
\newcommand{\vecP}{\overrightarrow{P}}
\newcommand*{\A}{\mathcal{A}}
\newcommand*{\B}{\mathcal{B}}
\newcommand*{\F}{\mathbf{F}}
\newcommand*{\Hf}{\mathcal{H}}
\newcommand*{\Hc}{\mathbf{H}}
\newcommand*{\D}{\mathcal{D}}
\newcommand*{\Lang}{\mathcal{L}}
\newcommand*{\SSP}[1]{#1\mbox{-}\class{SSP}}
\newcommand{\Sb}{\mathbf{S}}
\newcommand{\Sf}{\mathbf{SHUF}}
\newcommand{\class}[1]{\mathsf{#1}}
\DeclareMathOperator{\poly}{\mathsf{poly}}
\DeclareMathOperator{\polylog}{\mathsf{polylog}}
\newcommand*{\regI}{\mathbf{I}}
\newcommand*{\regR}{\mathbf{R}}
\title{On the Need for Large Quantum Depth} 
\author[1]{Nai-Hui Chia}
\author[2]{Kai-Min Chung}
\author[3]{Ching-Yi Lai}
\affil[1]{Department of Computer Science, University of Texas at Austin, Texas 78731, USA}
\affil[2]{Institute of Information Science, Academia Sinica, Taipei 11529, Taiwan}
\affil[3]{Institute of Communications Engineering, National Chiao Tung University, Hsinchu 30010, Taiwan}
\date{}
\begin{document}

\maketitle

\begin{abstract}
Near-term quantum computers are likely to have small depths due to short coherence time and noisy gates.  A natural approach to leverage these quantum computers is interleaving them with classical computers. Understanding the capabilities and limits of this hybrid approach is an essential topic in quantum computation. Most notably, the quantum Fourier transform can be implemented by a hybrid of logarithmic-depth quantum circuits and a classical polynomial-time algorithm. Therefore, it seems possible that quantum polylogarithmic depth is as powerful as quantum polynomial depth in the presence of classical computation. 

Indeed, Jozsa conjectured that ``\emph{Any quantum polynomial-time algorithm can be implemented with only $O(\log n)$ quantum depth interspersed with polynomial-time classical computations.}''
This can be formalized as asserting the equivalence of $\class{BQP}$ and ``$\class{BQNC^{BPP}}$''.
On the other hand, Aaronson conjectured that ``\emph{there exists an oracle separation between $\class{BQP}$ and $\class{BPP^{BQNC}}$.}''
$\class{BQNC^{BPP}}$ and $\class{BPP^{BQNC}}$ are two natural and seeming incomparable ways of hybrid classical-quantum computation.

In this work, we manage to prove Aaronson's conjecture and in the meantime disproves Jozsa's conjecture relative to an oracle. In fact, we prove a stronger statement that for any depth parameter $d$, there exists an oracle that separates quantum depth $d$ and $2d+1$ in the presence of classical computation. Thus, our results show that relative to oracles, doubling the quantum circuit depth indeed gives the hybrid model more power, and this cannot be traded by classical computation.       
\end{abstract}

\section{Introduction}

Circuit depth may become an essential consideration when designing algorithms on near-term quantum computers.
Quantum computers with more than 50 qubits have been realized recently by Google~\cite{GOOGLE} and IBM~\cite{IBM}; both the quantity and quality of the qubits are continually improving. Furthermore, Google and NASA recently showed that their quantum computer outperforms the best supercomputers on the task of random circuit sampling~\cite{Arute2019}. However, due to noisy gates and limited coherence time, these quantum computers are only able to operate for a short period. Hence the effective circuit depths of these quantum computers are limited,\footnote{Indeed, the experiments of Google and NASA consider circuits with depth at most 20.} and this seems to be an essential bottleneck for quantum technologies. 


Studies from a theoretical perspective indicate that small-depth quantum computers can demonstrate so-called ``Quantum Supremacy''~\cite{AC17, TD04}, which means solving some computational problems that are intractable for classical computers.  Terhal and DiVincenzo first showed that constant-depth quantum circuits can sample certain distributions that are intractable for classical computers 
under some plausible complexity conjecture~\cite{TD04}, and more recent works showed that this can be based on the conjecture that the polynomial hierarchy (PH) is infinite \cite{AC17}. Aaronson and Chen showed that under a natural average-case hardness assumption, there exists a statistical test such that no polynomial-time classical algorithm can pass it, but a small-depth quantum circuit can~\cite{AC17}. 

On the other hand, it is worth noting that the capability of constant-depth quantum computers is limited. Specifically, consider the standard setting where the composing gate set only includes one- and two-qubit gates. It is obvious that a constant-depth quantum circuit cannot solve any classically intractable decision problem or even some classically easy problems, e.g., computing a parity function. This is because that each output qubit depends on only $O(1)$ input qubits. Although involving unbounded fan-out gates allows a quantum circuit to conduct many operations in small depth, such as parity, mod[q], 
threshold[t], 
arithmetic operations, phase estimations, and the quantum Fourier transform~\cite{HS05}, it seems that unbounded fan-out gates are hard to implement in practice and thus is rarely considered for near-term quantum device.\footnote{Our oracle separation results hold even if the unbounded fan-out gates are allowed.}

A natural idea to exploit the power of small-depth quantum computers is interleaving them with classical computers. Many of the known quantum algorithms require only small depths in the quantum part. Notably, Cleve and Watrous showed that the quantum Fourier transform can be parallelized to have only logarithmic quantum depth~\cite{CW00}, which implies that quantum algorithms for abelian hidden subgroup problems, such as Shor's factoring algorithm, can be implemented with logarithmic quantum depth.  
Therefore, ``quantum polylogarithmic depth is as powerful as quantum polynomial depth in the presence of classical computation'' seems to be a live possibility! 

Aware of this possibility, Jozsa~\cite{Jozsa05} conjectured that 
\begin{itemize}
    \item[] ``\textsl{Any polynomial time quantum algorithm can be implemented with only $O(\log n)$ quantum depth interspersed with polynomial-time classical computations.}''
\end{itemize}
Nevertheless, there  are other opinions in the community. 
It has been conjectured by Aaronson~\cite{Aaronson05,Aaronson10,Aaronson11,Aaronson19} more than a decade ago that
\begin{itemize}
    \item[] ``\emph{There exists an oracle $\mathcal{O}$ relative to which $\class{BQP}\neq \class{BPP^{QNC}}$.}'' 
\end{itemize}
Here, $\class{BPP^{BQNC}}$ is corresponding to one of the hybrid approaches for interleaving classical computers with small-depth quantum circuits. It is worth noting that the models Jozsa and Aaronson considered were related but different, and we will clarify this later in Section~\ref{sec:discussion}. 



In this work, we disprove Jozsa's conjecture and prove Aaronson's conjecture. In fact, we prove a stronger conclusion that relative to oracles, doubling the quantum circuit depth already gives the computational model more power, and this cannot be traded by classical computations. 

\subsection{Main Results}
  We start by defining two hybrid models, which interleave $d$-depth quantum circuits and classical computers. The first scheme, called $d$-depth quantum-classical scheme ($d$-QC scheme), is a generalized model for small-depth measurement-based quantum computers (MBQC). The second scheme, called $d$-depth classical-quantum scheme ($d$-CQ scheme),  characterizes the hybrid quantum and classical computations. Briefly, a $d$-QC scheme is based on a $d$-depth quantum circuit and can access some classical computational resources after each level, while a $d$-CQ scheme is based on a classical computer  with 
access to some $d$-depth quantum circuits. We (informally) define $\class{BQNC_d^{BPP}}$ to be the set of languages decided by $d$-QC schemes, and $\class{BPP^{BQNC_d}}$ to be the set of languages decided by $d$-CQ schemes.  
Also, when we write $\class{BQNC^{BPP}}$ (resp., $\class{BPP^{BQNC}}$), we refer to the union of $\class{BQNC_{\log^k n}^{BPP}}$ (resp., $\class{BPP^{BQNC_{\log^k n}}}$) for constant $k \in \mathbb{N} $.   (These definitions will be formally given in Section~\ref{sec:preliminaries}.)  

Note that Jozsa's conjecture refers to the equivalence problem of BQP and the hybrid model $d$-QC schemes, and Aaronson's conjecture refers to  the separation between BQP and the other hybrid model $d$-CQ schemes. 
As we shall see, we will introduce an oracle problem below, which can be used to show separation results for both conjectures.

Our oracle problem is a variant of \emph{Simon's problem}. Given a function $f:\Z_2^n\rightarrow \Z_2^n$ with the promise that there exists $s\in \Z_2^n$ such that $f(x)=f(x\oplus s)$ for $x\in \Z_2^n$, Simon's problem is to find~$s$~\cite{Simon94}. Such two-to-one function $f$ is called a \emph{Simon function}.  Simon's problem is easy for quantum-polynomial-time (QPT) algorithms but hard for all probabilistic-polynomial-time (PPT) algorithms; however, Simon's problem can be solved by a constant-depth quantum circuit with classical postprocessing. 

To devise a harder problem, we first represent a Simon function as a composition of random one-to-one functions $f_0,\dots,f_{d-1}$ and a  {two-to-one} function $f_d$ such that $f= f_d\circ \cdots\circ f_0$ as shown in Fig.~\ref{fig:oracle_naive}, where $f_j:S_j\rightarrow S_{j+1}$ has domain $S_j$ and range $S_{j+1}$ for $j=0,\dots,d$.   Suppose that $f$ is hidden and only access to these functions $f_i$ are provided. 
If $f_0,\dots,f_d$ must be queried in sequence to evaluate $f$, then a $d$-depth quantum circuit cannot solve this variant of Simon's problem, since it can only make at most $d$ sequential queries (but there are $d+1$ random functions). 

\begin{figure}[h]
    \centering
    \includegraphics[width=0.8\textwidth]{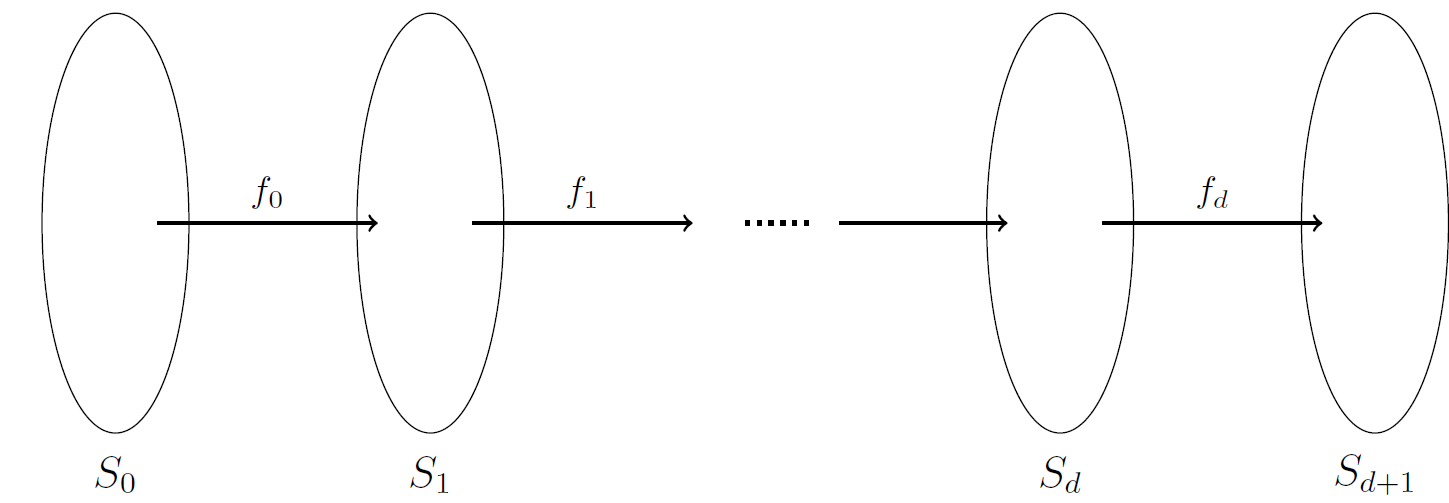} 
    \caption{Composing $d$ random one-to-one function $f_0,\dots,f_{d-1}$ and a two-to-one function $f_d$ such that $f=f_d\circ\cdots\circ f_0$.}
    \label{fig:oracle_naive}
\end{figure}

Nevertheless, a cleverer approach that does not query the functions in sequence might exist. 
To rule out the possibility of such an approach, we further {\bf make it infeasible for  the domain of $f_i$ to be accessed before the $(i+1)$-th parallel queries.}
Specifically, 
$f_j$ is now defined on a larger domain $S_j^{(0)}$ such that $S_j$ is a subset of $S_j^{(0)}$ and $|S_j|/|S_j^{(0)}|$ is chosen to be negligible for $j=0,1,\dots ,d+1$.
 This is illustrated in Fig.~\ref{fig:good_oracle}. 
Therefore, to evaluate $f_i$, one must find the target domain $S_i$ from the larger domain $S_{i}^{(0)}$, and the success probability is negligible by construction.   
\begin{figure}
    \centering
    \includegraphics[scale=0.5]{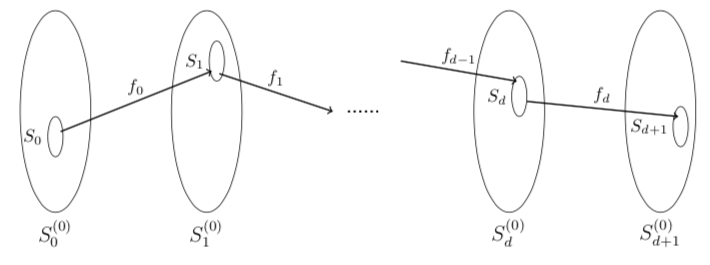}
    \caption{The shuffling oracle: $f_0,\dots,f_{d-1}$ are random one-to-one functions on a greater domain. For $x\in S_0$, $f(x) = f_d\circ\cdots\circ f_0(x)$.}
    \label{fig:good_oracle}
\end{figure}
It is worth noting that an algorithm can learn $S_i$ after the $i$-th query with probability~one, and thus is able to evaluate $f$ after $d+1$ sequential queries. However, 
using at most $d$ sequential queries is still not enough for the algorithm to evaluate $f$.

To sum up, the oracle we consider is   as follows. Let $f$ be an arbitrary Simon  function. Choose a sequence of random one-to-one functions $f_0,\dots,f_{d-1}$ defined on much larger domains $S_0^{(0)},\dots,S_d^{(0)}$, respectively, and  let $f_d$  be a two-to-one function such that $f_d\circ\cdots\circ f_0(x) = f(x)$ for $x\in S_0$.
We define the \emph{shuffling oracle} of $f$ with respect to $f_1,\dots,f_{d-1}$ to be an oracle
that returns a value of $f_0,\dots,f_{d-1}$ or $f_d$  when queried.
Thus an algorithm with access to this oracle can access $f_0,\dots,f_d$. 
We will simply call it the \emph{shuffling oracle} of $f$ in the following without mentioning the underlying $f_0,\dots,f_{d-1}$.

 The \textbf{\em $d$-Shuffling Simon's Problem} ($\SSP{d}$) is a decision problem defined as follows: 
\begin{definition}[$\SSP{d}$ (Informal)]
Let $f$ be a random one-to-one function or a random Simon function. Given oracle access to the shuffling oracle of $f$ (as in Fig.~\ref{fig:good_oracle}), the problem is to decide whether $f$ is a Simon function or not.
\end{definition}

We summarize our results in the following theorem.  
\begin{theorem}[Informal]\label{thm:informal_1}
For any $d$, $\SSP{d}\in (\class{BQNC_{2d+1}^{BPP}})^{\mathcal{O}}\cap (\class{BPP^{BQNC_{2d+1}}})^{\mathcal{O}}$, but $\SSP{d}\notin (\class{BQNC_{d}^{BPP}})^{\mathcal{O}}\cup (\class{BPP^{BQNC_{d}}}^{\mathcal{O}})$, where $\mathcal{O}$ is the shuffling oracle of $f$.   \footnote{Following the definition of $\class{BQNC_d^{BPP}}$ and $\class{BPP^{BQNC_d}}$, the gate set is the collection of one- and two-qubit gates. However, our oracle separation actually works for any gate set even with unbounded fan-out gates. The main point is that the depth of the queries is less than the depth of the shuffling oracle.} 
\end{theorem}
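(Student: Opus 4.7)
To show $\SSP{d} \in (\class{BQNC_{2d+1}^{BPP}})^{\mathcal{O}} \cap (\class{BPP^{BQNC_{2d+1}}})^{\mathcal{O}}$, the plan is to run Simon's algorithm on the composed function $f = f_d \circ \cdots \circ f_0$. The crucial observation is that $f$ can be evaluated coherently using $2d+1$ sequential oracle query layers: first compute $f_0, f_1, \ldots, f_d$ forward into fresh ancillas ($d+1$ query layers), then uncompute the intermediate values $f_{d-1}, \ldots, f_0$ by XOR-ing them off with $d$ more query layers, leaving the clean state $\ket{x}\ket{f(x)}$. Sandwiching this block between Hadamard layers and measuring yields, by the standard Simon analysis, a uniformly random $y \in \Z_2^n$ with $y \cdot s = 0$. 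Taking $\poly(n)$ parallel Simon copies inside one wide depth-$(2d+1)$ circuit (followed by classical Gaussian elimination) places the problem in $(\class{BQNC_{2d+1}^{BPP}})^{\mathcal{O}}$; a classical driver repeatedly invoking one depth-$(2d+1)$ Simon sub-circuit places it in $(\class{BPP^{BQNC_{2d+1}}})^{\mathcal{O}}$.

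\textbf{Lower bound.} The strategy is a reachability-plus-indistinguishability argument. Define inductively the ``legitimate'' point sets $R_0 = S_0$ and $R_{i+1} = f_i(R_i)$. The main technical claim I would prove is: for any hybrid algorithm of quantum depth at most $d$, and for every query layer $i$, the algorithm's joint quantum-classical state has only negligible total mass on $S_j \setminus R_j^{(i)}$, where $R_j^{(i)}$ denotes the subset of $S_j$ actually populated by composing the oracles along previously-reached points after $i$ layers. Because the ``decoy'' mass in $S_j^{(0)} \setminus S_j$ maps to garbage, and guessing a true point in $S_j$ from outside succeeds with probability $|S_j|/|S_j^{(0)}| = \mathsf{negl}(n)$, each layer extends $R_{\cdot}^{(\cdot)}$ by at most one index. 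Consequently, after $d$ layers the algorithm has negligible amplitude on inputs to $f_d$ lying in $S_d$, and a BBBV-style hybrid argument reprogramming $f_d$ between two-to-one and one-to-one on $S_d$ changes the output distribution by at most $\mathsf{negl}(n)$ in total variation---too little to decide the problem with constant advantage.

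\textbf{Main obstacle.} The hardest part is the $(\class{BPP^{BQNC_d}})^{\mathcal{O}}$ case, where a classical driver may issue $\poly(n)$ adaptive depth-$d$ quantum calls and accumulate classical knowledge of intermediate $S_j$ points between calls; the worry is that classical accumulation might ``stitch'' short quantum calls into effective depth greater than $d$. The key insight I would exploit is that Simon's algorithm requires coherent interference across a \emph{large} (nearly all of $S_1$, say) subset, but polynomially many classically-learned points of $S_j$ are too sparse to seed such interference, while a superposition over the full decoy domain $S_j^{(0)}$ has only $\mathsf{negl}(n)$ overlap with the legitimate part. Making this rigorous requires extending the reachability lemma to track, across every quantum call and every classical branching step, the ``useful'' Simon-style amplitude on $R_d$. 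I expect to need a compressed-oracle-style bookkeeping scheme to handle both parallel quantum queries within a single call and adaptive classical choices between calls without double-counting amplitudes; this is the most delicate step of the proof.
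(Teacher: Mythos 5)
The upper bound matches the paper's construction: compute forward through $f_0,\ldots,f_d$ in $d+1$ query layers, then uncompute $f_{d-1},\ldots,f_0$ in $d$ more, for $2d+1$ query layers total, and Fourier-sample. (The paper measures the output register before uncomputing, but the two variants give the same Simon sampling statistics.) Your reachability sets $R_j^{(i)}$ and BBBV-style reprogramming mirror the paper's Russian-nesting-doll hidden sets $S_j^{(\ell)}$ and its One-way-to-Hiding lemma adapted to the shuffling oracle, and you correctly flag the $d$-CQ case as the crux.

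There is, however, a genuine gap in how you propose to handle that crux. Compressed-oracle bookkeeping is not what the paper does, and it does not obviously address the real difficulty. Compressed oracles track what a coherent quantum algorithm has learned about a random function; the obstacle here is different: the classical driver receives intermediate measurement outcomes that can carry \emph{global} information about the shuffling oracle, and one must show that, conditioned on that classical advice, the oracle remains uniform enough for the hidden sets to keep hiding. The paper resolves this with a \emph{presampling argument} (generalizing Coretti--Dodis--Guo and Unruh to the shuffling oracle): conditioned on a $p$-bit advice string, the shuffling oracle is $\gamma$-close to a convex combination of $(p',1{+}\delta)$-almost-uniform shufflings in which only $\poly(n)$ paths are fixed. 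That decomposition lets the O2H / Russian-doll argument be re-run inside each component of the convex combination, once per each of the $m=\poly(n)$ adaptive depth-$d$ calls, so the classical advice never ``unhides'' $S_d$. Without this (or an equivalent uniformity-under-advice statement), your reachability invariant cannot be iterated soundly across classical rounds, which is exactly the ``stitching'' worry you raise but do not close. A secondary imprecision: ``each layer extends $R_{\cdot}^{(\cdot)}$ by at most one index'' is a classical reachability claim, whereas a quantum query is a superposition over the entire decoy domain; the correct quantitative statement is a bound on the \emph{finding probability} of the next hidden set, which your BBBV remark gestures at but needs to be made precise and combined with the nested-set structure (this is exactly what the paper's O2H lemma for shuffling oracles together with Lemma~\ref{lem:find} supplies).
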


Theorem~\ref{thm:informal_1} states that relative to the shuffling oracles, $(2d+1)$-QC schemes (resp., $(2d+1)$-CQ schemes)  are strictly more powerful than $d$-QC schemes (resp., $d$-CQ schemes). Namely, doubling the quantum circuit depth indeed gives the hybrid model more power, and this cannot be traded for classical computation.        

Theorem~\ref{thm:informal_1} immediately implies the following corollary, which states that Jozsa's conjecture is false relative to an oracle.

\begin{corollary}
[Informal]\label{thm:informal_2}
For $d = \log^{\omega(1)} n$, $\SSP{d}\in \class{BQP}^{\mathcal{O}}$, but $\SSP{d}\notin (\class{BPP^{BQNC}})^{\mathcal{O}}\cup (\class{BQNC^{BPP}})^{\mathcal{O}}$, where $\mathcal{O}$ is the shuffling oracle of $f$.
\end{corollary}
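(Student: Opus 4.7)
The plan is to derive Corollary~\ref{thm:informal_2} as a direct instantiation of Theorem~\ref{thm:informal_1}; the only genuine work is to choose the depth parameter $d$ carefully and to invoke monotonicity of the hybrid classes in the depth. I would pick any $d = d(n)$ satisfying both $d = \log^{\omega(1)} n$ and $d \le \poly(n)$; for concreteness one may take $d(n) = \lceil \log^{\log\log n} n \rceil = n^{o(1)}$, which is super-polylogarithmic yet sub-polynomial.

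For the upper bound, Theorem~\ref{thm:informal_1} gives $\SSP{d} \in (\class{BQNC_{2d+1}^{BPP}})^{\mathcal{O}}$. Since $2d+1 = \poly(n)$, the underlying algorithm is a polynomial-depth quantum circuit interleaved with polynomial-time classical computation, which is just a polynomial-time quantum algorithm; hence $\SSP{d} \in \class{BQP}^{\mathcal{O}}$.

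For the lower bound I would unfold $\class{BQNC^{BPP}} = \bigcup_{k \in \mathbb{N}} \class{BQNC_{\log^k n}^{BPP}}$ and argue by contradiction. If $\SSP{d} \in (\class{BQNC^{BPP}})^{\mathcal{O}}$, then some constant $k$ places $\SSP{d} \in (\class{BQNC_{\log^k n}^{BPP}})^{\mathcal{O}}$. Because $d = \log^{\omega(1)} n$, we have $\log^k n \le d$ for all sufficiently large $n$, and any depth-$\log^k n$ quantum circuit can be padded with identity layers to a depth-$d$ circuit. This depth-monotonicity, which follows routinely from the formal definitions in Section~\ref{sec:preliminaries}, yields $\SSP{d} \in (\class{BQNC_d^{BPP}})^{\mathcal{O}}$, contradicting the lower bound in Theorem~\ref{thm:informal_1}. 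The identical argument for the CQ variant rules out $\SSP{d} \in (\class{BPP^{BQNC}})^{\mathcal{O}}$.

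The substance of the corollary lives entirely in Theorem~\ref{thm:informal_1}; the only step to verify is depth-monotonicity of $\class{BQNC_d^{BPP}}$ and $\class{BPP^{BQNC_d}}$ in $d$, which is a straightforward definition-checking exercise rather than a real obstacle. Consequently I would not expect to introduce any new probabilistic or query-complexity machinery in this proof.
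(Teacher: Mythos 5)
Your proposal is correct and takes essentially the same route the paper does: the paper regards Corollary~\ref{thm:informal_2} as an immediate consequence of Theorem~\ref{thm:informal_1}, instantiated (in the formal version, Definitions~\ref{def:language} and Corollaries~\ref{cor:qcd_bpp_3} and \ref{cor:bpp_qcd_3}) with a concrete depth function $d(n)$ that is superpolylogarithmic yet polynomial (the paper uses $d(n)=n$, you use $n^{o(1)}$). Your observation that $d \le \poly(n)$ must be imposed for the $\class{BQP}$ inclusion to hold, and your explicit depth-monotonicity argument for unwinding $\class{BQNC^{BPP}} = \bigcup_k \class{BQNC_{\log^k n}^{BPP}}$, are both the right (and only) points of substance, matching what the paper leaves implicit.
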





%
\subsection{Discussion and Open Problems}\label{sec:discussion}
\paragraph{Jozsa's and Aaronson's conjectures} The hybrid model Jozsa considered in~\cite{Jozsa05} is the small-depth MBQC, which is characterized by the $d$-QC scheme in this work. On the other hand, Aaronson considered the complexity class $\class{BPP^{BQNC}}$, which is corresponding to the $d$-CQ scheme defined in this paper.

Since we cannot even prove that $\class{BPP}\neq \class{BQP}$, an unconditional separation seems  unachievable  by the state-of-the-art techniques. Instead, we can expect an oracle separation or a conditional lower bound based on certain assumptions in cryptography or complexity. In this work, we obtain an oracle separation between $\class{BQP}$ and $\class{BPP^{BQNC}}\cup \class{BQNC^{BPP}}$. Therefore, one central question is: can we achieve a conditional separation based on plausible assumptions, or equivalently, can we instantiate our oracle? A natural approach is to instantiate it based on cryptographic assumptions, such as virtual-black-box (VBB) obfuscations. 
However, while it is possible to make our oracle efficient, since our oracle uses random shuffling to hide the Simon's function, we may not assume secure VBB obfuscation of our oracle given the known negative evidence by Bitansky et al.~\cite{Bitansky14}. It is not clear whether our oracle can be instantiated based on any reasonable cryptographic assumptions. 

Our work also address another critical question: can we trade classical computations for quantum circuit depth? If this is possible, 
then it may be easier for near-term quantum computers to achieve quantum supremacy on practical problems. In this work, we give some negative evidence by showing a fine-grained depth separation result that $(\class{BPP^{BQNC_d}})^{\Ora}\neq(\class{BPP^{BQNC_{2d+1}}})^{\Ora}$. This result implies that doubling the quantum depth gives the model more power, and this cannot be traded by classical computations in the relativized world. We would like to see if we can obtain a sharper separation, such as $d$ versus $d+1$, under plausible assumptions in cryptography or complexity.   


\subsection{Acknowledgement}
 We are grateful to Scott Aaronson for bringing us to the problem of separating $\class{BQP}$ and $\class{BPP^{BQNC}}$. We also thank him for helpful discussions and valuable comments on
our manuscript. We would like to thank Richard Jozsa for clarifying his conjecture and Aaronson's conjecture. We acknowledge conversations with Matthew Coudron for explaining their related work to us. 

NHC's research is supported by Scott Aaronson's Vannevar Bush Faculty Fellowship from the US Department of Defense.
KMC's research is partially supported by the Academia Sinica Career Development Award under Grant  23-17 MOST QC project, and  MOST QC project under Grant MOST 107-2627-E-002-002.
CYL was financially supported from the Young Scholar Fellowship Program by Ministry of Science and Technology (MOST) in Taiwan, under
		Grant MOST108-2636-E-009-004.

\subsection{Independent Work}

Independent and concurrent to our work, Coudron and Menda also investigated Jozsa's conjecture and proved that the conjecture is false relative to some oracle~\cite{CM19}. However, they did not establish the sharp separation between quantum depth $d$ vs. $2d+1$ as we did. Interestingly, the oracle problem used in their work as well as the analysis are very different from ours, which may lead to incomparable extensions or applications in the future. In the following, we provide a high-level discussion and compare the two works.

At a high-level, Coudron and Menda use the Welded Tree Problem of Childs et al.~\cite{CC03} as the oracle problem to disprove Jozsa's conjecture, which leverage the separation between quantum walk and classical random walk. To show the separation, they do not need to modify the oracle problem, and the crux is a simulation argument showing that low depth quantum computation can be simulated by (inefficient) classical computation without blowing up the number of oracle queries too much. Hence, the hardness of the hybrid models follows by the classical hardness. The simulation argument relies on the structure of the welded tree oracle and does not seem to apply to our oracle.  

In contrast, our starting point is Simon's problem, which can be viewed as a separation between quantum depth $0$ (i.e., classical computation) and $1$ in the hybrid models, and our main idea is to lift the separation by pointer chasing and domain hiding. This allows us to prove the sharp separation between quantum depth $d$ vs. $2d+1$. To prove the separation, we generalize techniques in cryptography (from both quantum and classical crypto literature) to show that small quantum depth is not enough to find the hidden domain and hence the hybrid models with quantum depth $d$ are not capable to solve the $\SSP{d}$ problem.

We mention that the original version of Coudron and Menda~\cite{CM19} only explicitly considered the hybrid model $d$-CQ schemes, but their analysis extends directly to establish hardness of the Welded Tree Problem for $d$-QC schemes. Furthermore, we believe that the techniques in both works extend to establish hardness of respective oracle problems for the natural hierarchy of hybrid models like $\class{{BPP^{BQNC}}^{BPP^{\cdots}}}$~\cite{CM19_2}.

\section{Proof Overview}


The main idea behind our shuffling oracle is combining \emph{pointer chasing} and \emph{domain hiding} to lift the hardness of Simon's problem. We start with the intuition behind our construction.

\paragraph{Pointer Chasing \& Domain Hiding}The pointer chasing is to allow the function to be evaluated by using $d+1$ parallel queries, but make it hard by using at most $d$ parallel queries. To implement this idea, we decompose a Simon function $f$ as $f_0,\dots,f_d$, where $f_0,\dots,f_{d-1}$ are random ``shufflings'' of the original function and $f_d$ is the function such that $f = f_d\circ\cdots\circ f_0$. Ideally, to evaluate $f$, a standard approach is to evaluate the functions in sequence. However, it is unclear if there are clever methods. To tackle this, our second idea is hiding the original domains $S_0,\dots,S_d$ in larger domains $S_0^{(0)},\dots,S_d^{(0)}$, respectively. Intuitively, to evaluate $f$, one needs to learn $f_d$ in $S_d$, or say, needs to find $S_d$ in $S_d^{(0)}$. Note that it suffices to use $i$ rounds of parallel queries to learn $S_i$. The goal of the shuffling oracle is to obtain the (tight) hardness that it prevents the algorithm from learning $S_d$ before $d$th parallel queries so that one cannot evaluate $f_d$ in the $d$th parallel queries.


\paragraph{Obtaining Indistinguishability from Hiding} The intuition behind our analysis is similar to the One-way-to-Hiding (O2H) Lemma~\cite{AHU18,unruh15} in quantum cryptography, and our analysis can be viewed as generalizing the lemma to the setting of the shuffling oracle against both hybrid models.
Briefly, the O2H Lemma shows that given any two random functions which have same function values on most elements except for some \emph{hidden subset}, we can bound the probability of distinguishing these two functions by the probability that the queries are in the hidden set, which we call the \emph{finding probability}. This probability will be negligible if the subset is a small enough random subset. The O2H Lemma implies that given a random function, we can program the mappings in the hidden subset arbitrarily. In our context, a natural approach is to argue that $S_d$ is a hidden subset in $S_d^{(0)}$ and hence we can indistinguishably program the mappings in $S_d$ in a way such that the programmed oracle has no information about $f$.  However, different from the typical use of O2H Lemma in cryptography, in the shuffling oracle, the algorithm can find $S_d$ by using $d$ parallel queries starting from $S_0$ (and query $S_d$ at the $d+1$st query), so we need to be careful in arguing that $S_d$ is a hidden subset.


\begin{figure}
    \centering
    \includegraphics[scale=0.7]{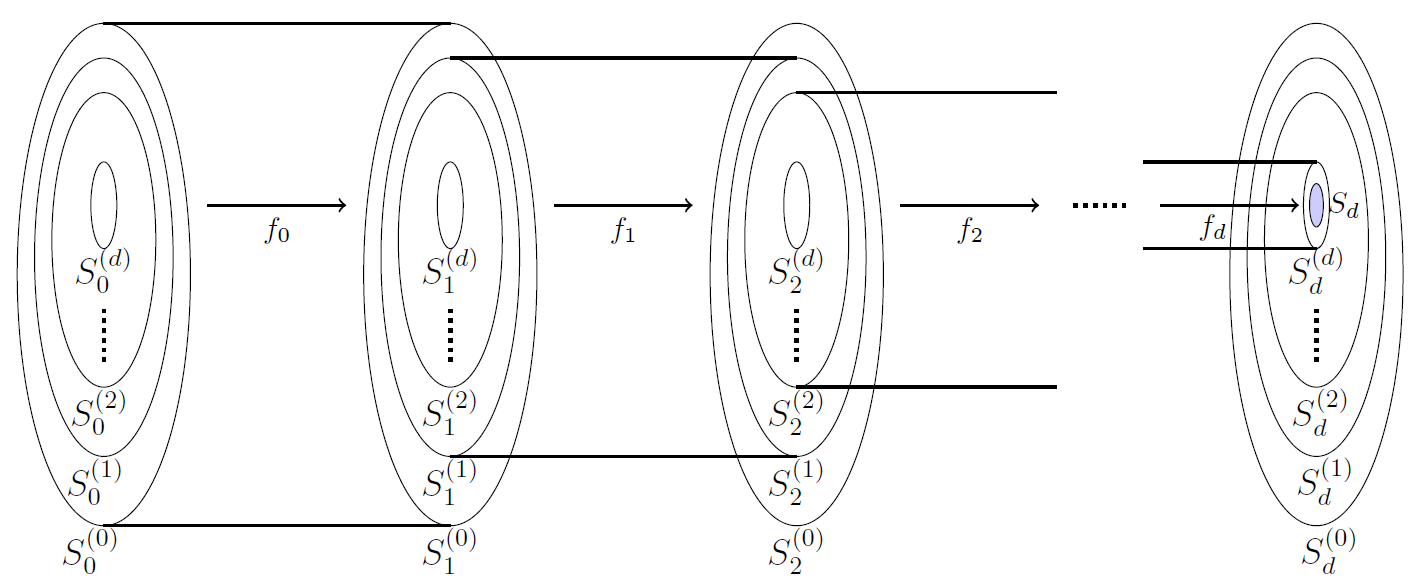}
    \caption{The domains partitions under the Russian-nesting-doll trick: $f_0$ on $S_0^{(0)}$ and $f_1,\dots,f_d$ on $S_1^{(0)}\setminus S_1^{(1)}, \dots,S_d^{(0)}\setminus S_d^{(1)}$ can be viewed as the ``first doll'', which will be opened by the first round parallel queries. Along this line, $f_{i-1}$ on $S_{i-1}^{(i-1)}$ and $f_{i},\dots,f_d$ on $S_i^{(i-1)}\setminus S_{i}^{(i)},\dots,S_d^{(i-1)}\setminus S_d^{(i)}$ can be viewed as the $i$th doll, which will be opened by the $i$th round parallel queries.}
    \label{fig:doll-2}
\end{figure}

\paragraph{Russian-Nesting-Doll Trick} 
We bypass the difficulty above by partitioning the large set $S_d^{(0)}$ into a series of smaller subsets $S_{d}^{(1)},\dots,S_{d}^{(d)}$, such that $S_d^{(\ell)}$ is a exponentially small subset of $S_d^{(\ell-1)}$ for $\ell\in [d]$, and for the ease of illustrating ideas, $S_d\subset S_d^{(d)}$\footnote{For the proof of $d$-QC and $d$-CQ scheme, there will be a small subset in $S_d$ which is not in $S_d^{(d)}$; however, we will show in the proof that it suffices to prove the oracle separation.}. We do such partition for all domains of $f_0,\dots,f_d$. Then, we use the O2H Lemma to show that after the first parallel query, the algorithm can learn $f_0$ on $S_0^{(0)}$ and $f_1,\dots,f_d$ on domains out of $S_1^{(1)},\dots,S_d^{(1)}$; however, it learns nothing about $f_1,\dots,f_d$ on $S_1^{(1)},\dots,S_d^{(1)}$. For the second parallel queries, we start from $f_1,\dots,f_d$ on $S_1^{(1)},\dots,S_d^{(1)}$, and use the O2H Lemma again to argue that the algorithm learns little about $f_2,\dots,f_d$ on $S_2^{(2)},\dots,S_d^{(2)}$. Finally, for the $d$th parallel queries, the algorithm has no information about $f_d$ on $S_d^{(d)}$ and thus has no information about $f$.  We can think of this approach as a Russian-nesting-doll trick: the mappings $f_0$ on $S_0^{(0)}$ and $f_1,\dots,f_d$ on domains out of $S_1^{(1)},\dots,S_d^{(1)}$ are the largest doll outside, there are $d$ dolls, and $f_d$ on $S_d^{(d)}$ is the smallest one we want to see. For an algorithm to see $S_d$, it needs to open all the dolls, and each requires one round parallel queries; therefore, the algorithm cannot see the smallest doll before $d$th parallel queries. We illustrate the idea of the Russian-nesting-doll trick as in Fig.~\ref{fig:doll-2}.


We give a brief overview of how to formalize the ideas above in the follows. 
\subsection{\texorpdfstring{$\class{QNC_d}$}{Lg} circuit}

This is the warm-up case. Let $U:=U_1,\dots,U_{d+1}$ be a $d$-depth quantum circuit which can make superposed queries to the shuffling oracle $\vecf:= (f_0,\dots,f_d)$ after $U_i$ for $i\in [d]$. We represent the computation as 
\begin{align*}
    U_{d+1} \vecf U_d \cdots \vecf U_1. 
\end{align*}
Following the Russian-nesting-doll trick, we partition the domains $S_0^{(0)},\dots,S_d^{(0)}$ into $d$ subsets. Then, for the $i$th parallel queries, we construct a \emph{shadow} oracle $\vecg_i$ of $\vecf$ such that $\vecg_i$ agrees with $\vecf$ on elements out of $S_j^{(j)},\dots,S_d^{(j)}$, but maps the rest to $\bot$ (a specific notation for no information). Here, we need to choose the subsets in a specific way, which we will describe in the formal proof. By using hybrid arguments and O2H Lemma, we inductively prove that $U_{d+1} \vecf U_d \cdots \vecf U_1$ is indistinguishable from $U_{d+1} \vecg_d U_d \cdots \vecg_1 U_1$.

\subsection{\texorpdfstring{$d$}{Lg}-QC scheme}

For the $d$-QC scheme, since the classical algorithm can make polynomial classical parallel queries, the set $S_d$ can be found by the classical algorithm. However, the classical algorithm can only make polynomially many queries, and the rest of the points which are not queried are still uniform.  Therefore, we can remove whatever points learned by classical queries from the hidden sets $S^{(j)}_k$, and use O2H Lemma and the hybrid arguments as before. At the end, the algorithm does not learn $S_d$ except for polynomially many points, which is unlikely to learn the shift  (as in the proof of classical hardness for Simons).

\subsection{\texorpdfstring{$d$}{Lg}-CQ scheme} 

Arguing that $S_d$ is a hidden subset in $S_d^{(0)}$ against the $d$-CQ scheme is the most challenging part of our analysis.
The main obstacle is that the measurement outcome of the quantum circuit can contain global information about the shuffling oracle, and this makes it hard to reason that the shuffling oracle is still uniformly random conditioned on the measurement outcome. However, since the measurement outcomes can only be a ``short classical advice'', intuitively, it cannot contain too much information about the shuffling oracle. 

To formalize this intuition, we (partially) generalize a presampling argument in (classical) cryptography~\cite{CDG18,unruh07}. 
Informally, the argument states that a random function conditioned on a short classical advice string is indistinguishable from a \textsl{convex combination} of random functions that are fixed on a few elements for any classical algorithm making polynomial queries. Ideally, we would like to generalize it to the shuffling oracle (which is more complicated than random oracle) and show indistinguishability to a convex combination of shuffling oracles with a few points fixed for quantum algorithms with a bounded number of queries. Unfortunately, we do not know how to achieve this, but we prove something weaker that is sufficient: we show that given a short classical advice string, the shuffling oracle is indistinguishable from a convex combination of ``\emph{almost-uniform}'' shuffling oracles with few points fixed for any quantum algorithm making \emph{one quantum parallel query}.

This weaker statement suffices to prove our oracle separation.
While there are some further subtleties, the main idea is that we apply it inductively after each quantum parallel query, and the short advice string only fixes polynomially many random points in the shuffling oracle. Given this, we are able to show that $S_d$ remains a hidden set against the $d$-CQ scheme and establish hardness of $\SSP{d}$ against $\class{BPP^{BQNC_d}}$.

\section{Preliminaries}\label{sec:preliminaries}

In this section, we first introduce the distance measures of quantum states. Then, we give formal definitions of the $d$-CQ and $d$-QC schemes.

\subsection{State distance}

\begin{definition}
Let $\mathcal{H}$ be a Hilbert space. For any two pure states $\ket{\psi},\ket{\phi}\in \mathcal{H}$, we define
\begin{itemize}
    \item (Fidelity) $F(\ket{\psi}, \ket{\phi}) := |\ipro{\psi}{\phi}|$;
    \item (Two-norm distance) $\|\ket{\psi} - \ket{\phi}\|$.
\end{itemize}
\end{definition}

Then, we define distance measures between mixed states.
\begin{definition}
Let $\mathcal{H}$ be a Hilbert space. For any two mixed states $\rho,\rho'\in \mathcal{H}$, 
\begin{itemize}
    \item (Fidelity) $F(\rho,\rho') := tr(\sqrt{\sqrt{\rho} \rho'\sqrt{\rho}})$. 
    \item (Trace distance) $TD(\rho,\rho'):=\frac{1}{2} \Tr|\rho - \rho'|$. 
    \item (Bures distance) $B(\rho,\rho') := \sqrt{2-2F(\rho,\rho')}$.
\end{itemize}
\end{definition}

The probability for a quantum procedure to distinguish two states can be bounded by the Bures distance between the two states.
\begin{claim}\label{lem:prob_to_norm}
For any two mixed states $\rho$ and $\rho'$, for any quantum algorithm $\A$ and for any classical string $s$,
\[
    |\Pr[\A(\rho)=s] - \Pr[\A(\rho')=s]| \leq B(\rho,\rho').
\]
\end{claim}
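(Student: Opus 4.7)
The plan is to bound the distinguishing advantage by the trace distance first, and then compare the trace distance to the Bures distance by a one-line manipulation. This reduces the claim to two standard facts: the variational characterization of trace distance, and the Fuchs--van de Graaf inequality $TD(\rho,\rho') \leq \sqrt{1-F(\rho,\rho')^2}$.

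First I would recall that $TD(\rho,\rho') = \max_{0 \preceq M \preceq I} \Tr(M(\rho-\rho'))$. The event ``$\A$ outputs $s$'' can be modeled as the POVM element $M_s$ obtained by propagating the output projector $\opro{s}{s}$ back through $\A$ (i.e., the Heisenberg-picture effect), so
\[
\bigl|\Pr[\A(\rho)=s] - \Pr[\A(\rho')=s]\bigr| \;=\; \bigl|\Tr(M_s(\rho-\rho'))\bigr| \;\leq\; TD(\rho,\rho'),
\]
which is the standard distinguishing bound (equivalently, follows from monotonicity of trace distance under the channel implemented by $\A$, together with the fact that the total variation distance of two output distributions is at most the trace distance of the inputs). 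Next I would apply Fuchs--van de Graaf and factor:
\[
TD(\rho,\rho') \;\leq\; \sqrt{1-F(\rho,\rho')^2} \;=\; \sqrt{(1-F(\rho,\rho'))(1+F(\rho,\rho'))} \;\leq\; \sqrt{2(1-F(\rho,\rho'))} \;=\; B(\rho,\rho'),
\]
using $F(\rho,\rho') \leq 1$. Chaining the two inequalities gives the claim.

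If I preferred a self-contained proof avoiding a black-box citation of Fuchs--van de Graaf, I would invoke Uhlmann's theorem to choose purifications $\ket{\psi}, \ket{\phi}$ of $\rho, \rho'$ on an enlarged space with $\ipro{\psi}{\phi} = F(\rho,\rho')$ (real and non-negative), so that $\|\ket{\psi}-\ket{\phi}\|^2 = 2 - 2F(\rho,\rho') = B(\rho,\rho')^2$. Since $\A(\rho)$ has the same output distribution as $\A \otimes I$ applied to $\ket{\psi}$, it would suffice to prove the pure-state version, which follows from writing $\bra{\psi}M_s\ket{\psi} - \bra{\phi}M_s\ket{\phi}$ in terms of $\ket{\psi}-\ket{\phi}$ and using $0 \preceq M_s \preceq I$. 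There is no genuine obstacle here; the statement is a triangle-type relation among $TD$, $F$, and $B$, and I would present the first (three-line) route in the final write-up.
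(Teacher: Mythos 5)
Your proof is correct and follows essentially the same route as the paper's: bound the distinguishing advantage by the trace distance, apply the Fuchs--van de Graaf inequality $TD \leq \sqrt{1-F^2}$, and then observe $\sqrt{1-F^2} = \sqrt{(1-F)(1+F)} \leq \sqrt{2(1-F)} = B$; the paper writes the last step as $\sqrt{(1+F)/2}\cdot\sqrt{2-2F} \leq B$, which is the same algebra. The extra detail you give on the POVM reduction and the alternative Uhlmann-based argument are fine but not needed.
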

\begin{proof}
It is well-known that $|\Pr[\A(\rho)=s] - \Pr[\A(\rho')=s]|\leq \frac{1}{2}\Tr|\rho-\rho'|$. Then, 
\begin{eqnarray*}
TD(\rho,\rho') &\leq& \sqrt{1-F(\rho,\rho')^2} \\
&=& \sqrt{\frac{1+F(\rho,\rho')}{2}}\sqrt{2-2F(\rho,\rho')} \\
&\leq& B(\rho,\rho'). 
\end{eqnarray*}
\end{proof}

The state distance and Claim~\ref{lem:prob_to_norm} will be used shortly in the following sections.  

\subsection{Computational Model}
We here define two schemes which interleave low-depth quantum circuits and classical computers. The first scheme is called $d$-depth quantum-classical scheme ($d$-QC scheme) and the second scheme is $d$-depth classical-quantum scheme ($d$-CQ scheme).

We say a set of gates forms a layer if all the gates in the set operate on disjoint qubits. Gates in the same layer can be parallelly applied.  We define the number of layers in a circuit as the {\emph depth} of the circuit. In the following, we define circuit families which has circuit depth $d$ as in~\cite{TD04,MN98}.
\begin{definition}[$d$-depth quantum circuit $\class{QNC_d}$]\label{def:d_qc}
A $\class{QNC_d}$ quantum circuit family $\{C_n: n>0\}$ is defined as below: 
\begin{itemize}
    \item There exists a polynomial $p$ such that for all $n>0$, $C_n$ operates on $n$ input qubits and $p(n)$ ancilla qubits; 
    \item for $n>0$, $C_n$ has the initial state $\ket{0^{n+p(n)}}$, consists of $d$ layers of one- and two-qubit gates, and measures all qubits after the last layer.
\end{itemize}
\end{definition}

We can illustrate a $\class{QNC_d}$ circuit as in Fig.~\ref{fig:qnc}, where $U_i$ for $i\in [d]$ is a unitary which can be implemented by one layer of one- and two-qubit gates, and the last computational unit is a qubit-wise measurement in the standard basis. 

\begin{figure}
    \centering
    \includegraphics[scale=0.5]{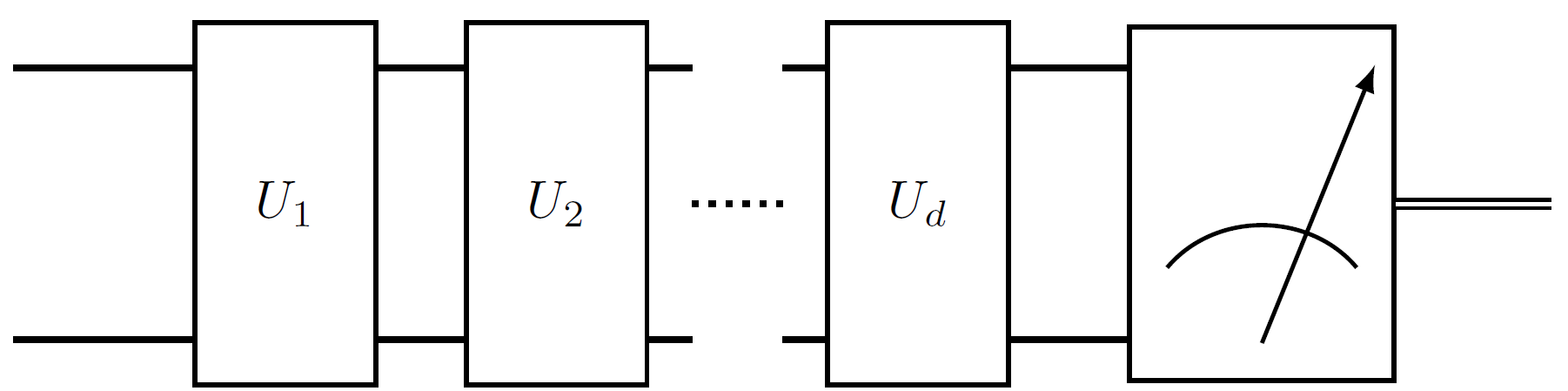}
    \caption{The $\class{QNC_d}$ circuit: The single-line wires are the quantum wires, and the double-line wire is the classical wire.}
    \label{fig:qnc}
\end{figure}

In some studies, $\class{QNC}$ is also used to refer the set of languages decided by the quantum circuits. For clarity, we define the set of languages decided by $\class{QNC_d}$ as $\class{BQNC_d}$ as follows: 
\begin{definition}[$\class{BQNC_d}$]
The set of languages $L=\{L_n: n>0\}$ for which there exists a circuit family $\{C_n: n>0\}\in QNC_d$ such that for $n>0$, for any $x$ where $|x| =n$, 
\begin{itemize}
    \item if $x\in L_n$, then $\Pr[C_n(x)=1]\geq 2/3$; 
    \item otherwise, $\Pr[C_n(x)=1]\leq 1/3$. 
\end{itemize}
\end{definition}

Then we define the quantum analogue of Nick's class
\begin{definition}[$\class{BQNC^k}$]
The set of languages $L=\{L_n: n>0\}$ for which there exists a circuit family $\{C_n: n>0\}\in QNC_d$ for $d=O(\log^k n)$ such that for $n>0$, for any $x$ where $|x| =n$, 
\begin{itemize}
    \item if $x\in L_n$, then $\Pr[C_n(x)=1]\geq 2/3$; 
    \item otherwise, $\Pr[C_n(x)=1]\leq 1/3$. 
\end{itemize}
\end{definition}

For simplicity, we define $\class{BQNC}$ as the set of languages which can be decided by $\poly\log$-depth quantum circuit that is $\class{BQNC} := \bigcup_{k}\class{BQNC}^{k}$.


A $d$-depth quantum circuit with oracle access to some classical function $f$ is a sequence of operations
\begin{eqnarray}
 U_1 \xrightarrow{q} f \xrightarrow{q} U_2\xrightarrow{q} f\xrightarrow{q} \cdots U_d \xrightarrow{q} f \xrightarrow{q} U_{d+1}, \label{eq:prem_1}
\end{eqnarray}
where $U_i$'s are unitaries as we have defined in~\ref{def:d_qc}. The transition $\xrightarrow{q}$ implies that $U_i$ can send and receive quantum messages from $f$.  We add an additional layer of unitary $U_{d+1}$ to the computational model to process the information from the last call of $f$.

In the quantum query model, we usually consider $f$ as an unitary operator $U_{f}$
\begin{align*}
    U_{f}\ket{x}\ket{0} = \ket{x}\ket{f(x)}. 
\end{align*}
We represent the process in Eq.~\ref{eq:prem_1} as follows: let $\ket{0,0}_{QA}\ket{0}_W$ be the initial state, where registers $Q$ and $A$ consist of qubits the algorithm sends to or receives from the oracle and register $W$ consists of the rest of the qubits the algorithm holds as working space. Let 
\begin{align*}
    & U_1\ket{0,0}_{QA}\ket{0}_W = \sum_{x,y}c(x,y)\ket{x,y}_{QA}\ket{z(x,y)}_W
\end{align*}
be the quantum message before applying the first $U_f$. After applying $U_{f}$,  
\begin{align*}
    & U_f\sum_{x,y}c(x,y)\ket{x,y}_{QA}\ket{z(x,y)}_W = \sum_{x,y}c(x,y)\ket{x,y\oplus f(x)}_{QA}\ket{z(x,y)}_W. 
\end{align*}
We represent Eq.~\ref{eq:prem_1} as a sequence of unitaries $U_{d+1}U_fU_d  \cdots  U_fU_1$ in this way. For simplicity, we rewrite Eq.~\ref{eq:prem_1} as $U_{d+1}fU_d  \cdots  fU_1$ in the rest of the paper.


\subsubsection{The \texorpdfstring{$d$}{Lg}-QC Scheme and \texorpdfstring{$\class{BQNC_d^{BPP}}$}{Lg}}
The first scheme we consider is the $d$-QC scheme, which is a generalized model for d-depth MBQC and can be represented as the following sequence 
\begin{eqnarray}
\A_c \xrightarrow{c} (\Pi_{0/1}\otimes I)\circ U^{(1)} \xrightarrow{c} \A_c \xrightarrow{c/q}  \cdots \xrightarrow{c/q} (\Pi_{0/1}\otimes I)\circ U^{(d)} \xrightarrow{c} \A_c, \label{eq:d-qc-model_1}
\end{eqnarray}
where $\A_c$ is a randomized algorithm, $U^{(i)}$ is a single-depth quantum circuit, and $\Pi_{0/1}$ is a projective measurement in the standard basis on a subset of the qubits. The arrows $\xrightarrow{c}$ and $\xrightarrow{q}$ indicate the classical and quantum messages transmitted. We illustrate the scheme as in Fig.~\ref{fig:qnc_bpp}
\begin{figure}
    \centering
    \includegraphics[width=\textwidth]{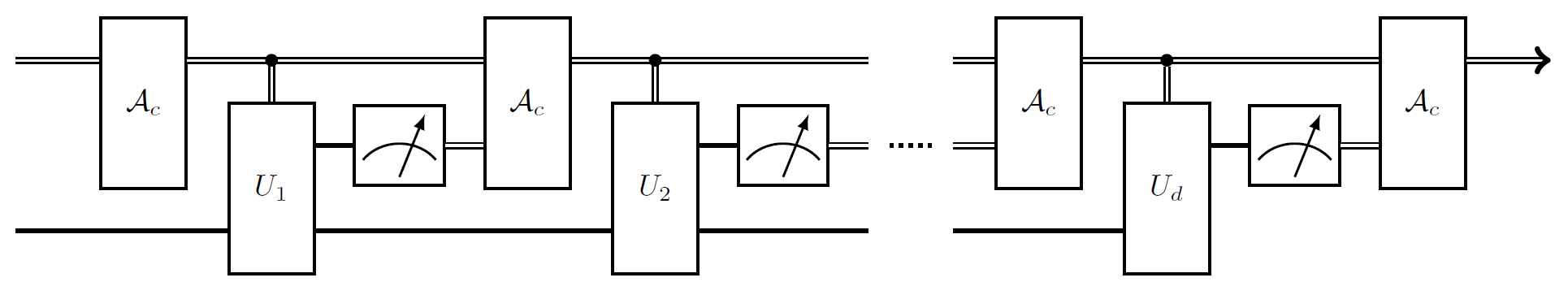}
    \caption{The $d$-depth quantum-classical ($d$-QC) scheme: The single-line wires stands for the quantum wires, and the double-line wires are for classical wires.}
    \label{fig:qnc_bpp}
\end{figure}

In this model, after applying $U^{(i)}$, one measures a subset of qubits, uses the measurement outcome to decide $U^{(i+1)}$ by the classical polynomial algorithm $\A_c$, and then apply $U^{(i+1)}$ to the rest of the qubits. Let $L^{(i)}$ be the procedure $\A_c\rightarrow{c} (\Pi_{0/1}\otimes I)\circ U^{(i)} $. Then, we rewrite Eq.~\ref{eq:d-qc-model_1} as 
\begin{eqnarray}
L^{(1)} \xrightarrow{c/q} \cdots \xrightarrow{c/q} L^{(d)}\xrightarrow{c}\A_c . \label{eq:d-qc-model_2}
\end{eqnarray}

Then, we define the languages which can be decided by $d$-QC scheme. 
\begin{definition}[$\class{BQNC_d^{BPP}}$]
The set of languages $\Lang=\{\Lang_n: n>0\}$ for which there exists a family of $d$-QC schemes $\{\A_n: n>0\}$ such that for $n>0$, for any $x$ where $|x| =n$, 
\begin{itemize}
    \item if $x\in \Lang_n$, then $\Pr[\A_n(x)=1]\geq 2/3$;
    \item otherwise, $\Pr[\A_n(x)=1]\leq 1/3$. 
\end{itemize} 
\end{definition}

Let $\A$ be a $d$-QC scheme with access to some oracle $\Ora$. We represent $\A^{\Ora}$ as a sequence of operators: 
\begin{eqnarray}
  (L^{(1)})^{\Ora} \xrightarrow{c/q} \cdots \xrightarrow{c/q} (L^{(d)})^{\Ora}\xrightarrow{c}\A_c^{\Ora},  
\end{eqnarray}
where $(L^{(i)})^{\Ora}:= \A_c^\Ora \xrightarrow{c}(\Pi_{0/1}\otimes I)\circ \Ora U^{(i)}$. We illustrate the scheme as in Fig.~\ref{fig:qnc_bpp_o}.
\begin{figure}
    \centering
    \includegraphics[width=\textwidth]{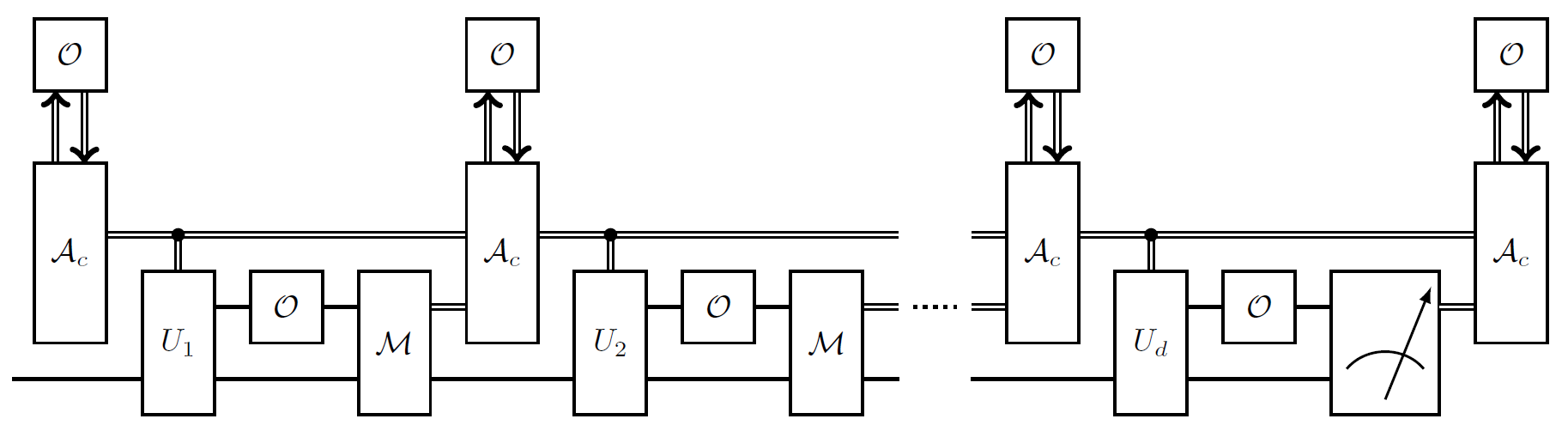}
    \caption{The $d$-QC scheme with access to an oracle $\Ora$}
    \label{fig:qnc_bpp_o}
\end{figure}

\begin{definition}[$\class{(BQNC_d^{BPP})}^{\Ora}$]
The set of languages $\Lang^{\Ora}:=\{\Lang^{\Ora}_n: n>0\}$ for which there exists a family of $d$-QC schemes $\{\A^{\Ora}_n: n>0\}$ such that for $n>0$, for any $x$ where $|x| =n$, 
\begin{itemize}
    \item if $x\in \Lang^{\Ora}_n$, then $\Pr[\A^{\Ora}_n(x)=1]\geq 2/3$; 
    \item otherwise, $\Pr[\A^{\Ora}_n(x)=1]\leq 1/3$. 
\end{itemize} 
\end{definition}

Similar to the definition of $\class{BQNC}$, we define $\class{BQNC^{BPP}}$ as a set of languages which can be decided by a family of $d$-QC schemes with $d=O(\poly\log n)$. 

\subsubsection{The \texorpdfstring{$d$}{Lg}-CQ scheme and \texorpdfstring{$\class{BPP^{BQNC_d}}$}{Lg}}
Here we define the $d$-CQ scheme that is a classical algorithm which has access to a $\class{QNC}_d$ circuit during the computation.  We represent the scheme as follows: 
\begin{eqnarray}
\A_{c,1} \xrightarrow{c} \Pi_{0/1}\circ C \xrightarrow{c} \cdots \xrightarrow{c}  \A_{c,m} \xrightarrow{c} \Pi_{0/1}\circ C \xrightarrow{c} \A_{c,m+1},\label{eq:d-cq-model_1}
\end{eqnarray}
where $m=\poly(n)$, $A_{c,i}$ is a randomized algorithm, $C$ is a $d$-depth quantum circuit, and $\Pi_{0/1}$ is the standard-basis measurement. We illustrate the scheme as in Fig.~\ref{fig:bpp_qnc}. 
\begin{figure}
    \centering
    \includegraphics[width=\textwidth]{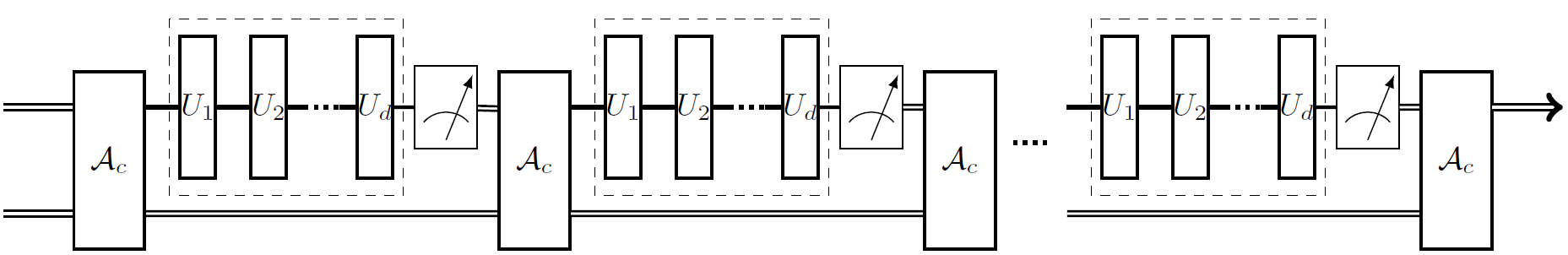}
    \caption{The $d$-depth classical-quantum ($d$-CQ) scheme}
    \label{fig:bpp_qnc}
\end{figure}

In this scheme, the classical algorithm can perform queries to the $\class{QNC}_d$ circuit, and then use the measurement outcomes from the circuit as part of the input to the following procedures. We let $L^{(i)}:= \A_{c,i} \xrightarrow{c} \Pi_{0/1}\circ C $ and rewrite Eq.~\ref{eq:d-cq-model_1} as 
\[
    L^{(1)}\xrightarrow{c} L^{(2)}\xrightarrow{c}\cdots\xrightarrow{c} L^{(m)} \xrightarrow{c} \A_{c,m+1}. 
\]

\begin{definition}[$\class{BPP^{BQNC_d}}$]
The set of languages $\Lang=\{\Lang_n: n>0\}$ for which there exists a family of $d$-CQ schemes $\{\A_n: n>0\}$ such that for $n>0$, for any $x$ where $|x| =n$, 
\begin{itemize}
    \item if $x\in \Lang_n$, then $\Pr[\A_n(x)=1]\geq 2/3$;
    \item otherwise, $\Pr[\A_n(x)=1]\leq 1/3$. 
\end{itemize} 
\end{definition}

Let $\A$ be a $d$-CQ scheme with access to some oracle $\Ora$. We represent $\A^{\Ora}$ as 
\begin{eqnarray}
(L^{(1)})^{\Ora}\xrightarrow{c} (L^{(2)})^{\Ora}\xrightarrow{c}\cdots\xrightarrow{c} (L^{(m-1)})^{\Ora} \xrightarrow{c} (\A_c^{(m)})^{\Ora}\label{eq:d-cq-model_2}
\end{eqnarray}
where $(L^{(i)})^{\Ora}:= (\A_{c,i})^{\Ora} \xrightarrow{c} \Pi_{0/1}\circ (U^{(d)}\Ora U^{(d-1)}\cdots \Ora U^{(1)})$. We illustrate the scheme as in Fig.~\ref{fig:bpp_qnc_o}. 
\begin{figure}
    \centering
    \includegraphics[width=\textwidth]{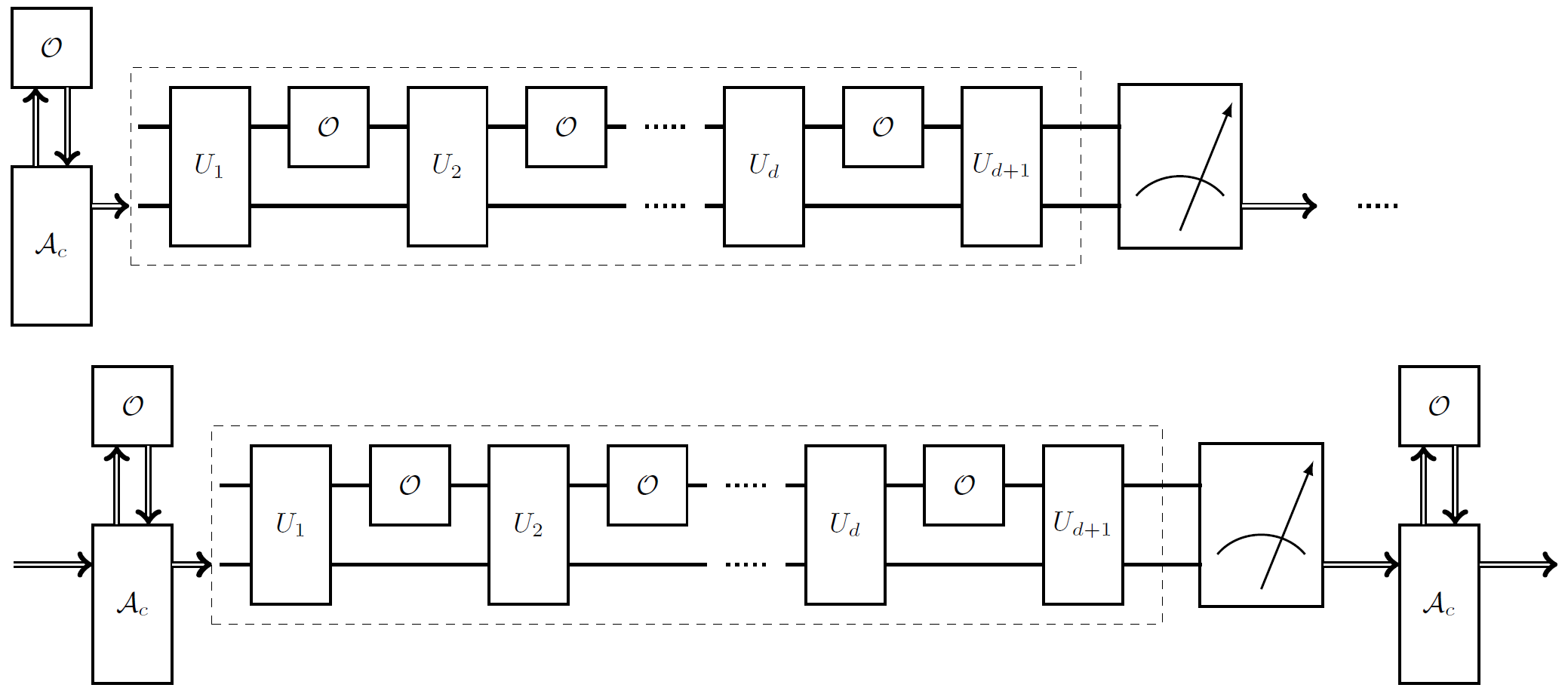}
    \caption{The $d$-CQ scheme with access to an oracle $\Ora$}
    \label{fig:bpp_qnc_o}
\end{figure}

\begin{definition}[$(\class{BPP^{BQNC_d}})^{\Ora}$]
The set of languages $\Lang^{\Ora}=\{\Lang^{\Ora}_n: n>0\}$ for which there exists a family of $d$-CQ schemes $\{\A^{\Ora}_n: n>0\}$ such that for $n>0$, for any $x$ where $|x| =n$, 
\begin{itemize}
    \item if $x\in \Lang^{\Ora}_n$, then $\Pr[\A^{\Ora}_n(x)=1]\geq 2/3$; 
    \item otherwise, $\Pr[\A^{\Ora}_n(x)=1]\leq 1/3$. 
\end{itemize} 
\end{definition}
We define $\class{BPP^{BQNC}}$ as a set of languages which can be decided by a family of $d$-CQ schemes with $d=O(\polylog n)$. 

The main differences between $d$-CQ and $d$-QC schemes are that 1) the $d$-QC scheme can transmit quantum messages from one layer to the next, but the $d$-CQ scheme can only send classical messages, and 2) a $d$-QC scheme has at most $d$ layers, but a $d$-CQ scheme may have $m\times d$ layers. According to these observations, these two schemes seem to be incomparable.  

The notations $\class{BQNC^{BPP}}$ and $\class{BPP^{BQNC}}$ may not be standard in quantum complexity theory. For instance, one may expect the quantum circuit can make superposed queries, but we only allow classical queries. The reason we chose these non-standard names is that $\class{BPP^{BQNC}}$ and $\class{BQNC^{BPP}}$ are more intuitive for readers to capture the ideas of the two models and can help readers to figure out the differences between these two models as we have mentioned in the previous paragraph.

\section{The \texorpdfstring{$d$}{Lg}-shuffling Simon's Problem (\texorpdfstring{$\SSP{d}$}{Lg})}\label{sec:shuffling_oracle_def}
 
In this section, we define the oracle and the corresponding oracle problem which separates the general quantum algorithms from $d$-QC and $d$-CQ schemes. We call the oracle the \emph{shuffling oracle} and the corresponding problem as the $d$-shuffling Simon's problem ($\SSP{d}$).   

\subsection{The Shuffling Oracle}
For $X$ and $Y$ any two sets, we let $P(X,Y)$ be the set of one-to-one functions from $X$ to $Y$. For example, $\mathcal{P}(\Z_2^n,\Z_2^n)$ is the set of all permutations over $\Z_2^n$.  Let $f: \Z_2^n\rightarrow  \Z_2^n$ be an arbitrary function and $d\in \mathbb{N}$.  We define the $d$-depth shuffling of $f$ as below.  
\begin{definition}[$(d,f)$-Shuffling]\label{def:shuffle}
A $(d,f)$-shuffling of $f$ is defined by $\vecf:=(f_0,\dots,f_d)$, where $f_0,\dots,f_{d-1}\in P(\Z_2^{(d+2)n},\Z_2^{(d+2)n})$ are chosen randomly, and then we choose $f_d$ be the function satisfying the following properties: Let  $S_d:=\{f_{d-1}\circ\cdots\circ f_0(x'):\; x'=0,\dots,2^n-1\}$.  
\begin{itemize}
    \item For $x\in S_d$, let $f_{d-1}\circ\cdots\circ f_0(x')=x$, and we choose the function $f_d:S_d\rightarrow [0,2^n-1]$ satisfying that $f_d\circ f_{d-1}\circ\cdots\circ f_0(x') = f(x')$.
    \item For $x\notin S_d$, we let $f_d(x)=\bot $. 
\end{itemize}
We let $\Sf(d,f)$ be the set of all $(d,f)$-shuffling functions of $f$. For simplicity, we denote $f_d$ on the subdomain $S_d$ as $f^*_d$. 
\end{definition}

In this paper, we consider the case that the $(d,f)$-shuffling is given randomly. One of the most natural way is sampling a shuffling function $\vecf$ uniformly at random from $\Sf(d,f)$. We describe the sampling procedure as below.
\begin{definition}[$\D(f,d)$]\label{def:uniform_distribution}
Draw $f_0,\dots,f_{d-1}$ uniformly at random from $\mathcal{P}(\Z_2^{(d+2)n},\Z_2^{(d+2)n})$ and then choose $f_d^*$ such that $f^*_d\circ\cdots\circ f_0(x) = f(x)$ for $x\in \Z_2^n$.
\end{definition}

Fix $d\in \mathbb{N}$ and a function $f:\Z_2^n\rightarrow \Z_2^n$, we can define a random oracle which is a $(d,f)$-shuffling chosen uniformly randomly from $\Sf(d,f)$. 
\begin{definition}[Shuffling oracle $\Oru$]
Let $f$ be an arbitrary function from $\Z_2^n$ to $\Z_2^n$. Let $d\in \mathbb{N}$. We define the shuffling oracle $\Oru$ as a $(d,f)$-shuffling $\vecf$ chosen from $\Sf(d,f)$ according to $\D(f,d)$. 
\end{definition}

If we sample $\vecf$ according to the distribution in Def.~\ref{def:uniform_distribution}, only $f_d^*$ encodes the information of $f$, while $(f_0,\dots,f_{d-1})$ are just random one-to-one functions. We call $f_d^*$ the \textbf{core function} of $f$. 

For simplifying our proofs, we here define paths in the shuffling oracle. The concept of paths will be used when we consider $d$-QC and $d$-CQ schemes. 
\begin{definition}[Path in $\Oru$]\label{def:path}
Let $\vecf\in \Sf(d,f)$. we say $(x_0,\dots,x_{d+1})$ is a \textbf{path} in $\vecf$ if $f_i(x_i) = x_{i+1}$ for $i=0,\dots,d$.
\end{definition}

Now, we describe the oracle access to the shuffling oracle $\Oru$. Let $\ket{\phi}$ be the input state to $\Oru$, which we represent in the form  
\begin{eqnarray}
\ket{\phi}:=\sum_{\X_0,\dots,\X_d} c(\X_0,\dots,\X_d)\left(\bigotimes_{i=0}^d\ket{i,\X_i}\right)_{\regR_Q}\otimes \ket{0}_{\regR_N} \otimes \ket{w(\X_0,\dots,\X_d)}_{\regR_W},\label{eq:state_1}
\end{eqnarray}
where $\ket{w(\X_0,\dots,\X_d)}$'s are some arbitrary states and $\X_i$ is a set of elements in the domain of $f_i$. The queries in the register $\regR_Q$ and the ancillary qubits in the register $\regR_N$ will be processed by the oracle, while the remaining local working qubits in the register $\regR_W$ are unchanged and hold by the algorithm; the state $\ket{i,\X_i}$ denotes the set of parallel queries to function $f_i$. 

We let $\vecf\in \Sf(d,f)$ be sampled according to $\D(f,d)$, then
\[
    \vecf\ket{\phi}:= \left(\bigotimes_{i=0}^d\ket{i,\X_i}\ket{f_i(\X_i)}\right)_{\regR_Q,\regR_N} \otimes \ket{w(\X_0,\dots,\X_d)}_{\regR_W}.
\]
Applying $\Oru$ on $\ket{\phi}$ gives the mixed state
\[
    \Oru (\opro{\phi}{\phi}) := \sum_{\vecf\in \Sf(d,f)} \frac{1}{|\Sf(d,f)|} \vecf \opro{\phi}{\phi} \vecf.
\]

\subsection{Shuffling Simon's Problem}

We recall the definitions of Simon's function and Simon's problem. 
\begin{definition}[Simon's function] \label{def:Simon_fun}
A two-to-one function $f:\Z_{2}^n\rightarrow \Z_2^n$ for $n\in \mathbb{N}$ is a Simon's function if there exists an $s\in \Z_2^n$ such that $f(x) = f(x+s)$ for $x\in \Z_2^n$. 
\end{definition}

\begin{definition}[Simon's problem ]\label{def:Simon}
Let $n\in \mathbb{N}$. Let $\F$ be the set of all Simon's functions from $\Z_2^n$ to $\Z_2^{n}$. Choose $f$ from $\F$ uniformly at random, the problem is to find the hidden shift $s$ of $f$. 
\end{definition}

The decision version of Simon's problem is as follows: 
\begin{definition}[Decision Simon's problem]\label{def:Simon_2}
Let $n\in \mathbb{N}$. Let $\F$ be the set of all Simon's functions from $\Z_2^n$ to $\Z_2^{n}$. Choose $f$ to be  either a random Simon's function from $\F$ or a random one-to-one function from $\Z_2^n$ to $\Z_2^n$ with equal probability, the problem is to decide which case $f$ is. 
\end{definition}

Both problems have been shown to be hard classically and can be solved in quantum polynomial time. We define the $\SSP{d}$ by combining Simon's problem and the shuffling oracle. 

\begin{definition}[$d$-Shuffling Simon's Problem ($\SSP{d}$)]
Let $d\in \mathbb{N}$. Let $f:\{0,1\}^n\rightarrow \{0,1\}^{n}$ be a random Simon's function or a random one-to-one function with equal probability. Given access to the $(d,f)$-shuffling oracle $\Oru$ of $f$, the problem is to decide which case $f$ is.
\end{definition}\label{def:dssp}

The search version of the $\SSP{d}$ is as follows: 
\begin{definition}[Search $\SSP{d}$]
Let $d\in \mathbb{N}$. Let $f:\{0,1\}^n\rightarrow \{0,1\}^{n}$ be a random Simon's function. Given access to the $(d,f)$-shuffling oracle $\Oru$ of $f$, the problem is to find the hidden shift of $f$.
\end{definition}\label{def:dssp_2}

Then, we define an oracle $\Ora$ and a language $\Lang(\Ora)$ corresponding to the $\SSP{d}$.  
\begin{definition}~\label{def:language}
Let $\{f_i:\Z_2^i\rightarrow \Z_2^i;\; i>0\}$ be the set of functions satisfying the promise of the decision Simon's problem. Let $\Ora:=\{\Ora^{f_i,d(i)}_{unif}: i>0\}$, where $\Ora^{f_i,d(i)}_{unif}$ is the ($d(i),f_i$)-shuffling oracle for the function $f_i$ and $d(i)= i$. 
The language is defined as follows: 
\[
    \Lang(\Ora) := \{1^n: f_n\mbox{ is a Simon's function.}\} 
\]
\end{definition}

\begin{remark}
If the $\SSP{d}$ is intractable for any $d$-CQ or $d$-QC scheme for any $d$, then 
\[
\Lang(\Ora)\notin \class{(BPP^{BQNC})}^{\Ora}\cup \class{(BQNC^{BPP})}^{\Ora}.
\]
This is because $d(n)=\Theta(n)$ is asymptotically greater than $\log^k n$ for any constant $k$.  We can prove it by contradiction. If there is a $d$-CQ or $d$-QC algorithm can decide $\Lang(\Ora)$, then we can use that algorithm to solve the $\SSP{d}$. 

\end{remark}

We can show that the $\SSP{d}$ can be solved with a $2d+1$-depth quantum circuit. The idea is using Simon's algorithm and erasing the queries on the path.  
\begin{theorem}\label{thm:dSSP_solve}
The $\SSP{d}$ and the search $\SSP{d}$ can be solved by a $\class{QNC}_{2d+1}$ circuit with classical post-processing. 
\end{theorem}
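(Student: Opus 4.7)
The plan is to run a parallelized version of Simon's algorithm in which the composite function $f=f_d\circ\cdots\circ f_0$ is evaluated coherently via $d+1$ sequential forward queries followed by $d$ uncomputation queries, for a total query depth of $2d+1$. A single Simon sample is obtained as follows. Initialize input register $R_0$ to $\frac{1}{\sqrt{2^n}}\sum_{x\in\Z_2^n}\ket{x}$ together with ancillary registers $R_1,\ldots,R_{d+1}$ and an output register $O$, all in $\ket{0}$. Apply $d+1$ sequential shuffling-oracle queries where the $(i{+}1)$st query reads $R_i$ and XORs $f_i(R_i)$ into $R_{i+1}$; after this stage the state is $\frac{1}{\sqrt{2^n}}\sum_x\ket{x}\ket{f_0(x)}\cdots\ket{f_{d-1}(\cdots f_0(x))}\ket{f(x)}\ket{0}$. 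Copy $R_{d+1}$ into $O$ via CNOTs (no query), measure $O$, and let $y$ be the outcome.

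The key observation for saving one query is that after the measurement the register $R_{d+1}$ holds the classical value $y$ independently of $x$ and is therefore disentangled, so it need not be uncomputed. The other intermediate registers $R_1,\ldots,R_d$ still depend on $x$ and would spoil Simon-style interference, so they must be cleared. We uncompute them using $d$ queries in reverse: query $f_{d-1}$ on $R_{d-1}$ XORed into $R_d$, then $f_{d-2}$ on $R_{d-2}$ into $R_{d-1}$, down to $f_0$ on $R_0$ into $R_1$. Each such query cancels the previously computed intermediate value, leaving $R_1,\ldots,R_d$ in $\ket{0}$. The surviving state on $R_0$ is the standard Simon state $\frac{1}{\sqrt{2}}(\ket{x}+\ket{x\oplus s})$ (in the Simon case) or a single $\ket{x}$ (in the one-to-one case). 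Applying Hadamards to $R_0$ and measuring yields a vector $v$ satisfying $v\cdot s=0$ (respectively a uniformly random $v$). Note that by the construction of the shuffling oracle, $f_{i-1}$ maps $S_{i-1}$ into $S_i$, and the initial superposition lives in $S_0=\Z_2^n$, so no query ever lands outside the valid domain and no $\bot$ ever appears.

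Counting layers, the circuit has $2d+1$ oracle-query layers interspersed with constant-depth Clifford layers that can be absorbed into the unitaries $U_i$ of Definition~\ref{def:d_qc}; the initial Hadamards, the CNOT copy to $O$, and the final Hadamards are all constant depth. To obtain $m=\Theta(n)$ linearly independent Simon samples with high probability, we simply place $m$ independent copies of the above circuit side by side within a single depth-$(2d+1)$ circuit of polynomial width. The classical post-processing collects the sampled vectors $v_1,\ldots,v_m$, solves the linear system $v_i\cdot s=0$ over $\mathbb{F}_2$ for the unique nonzero solution $s$, and verifies it by computing $f(x)$ and $f(x\oplus s)$ at a uniformly random $x$ using $2(d+1)$ classical oracle queries along the chain $f_0,\ldots,f_d$. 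In the Simon case, verification passes and we also recover the shift, solving the search version; in the one-to-one case the random samples cause verification to fail with overwhelming probability. There is no real obstacle here beyond the bookkeeping of the uncomputation; the one potentially subtle point, namely that skipping the uncomputation of $R_{d+1}$ is legitimate, is justified by the fact that measurement of $O$ collapses $R_{d+1}$ to the constant value $y$.
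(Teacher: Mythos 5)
Your proof is correct and follows essentially the same approach as the paper's: prepare a superposition over $S_0$, compute the chain $f_0,\dots,f_d$ forward with $d+1$ sequential queries, uncompute the intermediate registers $R_1,\dots,R_d$ with $d$ queries in reverse order, and Hadamard-and-measure $R_0$ to obtain Simon samples, with $O(n)$ parallel copies and classical linear algebra to finish. The CNOT copy to $O$, its mid-circuit measurement, and the final verification queries are harmless but unnecessary — one can simply leave $R_{d+1}$ holding $f(x)$ and measure everything at the end, which is what keeps the circuit a valid $\class{QNC}_{2d+1}$ circuit under Definition~\ref{def:d_qc}.
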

\begin{proof}
The $\SSP{d}$ can be solved via Simon's algorithm. We show the proof here. 
\begin{eqnarray*}
    \sum_{x\in \Z_2^n}\ket{x}\ket{0,\dots,0} &\xrightarrow{f_0\sim\D_f}&  \sum_{x\in \Z_2^n}\ket{x}\ket{f_0(x),\dots,0}\nonumber\\
    &\xrightarrow{f_1\sim\D_f}& \sum_{x\in \Z_2^n}\ket{x}\ket{f_0(x),f_1(f_0(x)),\dots,0}\nonumber\\
    &\xrightarrow{f_d\sim\D_f}& \sum_{x\in \Z_2^n}\ket{x}\ket{f_0(x),f_1(f_0(x)),\dots,f(x)}\nonumber\\
    &\xrightarrow{Measure}& \frac{1}{\sqrt{2}}(\ket{x}\ket{f_0(x),\dots,f_{d-1}(\cdots f_{0}(x))}\nonumber\\
    &&+\ket{x+s}\ket{f_0(x+s),\dots,f_{d-1}(\cdots f_{0}(x+s))}) \ket{f(x)}\nonumber\\
    &\xrightarrow{\mbox{uncompute } f_0,\dots,f_{d-1}}& \frac{1}{\sqrt{2}}\left(\ket{x}+\ket{x+s}\right)\ket{f(x)}\\
    &\xrightarrow{QFT}& \frac{1}{\sqrt{2^n}}\sum_{j\in \Z_2^n} ((-1)^{x\cdot j} + (-1)^{(x+s)j}) \ket{j}.\label{eq:fourier_sampling}
\end{eqnarray*}
When $s\cdot j =0$, measuring the first register outputs $j$ with non-zero probability. Other other hand, when $s\cdot j = 1$, the probability that measurement outputs $j$ is zero. Therefore, by sampling $O(n)$ copies of $j$'s, one can find $s$ which is orthogonal to all $j$'s with high probability. 
\end{proof}

Theorem~\ref{thm:dSSP_solve} directly implies that the language defined in Def.~\ref{def:language} is in $\class{BQP}$ relative to $\Ora$.  

\section{Analyzing the Shuffling Oracle}\label{sec:shuffling_oracle_analysis}

In this section, we are going to prove some properties related to the shuffling oracle. We first define as sequence of subsets which are in the domains of $f_0,\dots,f_d$. 
\begin{definition}[$\vecS$]
Let $S_0=\{0,\dots,2^n-1\}$. For $j=1,\dots, d$, let $S_{j+1}= f_{j}\circ f_{j-1}\circ \cdots \circ f_0(S_0)$. We define $\vecS := (S_0,\dots,S_{d})$. 
\end{definition}

We define a sequence of hidden sets $\Sb$ corresponding to the shuffling oracle.  
\begin{definition}[The hidden sets $\Sb$]\label{def:S}
Let $d,n\in \mathbb{N}$, $f: \Z_2^n\rightarrow \Z_2^{n}$, and $\vecf$ be a $(d,f)$-shuffling of $f$. Then, we define the sequence of hidden sets $\Sb=(\vecS^{(0)},\dots,\vecS^{(d)})$ as follows. 
\begin{enumerate}
    \item Let $S_j^{(0)}=\Z_2^{(d+2)n}$ for $j=0,\dots,d$. We define $\vecS^{(0)} = (S_0^{(0)},\dots,S_{d}^{(0)})$.
    \item For $\ell=1,\dots,d$, for $j=\ell,\dots, d$, we choose $S_j^{(\ell)}\subseteq S_j^{(\ell-1)}$ randomly satisfying that $\frac{|S_j^{(\ell)}|}{|S_j^{(\ell-1)}|}\leq \frac{1}{2^n}$, $f_j(S_{j-1}^{(\ell)})=S_{j}^{(\ell)}$, and $S_j\subseteq S_j^{(\ell)}$. We define $\vecS^{(\ell)} = (S_{\ell}^{(\ell)},\dots,S_{d}^{(\ell)})$.
\end{enumerate}
\end{definition}

Note that $\Sb$ is a concept which we will use to show that a $d$-depth quantum circuit cannot successfully evaluate $f_d^*$ with high probability. Hence, we will choose $\Sb$ in the ways such that some properties are satisfied depending on the computational models we are considering. We will see how to construct $\Sb$ in the following sections.

With the concept of $\Sb$, we can introduce the notation \emph{Shadow} which we will use to analyze the shuffling oracle.
\begin{definition}[Shadow function]
Let $\vecf:=(f_0,\dots,f_d)$ be a $(d,f)$-shuffling of $f: \Z_2^n\rightarrow \Z_2^n$. Fix the hidden sets $\Sb:= (\vecS^{(0)},\dots,\vecS^{(d)})$. The shadow $\vecg$ of $\vecf$ in $\vecS^{(\ell)} = (S_{\ell}^{(\ell)},\dots,S_d^{(\ell)})$ is as follows: For $j=\ell,\dots,d$, let $g_j$ be the function such that if $x\in S_{j}^{(\ell)}$, $g_j(x)=\bot$; otherwise, $g_j(x) = f_j(x)$. We let $\vecg:= (f_0,\dots,f_{\ell-1},g_{\ell},\dots,g_d)$.
\end{definition}

We can also represent a $(d,f)$-shuffling $\vecf$ in terms of mappings corresponding to elements in $\vecS^{(\ell)}$ of $\Sb$. 

\begin{definition}[$\vecf^{(\ell)}$ and $\vecf^{(\ell)}$]\label{def:shaving_f}
For $\ell=1,\dots,d$, we let $f_j^{(\ell)}$ be $f_j$ on $S_{j}^{(\ell-1)}\setminus S_{j}^{(\ell)}$ and $\hat{f}_j^{(\ell)}$ be $f_j$ on $S_{j}^{(\ell)}$. Then, we define 
\begin{align*}
    &\vecf^{(1)} := (f_0,f_1^{(1)},\dots,f_d^{(1)})\quad  \quad\vecf^{(d+1)} := (\hat{f}_d^{(d)}),\,\mbox{and}\\
    & \vecf^{(\ell)} := (\hat{f}^{(\ell)}_{\ell-1},f^{(\ell)}_{\ell},\dots,f_{d}^{(\ell)})
\end{align*}
for $\ell=2,\dots,d$. Also, we define 
\begin{align*}
    &\nvecf^{(0)} := (f_0,\dots,f_d)\,\mbox{and}\\
    &\nvecf^{(\ell)} := (\hat{f}_{\ell}^{(\ell)},\dots,\hat{f}_{d}^{(\ell)})
\end{align*}
for $\ell = 1,\dots,d$.
\end{definition}

We can say that $\nvecf^{(\ell)}$ is the mapping of elements in $\vecS^{(\ell)}$,  and $\vecf^{(1)},\dots,\vecf^{(\ell)}$ are the mappings out of $\vecS^{(\ell)}$ for $\ell\in [d]$.

We rewrite $\vecf$ in the following form 
\[
    \vecf:=(\vecf^{(1)},\dots,\vecf^{(d+1)}).
\]
This representation will be convenient for our analysis. There are several facts corresponding to this representation.
\begin{observation}\label{ob:1}
Let $\vecf$ be an uniform $(d,f)$-shuffling of some function $f$.  
\begin{itemize}
    \item The key function $f^*_d$ is in $\vecf^{(d+1)}:= (\hat{f}_d^{(d)})$.
    \item Let $\vecg:= (f_1,\dots,f_{\ell-1},g_{\ell},\dots,g_{d})$ be the shadow of $\vecf$ in $\vecS^{(\ell)}$, $\vecg$ must be consistent with $\vecf$ on $\vecf^{(1)},\dots,\vecf^{(\ell)}$, and maps the elements in $\vecS^{(\ell)}$ to $\bot$. In other words, $\hat{\vecf}^{(\ell)}$ is totally blocked by $\bot$ when given $\vecg$.  
    \item For $\ell = 1,\dots,d$, conditioned on $\vecf^{(1)},\dots,\vecf^{(\ell)}$, the function $\hat{f}^{(\ell)}_{j}$ is still drawn uniformly randomly from $\mathcal{P}(S_{j}^{(\ell)},S_{j+1}^{(\ell)})$ for $j=\ell,\dots,d-1$ according to the definition of $\D(f,d)$.
\end{itemize}
\end{observation}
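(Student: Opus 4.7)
The plan is to verify each of the three clauses of Observation 1 directly from the definitions in Section 5, treating the statement as a bookkeeping exercise that sets up notation for the later proofs rather than a deep lemma. Since the excerpt ends with a three-part observation, I would structure the argument as three short unfolding arguments and make sure the partition of $\vecf$ into $(\vecf^{(1)},\dots,\vecf^{(d+1)})$ is consistent.

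For the first clause, I would start from the construction of $f_d$ in Definition~\ref{def:shuffle}: $f_d$ agrees with the core $f_d^*$ on $S_d$ and equals $\bot$ elsewhere. Definition~\ref{def:S} forces $S_d\subseteq S_d^{(\ell)}$ for every $\ell\in\{0,\dots,d\}$, so in particular $S_d\subseteq S_d^{(d)}$. Because $\hat f_d^{(d)}$ is by Definition~\ref{def:shaving_f} the restriction of $f_d$ to $S_d^{(d)}$, the restriction of $\hat f_d^{(d)}$ to $S_d$ recovers $f_d^*$, which proves $f_d^*\subseteq \hat f_d^{(d)}=\vecf^{(d+1)}$.

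For the second clause, I would expand the shadow definition. By construction, $g_j(x)=f_j(x)$ whenever $x\notin S_j^{(\ell)}$ and $g_j(x)=\bot$ otherwise. The pieces $f_j^{(k)}$ (for $k\le \ell$) describe $f_j$ on $S_j^{(k-1)}\setminus S_j^{(k)}$, so collecting them over $k=1,\dots,\ell$ exactly reconstitutes $f_j$ on $S_j^{(0)}\setminus S_j^{(\ell)}$; together with the fully revealed lower-level maps inside $\vecf^{(1)},\dots,\vecf^{(\ell-1)}$ this gives precisely the part of $\vecf$ that lives outside $\vecS^{(\ell)}$. Since $\vecg$ agrees with $\vecf$ off $\vecS^{(\ell)}$, it agrees with $\vecf^{(1)},\dots,\vecf^{(\ell)}$; and since $\hat{\vecf}^{(\ell)}$ is by definition $\vecf$ restricted to $\vecS^{(\ell)}$, every point there is mapped by $\vecg$ to $\bot$.

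For the third clause, I would use the symmetry of the uniform measure on $\mathcal{P}(\Z_2^{(d+2)n},\Z_2^{(d+2)n})$: if $\pi$ is a uniformly random permutation of a finite set $U$, then conditioned on its restriction to $U\setminus A$, the restriction of $\pi$ to $A$ is uniformly distributed over bijections from $A$ to $\pi^{-1}(A)$'s complementary image. Applied to each $f_j$ with $A=S_j^{(\ell)}$, and using the defining property $f_j(S_{j-1}^{(\ell)})=S_j^{(\ell)}$ (which forces the image of $S_j^{(\ell)}$ under $f_j$ to be $S_{j+1}^{(\ell)}$), the conditional law of $\hat f_j^{(\ell)}$ given everything outside is uniform on $\mathcal{P}(S_j^{(\ell)},S_{j+1}^{(\ell)})$. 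I would note that $\vecf^{(1)},\dots,\vecf^{(\ell)}$ together reveal nothing beyond the values of $f_j$ outside $S_j^{(\ell)}$ (plus choices of the sets themselves, which are independent of the residual randomness by construction), so the conditional distribution is of the claimed form for each $j\in\{\ell,\dots,d-1\}$. The main obstacle is not any real difficulty but simply keeping the indices and the nested-set inclusions straight; in particular I would take care to point out that $S_{j+1}^{(\ell)}=f_j(S_j^{(\ell)})$ already by Definition~\ref{def:S}, so the ``target set'' in the uniform distribution is the right one and no extra argument about the marginal law of $S_{j+1}^{(\ell)}$ is needed.
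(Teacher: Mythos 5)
Your proposal is correct. The paper states Observation~\ref{ob:1} without a proof --- it is meant to be read off directly from Definitions~\ref{def:shuffle}, \ref{def:uniform_distribution}, \ref{def:S}, and~\ref{def:shaving_f} --- and your three-clause unfolding is exactly the intended argument: clause~1 from $S_d\subseteq S_d^{(d)}$ and the fact that $\hat f_d^{(d)}$ is $f_d$ restricted to $S_d^{(d)}$; clause~2 from telescoping $S_j^{(0)}\setminus S_j^{(\ell)}=\bigcup_{k=1}^{\ell}\bigl(S_j^{(k-1)}\setminus S_j^{(k)}\bigr)$ and noting the shadow $\vecg$ agrees with $\vecf$ exactly off $\vecS^{(\ell)}$; clause~3 from exchangeability of a uniformly random permutation conditioned on its restriction to the complement of $S_j^{(\ell)}$, which applies because $f_0,\dots,f_{d-1}$ are drawn independently and uniformly under $\D(f,d)$ and $j\le d-1$ excludes the (non-random) core function. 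You are careful about the chain constraint $f_j(S_j^{(\ell)})=S_{j+1}^{(\ell)}$ that identifies the target set, and about the fact that the hidden sets are generated by $f_0,\dots,f_{j-1}$ and auxiliary coins, not by $f_j|_{S_j^{(\ell)}}$, which is what makes the conditional law cleanly uniform. Two small bookkeeping remarks: the lower-level functions $f_0,\dots,f_{\ell-1}$ are completed only after including $\vecf^{(\ell)}$ (which carries $\hat f_{\ell-1}^{(\ell-1)}$), not after $\vecf^{(1)},\dots,\vecf^{(\ell-1)}$ as your phrasing suggests; and the constraint you quote from Definition~\ref{def:S} carries a typo in the paper (the intended relation is $f_{j-1}(S_{j-1}^{(\ell)})=S_j^{(\ell)}$, equivalently $f_j(S_j^{(\ell)})=S_{j+1}^{(\ell)}$, as the procedures make explicit), which you already read in the corrected sense. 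Neither point affects the validity of your argument.
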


\begin{remark}\label{remark:uncorrelation}
In this work, we will say a quantum state $\rho$ or a classical bit string $\bar{s}$ is \textbf{uncorrelated} to $\vecf^{(\ell)}$. This means that if we replace $\vecf^{(\ell)}$ by any other function, the process which outputs $\rho$ or $\bar{s}$ will not change the output distribution. On the other hand, if a quantum state $\rho$ or a bit string $\bar{s}$ is correlated to $\vecf^{(\ell)}$, then we will assume a process which is given $\rho$ (or $\bar{s}$) knows everything about $\vecf^{(\ell)}$ without loss of generality.     
\end{remark}

\subsection{Semi-classical Shuffling Oracle}
In this section, we are going to combine the concepts of ``semi-classical'' oracle introduced in~\cite{AHU18} and the shuffling oracle together.

\begin{definition}[$U^{\vecf\setminus \vecS^{(\ell)}}$ ]~\label{def:semi_oracle_2}
Let $f$ be an arbitrary function and $\vecf$ be a random $(d,f)$-shuffling of $f$. Let $\Sb:= (\vecS^{(0)},\dots,\vecS^{(d)})$ be a sequence of hidden sets.  Let $U$ be single-depth quantum circuit. For $\ell\in [d]$, we define $U^{\vecf\setminus \vecS^{(\ell)}}$ to be an unitary operating on registers $(\regR, \regI)$ where $\regI$ is a single-qubit register.  $U^{\vecf\setminus \vecS^{(\ell)}}$ simulates $\vecf U$ and that: 
\begin{itemize}
    \item[] Before applying $\vecf$, $U^{\vecf\setminus \vecS^{(\ell)}}$ first applies $U_{\vecS^{(\ell)}}$ on $(\regR,\regI)$ and then performs $\vecf$. Here $U_{\vecS^{(\ell)}}$ is defined by: 
    \[
        U_{\vecS^{(\ell)}}\ket{(\ell,\X_{\ell}),\dots,(d,\X_d)}_{\regR}\ket{b}_{\regI}:=
        \left\{
            \begin{array}{ll}
            U_{\vecS^{(\ell)}}\ket{(\ell,\X_{\ell}),\dots,(d,\X_d)}\ket{b} \mbox{ if every }\X_i\cap S_i^{(\ell)}=\phi,&\\
            U_{\vecS^{(\ell)}}\ket{(\ell,\X_{\ell}),\dots,(d,\X_d)}\ket{b+1\mod 2} \mbox{ otherwise.}&
        \end{array}
        \right.
    \]
\end{itemize}

\end{definition}

In other words, for any state $\ket{\psi}_{\regR,\regI}$,
\begin{align*}
    & U^{\vecf\setminus \vecS^{(\ell)}}\ket{\psi} := \vecf U_{\vecS^{(\ell)}} U \ket{\psi}
\end{align*}

In the following, we define a quantity which is the probability that the parallel queries are in a particular hidden set $\vecS^{(\ell)}$.
\begin{definition}[$\Pr(find\; \vecS^{(k+1)}: U^{\vecf\setminus\vecS^{(k+1)}},\rho)$]\label{def:find_prob_2}
Let $k,d\in \mathbb{N}$ and $k+1<d$.  Let $U$ be a single-depth quantum circuit and $\rho$ be any input state. We define 
\[
    \Pr[find\; \vecS^{(k+1)}: U^{\vecf\setminus \vecS^{(k+1)}},\rho]:=\E\left[\Tr\left(\left(I_{\regR}\otimes(I-\opro{0}{0})\right)_{\regI}\circ U^{\vecf\setminus \vecS^{(k+1)}}\circ\rho\right)\right].
\]
\end{definition}

Following Def.~\ref{def:find_prob_2}, we let $\ket{\psi}$ be a pure state and $U$ be a single-depth quantum circuit. Then, 
\[
U^{\vecf\setminus \vecS^{(\ell)}}\ket{\psi}_{\regR}\ket{0}_{\regI} :=\ket{\phi_{0}}_{\regR}\ket{0}_{\regI} +  \ket{\phi_{1}}_{\regR}\ket{1}_{\regI} 
\]
and $\Pr[find\; \vecS^{(\ell)}: U^{\vecf\setminus \vecS^{(\ell)}},\ket{\psi}] = \E[\|\ket{\phi_{1}}_{\regR}\|^2]$. Note that $\ket{\phi_0}$ and $\ket{\phi_{1}}$ are orthogonal by the fact that $\ket{\phi_0}$ involves no query to $\vecS^{(\ell)}$ but $\ket{\psi_1}$ does. Therefore, 
\begin{eqnarray*}
    \vecf U\ket{\psi} = \ket{\phi_{0}} +  \ket{\phi_{1}}.\label{eq:o2h_core}
\end{eqnarray*}

\subsection{Oneway-to-hiding (O2H) lemma for the shuffling oracle}

Here, we extend the Oneway-to-hiding lemma (O2H lemma) by Ambainis et al.~\cite{AHU18} to the shuffling oracle. Briefly, the O2H lemma shows that for any two functions $g$ and $h$, the probability for a quantum algorithm to distinguish them is bounded by the probability that the quantum algorithm ever "find" an element in the input domain that $g$ and $h$ disagree with each other times the depth of the quantum algorithm.

\begin{lemma}[O2H lemma for the shuffling oracle]\label{lem:o2h}
Let $k,d\in \mathbb{N}$ satisfying that $k<d$. Let $U$ be any single depth quantum circuit and $\rho$ be any input state. Let $f$ be any function from $\Z_2^n$ to $\Z_2^{n}$. Let $\vecf$ be a random $(d,f)$-shuffling of $f$. Let $\Sb:=(\vecS^{(0)},\dots,\vecS^{(d)})$ be a sequence of random hidden sets as defined in Def.~\ref{def:S}. Let $\vecg$ be the shadow of $\vecf$ in $\vecS^{(k)}$. Then, for any binary string $t$,
\begin{eqnarray*}
    |\Pr[\Pi_{0/1}\circ \vecf  U(\rho) = t] - \Pr[\Pi_{0/1}\circ \vecg  U(\rho) = t]|&\leq& B(\vecf U(\rho),\vecg  U(\rho)) \\
    &\leq& \sqrt{2\Pr[find\;\vecS^{(k)}: U^{\vecf\setminus \vecS^{(k)}},\rho]}, 
\end{eqnarray*}
where $\Pi_{0/1}$ is the measurement in the standard basis. Here the probability is over $\vecf$, $\Sb$, and the randomness of the quantum mechanism. 
\end{lemma}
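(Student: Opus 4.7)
The first inequality is immediate from Claim~\ref{lem:prob_to_norm}, which already bounds the distinguishing advantage by the Bures distance $B(\vecf U(\rho),\vecg U(\rho))$. So the entire task is to bound this Bures distance by $\sqrt{2\Pr[find\;\vecS^{(k)}]}$. My plan is the standard O2H-style argument, adapted to the parallel-query shuffling setting: first carry out the analysis for pure $\rho$ and a fixed $(\vecf,\Sb)$, then lift to the mixed/random case via joint concavity of fidelity.

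For the pure, fixed-oracle step, write $U\ket{\psi} = \sum_{\X}\alpha_{\X}\ket{\X}_{\regR_Q}\ket{0}_{\regR_N}\ket{w_{\X}}_{\regR_W}$ with $\X=(\X_\ell,\dots,\X_d)$ ranging over parallel query tuples. Split the sum into the \emph{disjoint} part ($\X_i\cap S_i^{(k)}=\emptyset$ for all $i$) and the \emph{hitting} part (some $\X_i\cap S_i^{(k)}\neq\emptyset$), and define
\[
\ket{\phi_0} := \sum_{\X\text{ disjoint}}\alpha_{\X}\ket{\X,\vecf(\X),w_{\X}},\qquad \ket{\phi_1} := \sum_{\X\text{ hit}}\alpha_{\X}\ket{\X,\vecf(\X),w_{\X}},
\]
and let $\ket{\phi_1'}$ be the same hitting sum but with $\vecf(\X)$ replaced by $\vecg(\X)$. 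By construction $\vecf U\ket{\psi}=\ket{\phi_0}+\ket{\phi_1}$ and $\vecg U\ket{\psi}=\ket{\phi_0}+\ket{\phi_1'}$, since $\vecg$ and $\vecf$ agree off $\vecS^{(k)}$ while $\vecg$ outputs $\bot$ on $\vecS^{(k)}$. Moreover, from Def.~\ref{def:semi_oracle_2} one sees directly that $U^{\vecf\setminus\vecS^{(k)}}\ket{\psi}\ket{0}_{\regI}=\ket{\phi_0}\ket{0}_{\regI}+\ket{\phi_1}\ket{1}_{\regI}$, so for this fixed instance $\|\ket{\phi_1}\|^2$ equals the find probability.

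Now I need orthogonality: $\langle\phi_0|\phi_1\rangle=\langle\phi_0|\phi_1'\rangle=0$ because the query register values are in disjoint subsets (hitting vs.\ non-hitting), and $\langle\phi_1|\phi_1'\rangle=0$ because on every hitting $\X$ the answer register basis states $\ket{\vecf(\X)}$ and $\ket{\vecg(\X)}$ differ (at least one coordinate is an honest value versus $\bot$). Consequently $\langle \vecf U\psi|\vecg U\psi\rangle = \|\phi_0\|^2$, which is real and nonnegative, so the pure-state fidelity equals $\|\phi_0\|^2 = 1-\|\phi_1\|^2$, giving
\[
B(\vecf U\ket{\psi},\vecg U\ket{\psi})^2 = 2-2\|\phi_0\|^2 = 2\|\phi_1\|^2.
\]
This is the identity that drives the whole lemma; verifying the three orthogonality conditions (in particular $\langle\phi_1|\phi_1'\rangle=0$) is the one place to be careful.

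Finally, to handle mixed $\rho$ and the expectations over $\vecf\sim\D(f,d)$ and over the random hidden sets $\Sb$, I purify $\rho$ on an auxiliary register that $U,\vecf,\vecg$ all act as identity on; the pure-state equality above then gives $B^2=2\|\phi_1\|^2$ for each fixed $(\vecf,\Sb)$. By joint concavity of fidelity applied to the mixture over $(\vecf,\Sb)$,
\[
F\bigl(\vecf U(\rho),\,\vecg U(\rho)\bigr)\;\ge\;\E_{\vecf,\Sb}\bigl[F(\vecf U\rho\vecf^\dagger,\vecg U\rho\vecg^\dagger)\bigr]\;=\;\E\bigl[1-\|\phi_1\|^2\bigr].
\]
Rearranging and using Def.~\ref{def:find_prob_2} yields $B(\vecf U(\rho),\vecg U(\rho))^2 \le 2\E[\|\phi_1\|^2] = 2\Pr[find\;\vecS^{(k)}:U^{\vecf\setminus\vecS^{(k)}},\rho]$, and taking square roots gives the claimed bound. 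The only potentially delicate point is the orthogonality $\langle\phi_1|\phi_1'\rangle=0$ in the presence of the $\bot$ symbol; everything else is a direct specialization of the Ambainis--Hamoudi--Unruh argument to our oracle, with the shadow oracle $\vecg$ playing the role of the reprogrammed oracle and the parallel-query ``hit'' event replacing the single-query one.
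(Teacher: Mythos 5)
Your proposal is correct and follows essentially the same route as the paper's own proof: decompose $\vecf U\ket{\psi}$ and $\vecg U\ket{\psi}$ into a shared ``non-hitting'' part $\ket{\phi_0}$ and mutually orthogonal ``hitting'' parts (with the orthogonality of the two hitting branches coming from the $\bot$ replacement), then lift from pure/fixed-oracle to mixed/random via joint concavity of fidelity. The only cosmetic difference is that you compute the fidelity directly from the inner product $\ipro{\vecf U\psi}{\vecg U\psi}=\|\phi_0\|^2$, obtaining the exact identity $B^2=2\|\phi_1\|^2$ for the pure fixed case, whereas the paper bounds the two-norm distance $\|\ket{\psi_f}-\ket{\psi_g}\|^2$ and then invokes $F\geq 1-\tfrac12\|\cdot\|^2$; these are interchangeable, and your three orthogonality checks are exactly the ones the paper needs as well. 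One small notational slip: the parallel query tuple should range over all $(\X_0,\dots,\X_d)$, not $(\X_\ell,\dots,\X_d)$, though this is harmless since queries to $f_0,\dots,f_{k-1}$ are answered identically by $\vecf$ and $\vecg$ and hence cancel in the difference.
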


\begin{proof}
We will prove the case where the initial state is a pure state and then the general case directly follows from the concavity of the mixed state.  For simplicity, we denote 
$\Pr[find\;\vecS^{(k)}: U^{\vecf\setminus \vecS^{(k)}},\rho]$ as $P_{find}$.

Fix $\vecf$ and $\vecS^{(k)}$. We let $\ket{\psi}$ be any initial state and 
\[
    \vecf U_{\vecS^{(k)}} U \ket{\psi}_{\regR}\ket{0}_{\regI}:= \ket{\phi_0}_{\regR}\ket{0}_{\regI} + \ket{\phi_1}_{\regR}\ket{1}_{\regI}, 
\]
where $\ket{\phi_0}$ and $\ket{\phi_1}$ are two unnormalized states. $\ket{\phi_0}$ and $\ket{\phi_1}$ are orthogonal due to the fact that all queries $\ket{\phi_0}$ consists of are not in $\vecS^{(k)}$, while the queries $\ket{\phi_1}$ performs are elements in $U_{\vecS^{(k)}}$. This, therefore, implies that 
\[
\ket{\psi_f}:=\vecf U\ket{\psi} = \ket{\phi_0}+\ket{\phi_1}\mbox{ and}
\]
Similarly, we let 
\[
    \vecg UU_{\vecS^{(k)}}\ket{\psi}_{\regR}\ket{0}_{\regI}:= \ket{\phi_0}_{\regR}\ket{0}_{\regI} + \ket{\phi^{\bot}_1}_{\regR}\ket{1}_{\regI}, 
\]
and due to the same fact that $\ket{\phi_0}$ and $\ket{\phi^{\bot}_1}$ are orthogonal, we have that
\[
\ket{\psi_g}:= \vecg U\ket{\psi} =\ket{\phi_0}+\ket{\phi^{\bot}_1}.
\]
Here, $\ket{\phi^{\bot}_1}$ and $\ket{\phi_1}$ are orthogonal since $\vecg$ maps all elements in $\vecS^{(k)}$ to $\bot$. 

Let $P_{find}(\vecf,\vecS^{(k)})$ be the probability that a standard-basis measurement in the register $\regI$ of $\vecf U_{\vecS^{(k)}}U\ket{\psi}_{\regR}\ket{0}_{\regI}$ returns $1$, which is equal to $\|\ket{\phi_1}\|^2$. Consider the two-norm distance between $\ket{\psi_f}$ and $\ket{\psi_g}$,
\begin{eqnarray*}
    \|\ket{\psi_f}-\ket{\psi_g}\|^2 &=& \|\ket{\phi_1}-\ket{\phi_1^{\bot}}\|^2 \\
    &=&\|\ket{\phi_1}\|^2 + \|\ket{\phi_1^{\bot}}\|^2\\
    &\leq& 2\|\ket{\phi_1}\|^2 = 2P_{find}(\vecf,\vecS^{(k)}).
\end{eqnarray*}
The second equality follows from the fact that $\ket{\phi_1}$ and $\ket{\phi^{\bot}_1}$ are orthogonal. The inequality is because that $\|\ket{\phi_1}\|^2 = \|\ket{\phi^{\bot}_1}\|^2 = 1-\|\ket{\phi_0}\|^2$. 

Then, consider the case that $\vecS^{(k)}$ and $\vecf$ are random. The output states of $\vecf U$ and $\vecg U$ becomes  
\begin{align*}
    &\rho_f := \sum_{\vecf,\vecS^{(k)}} \Pr[\vecf\wedge \vecS^{(k)}] \opro{\psi_f}{\psi_f},\mbox{ and }\\ 
    &\rho_g := \sum_{\vecf,\vecS^{(k)}} \Pr[\vecf\wedge \vecS^{(k)}] \opro{\psi_g}{\psi_g}.
\end{align*}

Consider the fidelity of these two mixed states. 
\begin{eqnarray*}
    F(\rho_f,\rho_g) &&=F\left(\sum_{\vecf,\vecS^{(k)}} \Pr[\vecf\wedge \vecS^{(k)}]\opro{\psi_f}{\psi_f},\, \sum_{\vecf,\vecS^{(k)}} \Pr[\vecf\wedge \vecS^{(k)}]\opro{\psi_g}{\psi_g}   \right)\\
    &&\geq \sum_{\vecf,\vecS^{(k)}} \Pr[\vecf\wedge \vecS^{(k)}] F\left(\opro{\psi_f}{\psi_f}, \opro{\psi_g}{\psi_g}   \right)\\
    &&\geq 1-\frac{1}{2}\cdot \sum_{\vecf,\vecS^{(k)}} \Pr[\vecf\wedge \vecS^{(k)}] \|\ket{\psi_f}-\ket{\psi_g}\|^2 \\
    && \geq 1-\frac{1}{2}\cdot \sum_{\vecf,\vecS^{(k)}} \Pr[\vecf\wedge \vecS^{(k)}]2P_{find}(\vecf,\vecS^{(k)}). \\
    &&\geq 1- P_{find}. 
\end{eqnarray*}

Then, 
\begin{eqnarray*}
B(\rho_f,\rho_g) &=& \sqrt{2 - 2F(\rho_f,\rho_g)}\\
&\leq& \sqrt{2-2(1-P_{find})} = \sqrt{2P_{find}}. 
\end{eqnarray*}

Finally, by Lemma~\ref{lem:prob_to_norm}, 
\[
    |\Pr[\Pi_{0/1}\circ \vecf  U(\rho) = t] - \Pr[\Pi_{0/1}\circ \vecg  U(\rho) = t]| \leq \sqrt{2P_{find}}. 
\]

\end{proof}


\subsection{Bounding the finding probability}

As we have just shown that the probability of distinguishing $\vecf$ and its shadow can be bounded by the probability of finding the ``shadow''. Then, we would like to show how to bound the finding probability. 

Follow the previous section, we let $\vecf$ be a random ($d,f$)-shuffling of $f$ and $\Sb:=(\vecS^{(0)},\dots,\vecS^{(d)})$ be a sequence of random hidden sets as defined in Definition~\ref{def:S} (which could be chosen according to arbitrary distribution). We show that the finding probability of $\vecS^{(k)}$ is bounded.  
\begin{lemma}\label{lem:find}
Suppose $\Pr[x\in S_i^{(k)}|x\in S_i^{(k-1)}] \leq p$ for $i=k,\dots,d$. Then for any single-depth quantum circuit $U$ and initial state $\rho$, which are promised to be uncorrelated to $\hat{\vecf}^{(k-1)}$ and $\vecS^{(k)}$,\footnote{Here, $\rho$ and $U$ can be arbitrarily correlated to $\vecf^{(1)},\dots,\vecf^{(k-1)}$, and $\vecf$ and $\vecS^{(k)}$can be sampled arbitrarily.}
\[
    \Pr[find\; \vecS^{(k)}: U^{\vecf\setminus \vecS^{(k)}},\rho]\leq q\cdot p, 
\]
where $q$ is the number of queries $U$ performs.
\end{lemma}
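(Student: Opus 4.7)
The plan is to expand the state $U\ket{\psi}$ in the computational basis of the query registers, identify the ``find'' event as a subset of these basis branches, and then apply a union bound over individual query positions using the per-element density hypothesis on $\vecS^{(k)}$.

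First, by convexity of the finding probability in the initial state, it suffices to treat $\rho = \opro{\psi}{\psi}$ pure. Write
\[
U\ket{\psi} \;=\; \sum_{\X_\ell,\ldots,\X_d,\,w} c(\X_\ell,\ldots,\X_d,w)\, \bigotimes_{i=\ell}^{d}\ket{i,\X_i}_{\regR}\ket{w}_{W},
\]
where each $\X_i$ is the set of parallel query positions addressed to $f_i$ and $\sum_{i} |\X_i| \leq q$. By Definition \ref{def:semi_oracle_2}, the controlled flag $U_{\vecS^{(k)}}$ flips the $\regI$ register precisely on those basis states whose query tuple $\vec{\X}$ intersects $\vecS^{(k)}$ in at least one coordinate, so Definition \ref{def:find_prob_2} gives
\[
\Pr\!\left[\text{find } \vecS^{(k)} : U^{\vecf\setminus \vecS^{(k)}},\rho\right] \;=\; \E\!\left[\sum_{\substack{\vec{\X},w:\\ \exists i,\;\X_i\cap S_i^{(k)}\neq \emptyset}} |c(\vec{\X},w)|^2\right],
\]
where the expectation ranges over the joint sampling of $\vecf$ and $\Sb$.

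Second, I would invoke the uncorrelated hypothesis. Since $U$ and $\ket{\psi}$ depend only on $\vecf^{(1)},\ldots,\vecf^{(k-1)}$ (the mappings strictly outside $\vecS^{(k-1)}$) and not on $\hat{\vecf}^{(k-1)}$ or $\vecS^{(k)}$, we may condition on an arbitrary realization of these ``outside'' mappings; the coefficients $c(\vec{\X},w)$ then become deterministic, while $\vecS^{(k)}$ still satisfies $\Pr[x\in S_i^{(k)} \mid x \in S_i^{(k-1)}]\leq p$ for each $i\geq k$. Note that if $x\notin S_i^{(k-1)}$ then $x\notin S_i^{(k)}$ deterministically, so unconditionally $\Pr[x\in S_i^{(k)}]\leq p$. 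Hence for each fixed query tuple $\vec{\X}$,
\[
\Pr\!\left[\exists i\geq k:\; \X_i\cap S_i^{(k)}\neq \emptyset\right] \;\leq\; \sum_{i=k}^{d} \sum_{x\in \X_i} \Pr\!\left[x\in S_i^{(k)}\right] \;\leq\; \Bigl(\sum_{i=k}^d |\X_i|\Bigr)\cdot p \;\leq\; q\cdot p.
\]
Substituting back and using $\sum_{\vec{\X},w} |c(\vec{\X},w)|^2 = 1$ yields $\Pr[\text{find } \vecS^{(k)}] \leq qp$, as desired.

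The main obstacle is bookkeeping around the independence structure, rather than any deep estimate. Concretely, one must verify that the ``uncorrelated'' hypothesis really does let us freeze $c(\vec{\X},w)$ before averaging over $\vecS^{(k)}$, even though $\vecS^{(k)}$ may itself be sampled as a complicated function of $\Sb^{(<k)}$ and $\hat{\vecf}^{(k-1)}$. The point is that the density bound is stated as a pointwise conditional probability, so it survives any further conditioning that is independent of $U$ and $\rho$; once that is observed, the remaining calculation is just the union bound displayed above, linear in the number of query positions $q$.
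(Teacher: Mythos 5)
Your proof is correct and takes essentially the same route as the paper's: expand the state in the computational basis of the query registers, observe that the find event corresponds to branches whose queries intersect $\vecS^{(k)}$, and apply a union bound over the at most $q$ query positions using the per-element density bound $\Pr[x\in S_i^{(k)}]\leq p$. The only cosmetic difference is that you explicitly justify the unconditional bound via $x\notin S_i^{(k-1)}\Rightarrow x\notin S_i^{(k)}$, whereas the paper dispatches it with a ``without loss of generality'' clause; the substance is identical.
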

\begin{proof}
It is sufficient to prove the case where $\rho$ is a pure state. Let $\ket{\psi}$ be the initial state and be promised to be uncorrelated to $\hat{\vecf}^{(k-1)}$. We represent $\vecf  U_{\vecS^{(k)}}U\ket{\psi}$ as 
\[
    \sum_{\X_0,\dots,\X_d} c(\X_0,\dots,\X_d) \left(\bigotimes_{i=0}^d\ket{i,\X_i,f_i(\X_i)}\right)\ket{w(\X_0,\dots,\X_d)}\ket{b(\X_1,\dots,\X_d)}_{\regI}
\]
where $b(\X_1,\dots,\X_d)=1$ if there exists $i\in[\ell,d]$ such that $\X_i\cap S_i^{(k)}\neq \{\phi\}$; otherwise, $b(\X_1,\dots,\X_d)=0$. We can assume all queries are in $\vecS^{(k-1)}$ without loss of generality. Since both $U$ and $\ket{\psi}$ are uncorrelated to $\vecS^{(k)}$, the probability that $b(\X_1,\dots,\X_d)= 1$ is at most $p\cdot(\sum_{i=k}^d |\X_i|)$ for all $\X_1,\dots,\X_d$ by union bound. Therefore,
\begin{eqnarray*}
    &&\Pr[find\; \vecS^{(k)}: U^{\vecf\setminus \vecS^{(k)}},\ket{\psi}]\\ 
    &&= \E\left[\left\|\sum_{\X_0,\dots,\X_d: b(\X_0,\dots,\X_d) =1}c(\X_0,\dots,\X_d) \left(\bigotimes_{i=0}^d\ket{i,\X_i,f_i(\X_i)}\right)\ket{w(\X_0,\dots,\X_d)} \right\|^2\right]\\
    &&= \sum_{\X_0,\dots,\X_d: b(\X_0,\dots,\X_d) =1} |c(\X_0,\dots,\X_d)|^2 \cdot \Pr[\bigvee_{i=k}^d (\X_i\cap S_i^{(k)} \neq \{\phi\})]\\
    &&\leq q\cdot p
\end{eqnarray*}
for $q$ the number of queries $U$ performs. The second equality follows from the fact that for different set of queries, $\ket{i,\X_i,f_i(\X_i)}$'s are orthogonal. The last inequality follows from the union bound.     
\end{proof}

\section{The \texorpdfstring{$\SSP{d}$}{Lg} is hard for \texorpdfstring{$\class{QNC_d}$}{Lg}}\label{sec:qcd}
We start from showing that the $\SSP{d}$ is intractable for any $\class{QNC_d}$ circuit as a warm-up.  We first prove the main theorem in this section.
\begin{theorem}\label{thm:qcd}
Let $n,d\in \mathbb{N}$. Let $(\A,\rho)$ be any $d$-depth quantum circuit and initial state. Let $f$ be a random Simon's function from $\Z_2^n$ to $\Z_2^{n}$ with hidden shift $s$. Give $\A$ the access to the shuffling oracle $\Oru$. Let $\vecf$ be the $(d,f)$-shuffling sampled from $\Oru$, then
\begin{align*}
    &\Pr[\A^{\vecf}(\rho) = s] \leq d\cdot\sqrt{\frac{ \poly(n)}{2^n}}+\frac{1}{2^n}.
\end{align*}
    
\end{theorem}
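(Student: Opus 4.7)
The plan is to carry out the Russian-nesting-doll hybrid argument sketched in Section~2, using the O2H lemma (Lemma~\ref{lem:o2h}) together with the finding-probability bound (Lemma~\ref{lem:find}) to peel off one layer of the oracle per quantum depth, until the final state is independent of the core function $\hat{f}_d^{(d)}$, which is what encodes the shift~$s$.

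First, I would instantiate Definition~\ref{def:S} to sample nested hidden sets $\Sb=(\vecS^{(0)},\ldots,\vecS^{(d)})$ satisfying $|S_j^{(\ell)}|/|S_j^{(\ell-1)}|\leq 2^{-n}$ and $S_j\subseteq S_j^{(\ell)}$ for every $\ell\geq 1$ and $j\geq\ell$; this is feasible because $S_j$ is exponentially smaller than $S_j^{(\ell-1)}$. Using the shadow-function construction together with Definition~\ref{def:shaving_f}, for each $\ell\in\{0,1,\ldots,d\}$ I would define the hybrid
\[
H_\ell:\quad U_{d+1}\,\Ora_d\,U_d\cdots\Ora_1\,U_1,
\]
where $\Ora_i=\vecg_i$ (the shadow of $\vecf$ in $\vecS^{(i)}$) for $i\leq\ell$, and $\Ora_i=\vecf$ for $i>\ell$. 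Thus $H_0$ is the real experiment and $H_d$ is fully shadowed.

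The core of the argument is a bound on the distance between consecutive hybrids. For $H_{\ell-1}\to H_\ell$, Lemma~\ref{lem:o2h} applied at the $\ell$-th query yields distinguishing advantage at most $\sqrt{2\,\Pr[\mathrm{find}\,\vecS^{(\ell)}]}$. To apply Lemma~\ref{lem:find} I need the pre-query state and the preceding unitaries to be uncorrelated with $\hat{\vecf}^{(\ell-1)}$ and with the freshly sampled $\vecS^{(\ell)}$. This holds inductively: by Observation~\ref{ob:1}, the shadow oracles $\vecg_1,\ldots,\vecg_{\ell-1}$ reveal only $\vecf^{(1)},\ldots,\vecf^{(\ell-1)}$, so conditioned on the algorithm's view, $\hat{\vecf}^{(\ell-1)}$ remains uniform on its support and $\vecS^{(\ell)}$ remains a random subset of $\vecS^{(\ell-1)}$ of density at most $2^{-n}$. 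Lemma~\ref{lem:find} then gives $\Pr[\mathrm{find}\,\vecS^{(\ell)}]\leq \poly(n)/2^n$, because the $\class{QNC_d}$ circuit has polynomial width and the parallel query therefore touches at most $\poly(n)$ elements. Summing the resulting $\sqrt{\poly(n)/2^n}$ cost over all $d$ hybrid transitions gives
\[
\bigl|\Pr[\A^{\vecf}(\rho)=s]-\Pr[H_d=s]\bigr|\leq d\sqrt{\poly(n)/2^n}.
\]

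Finally, in $H_d$ every query has been shadowed, so the final state depends only on $\vecf^{(1)},\ldots,\vecf^{(d)}$ and is independent of $\hat{\vecf}^{(d)}=(\hat{f}_d^{(d)})$, which contains $f_d^*$ and therefore encodes $s$. Since $s$ is uniform over $\Z_2^n\setminus\{0\}$, the algorithm guesses $s$ correctly in $H_d$ with probability at most $1/2^n$, yielding the claimed bound after the triangle inequality. The main obstacle will be establishing the inductive uncorrelation hypothesis rigorously: one must verify that the shadow oracles do not implicitly leak information about $\hat{\vecf}^{(\ell-1)}$ beyond what is carried by $\vecf^{(1)},\ldots,\vecf^{(\ell-1)}$, and that the nested-subset construction of Definition~\ref{def:S} is compatible with re-sampling $\vecS^{(\ell)}$ as a fresh random subset at each step of the induction, so that Lemma~\ref{lem:find} applies verbatim.
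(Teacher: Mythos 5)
Your proposal matches the paper's proof essentially step for step: the same choice of nested hidden sets as in Procedure~\ref{fig:S_1} (an instantiation of Definition~\ref{def:S}), the same one-layer-at-a-time hybrid replacing $\vecf$ by the shadow $\vecg_i$ at position $i$, the same invocation of Lemma~\ref{lem:o2h} followed by Lemma~\ref{lem:find} to bound each hop by $\sqrt{\poly(n)/2^n}$, and the same final observation that the fully shadowed hybrid is independent of $f_d^*$ and hence guesses $s$ with probability at most $2^{-n}$. The ``main obstacle'' you flag at the end is exactly what the paper handles by its explicit induction showing $\Pr[x\in S_j^{(\ell)}\mid x\in S_j^{(\ell-1)}]=2^{-n}$ conditioned on $\vecf^{(1)},\dots,\vecf^{(\ell-1)}$, so no new ideas are missing.
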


\begin{proof}
We choose $\Sb=(\vecS^{(0)},\dots,\vecS^{(d)})$ according to Procedure~\ref{fig:S_1} and represent $\vecf$ in form $(\vecf^{(1)},\dots,\vecf^{(d+1)})$ regarding to $\Sb$. Let $\vecg_{\ell}$ be the shadow function of $\vecf$ in $\vecS^{(\ell)}$ for $\ell\in[d]$. We define
\[
    \A^{\vecg} := U_{d+1}\vecg_{d} U_d\cdots \vecg_{1}U_1.
\]
Then, for any initial state $\rho_0$ which is uncorrelated to $\vecf$, 
\begin{eqnarray*}
&&|\Pr[\A^{\vecf}(\rho_0)=s] - \Pr[\A^{\vecg}(\rho_0)=s]|\\
&&= |\Pr[U_{d+1}\vecf U_d\cdots \vecf  U_1(\rho_0)=s]-\Pr[U_{d+1} \vecg_{d}   U_d\cdots \vecg_{1}  U_1(\rho_0)=s]|\\
&&\leq |\Pr[U_{d+1}  \vecf U_d\cdots \vecf  U_2\vecf U_1(\rho_0)=s] - \Pr[U_{d+1}  \vecf U_d\cdots \vecf  U_2\vecg_1U_1(\rho_0)=s]|\\ &&+|\Pr[U_{d+1}  \vecf U_d\cdots \vecf  U_2\vecg_1 U_1(\rho_0)=s] - \Pr[U_{d+1}  \vecg_d U_d\cdots \vecg_2  U_2\vecg_1U_1(\rho_0)=s]|\\
&&\leq \sum_{i=1}^{d}B(\vecf U_i(\rho_{i-1}),\rho_i)\\
&&\leq \sum_{i=1}^{d} \sqrt{2\Pr[find\; \vecS^{(i)}:U_i^{\vecf\setminus \vecS^{(i)}},\rho_{i-1}]}
\end{eqnarray*}
where $\rho_i:= \vecg_iU_i \rho_{i-1} U_i^{\dag}\vecg_i^{\dag}$ for $i\geq 1$.

\floatname{algorithm}{Procedure}
\begin{algorithm}[h]
    \begin{mdframed}[style=figstyle,innerleftmargin=10pt,innerrightmargin=10pt]
    Let $d,n\in \mathbb{N}$ and $f$ a random Simon's problem from $\Z_2^n\rightarrow \Z_2^{n}$. Consider $\vecf\sim\D(f,d)$, we construct $\Sb$ as follows:  
    \begin{itemize}
    \item Let $\vec{S}^{(0)}:=(S_0^{(0)},\dots,S_{d}^{(0)})$, where $S_j^{(0)} := \Z_2^{(d+2)n}$ for $j=0,\dots,d$.  
    \item For $\ell=1,\dots,d$, 
    \begin{enumerate}
        \item let $S_\ell^{(\ell)}$ be a subset chosen uniformly at random with the promise that $|S_\ell^{(\ell)}|/|S_\ell^{(\ell-1)}| =\frac{1}{2^n}$ and $S_\ell\subset S_\ell^{(\ell)}$;
        \item for $j=\ell+1,\dots, d$,  let $S_j^{(\ell)}:= \{f_{j-1}\circ\cdots\circ f_{\ell}(S_{\ell}^{(\ell)})\}$; 
        \item let $\vecS^{(\ell)} := (S_{\ell}^{(\ell)},\dots,S_{d}^{(\ell)})$.
    \end{enumerate}
    \item We then let $\Sb := (\vecS^{(0)},\dots,\vecS^{(d)})$.
    \end{itemize}
    \caption{The hidden sets for $\class{QNC_d}$}
    \label{fig:S_1}
    \end{mdframed}
\end{algorithm}

It is not hard to see that $ \Pr[\A^{\vecg}(\rho_0)=s]$ is at most $\frac{1}{2^n}$. This follows from the fact that 
$\vecg_1,\dots,\vecg_d$ does not contain information of $f^*_d$ and therefore $\A^{\vecg}(\rho_0)$ can do no better than guess. The rest to show is that $\Pr[find\; \vecS^{(i)}:U_i^{\vecf\setminus \vecS^{(i)}},\rho_{i-1}]$ is at most $\frac{\poly(n)}{2^n}$ for all $i\in [d]$. To prove it, we show that $\Pr[x\in S_j^{(\ell)}|x\in S_j^{(\ell-1)}]= \frac{1}{2^n}$ for $\ell=1,\dots,d$ and $j=\ell,\dots,d$. We prove it by induction on $\ell$. 

For the base case $\ell=1$, for all $j\in [d]$, and $x\in S_j^{(0)}$, 
\begin{eqnarray*}
\Pr[x\in S_{j}^{(1)}] &=& \Pr[x\in S_j]\Pr[x\in S_{j}^{(1)}| x\in S_j] + \Pr[x\notin S_j]\Pr[x\in S_{j}^{(1)}| x\notin S_j]\nonumber\\
&=&\Pr[x\in S_j] + (1-\Pr[x\in S_j])\Pr[x\in S_{j}^{(1)}| x\notin S_j]\nonumber\\
&=& \frac{1}{2^{(d+1)n}} + (1-\frac{1}{2^{(d+1)n}}) \frac{2^{(d+1)n}-2^n}{2^{(d+2)n}-2^n} = \frac{1}{2^n}.   
\end{eqnarray*}
The second equality is because that $x\in S_j$ implies $x\in S_j^{(1)}$ and the third inequality follows from the fact that $f_0,\dots,f_{d-1}$ are uniformly random one-to-one functions.

Suppose that the randomness holds for $\ell=k$. Note that $\rho_{k}$ could be correlated to $\vecf^{(1)},\dots,\vecf^{(k)}$, and therefore, we assume $\vecf^{(1)},\dots,\vecf^{(k)}$ are given to $\A$ without loss of generality as in Remark~\ref{remark:uncorrelation}. Then, given $\vecf^{(1)},\dots,\vecf^{(k)}$, for $j\geq k+1$ and $x\in S_j^{(k)}$ 
\begin{eqnarray*}
    \Pr[x\in S_{j}^{(k+1)}] &=& \Pr[x\in S_j]\Pr[x\in S_j^{(k)}| x\in S_j]\nonumber\\ 
    &&+\Pr[x\notin S_j]\Pr[x\in S_j^{(k)}| x\notin S_j]\nonumber\\
    &=& \Pr[x\in S_j] + (1-\Pr[x\in S_j])\Pr[x\in S_j^{(k)}| x\notin S_j]\nonumber\\ 
    &=& \frac{2^n}{2^{(d+2-k)n}} + (1-\frac{2^n}{2^{(d+2-k)n}}) \frac{2^{(d+1-k)n}-2^n}{2^{(d+2-k)n}-2^n} \\
    &=& \frac{1}{2^n}.\nonumber
\end{eqnarray*}
The second last equality follows from the fact that given $\vecf^{(1)},\dots,\vecf^{(k)}$, $\hat{f}_j^{(k)}$ is still an uniformly random one-to-one function for $j=k,\dots,d-1$.

Finally, $U_i$ and $\rho_{i-1}$ are uncorrelated to $\hat{\vecf}^{(i)}$.  
By Lemma~\ref{lem:find}, $\Pr[find\; \vecS^{(i)}:U_i^{\vecf\setminus \vecS^{(i)}},\rho_{i-1}]$ is at most $q_i\cdot \frac{1}{2^n}$ where $q_i$ is the number of queries $U_i$ performs. Therefore, 
\begin{eqnarray*}
    &&|\Pr[\A^{\vecf}(\rho_0)=s] - \Pr[\A^{\vecg}(\rho_0)=s]|\\
    && \leq\sum_{i=1}^{d} \sqrt{2\Pr[find\; \vecS^{(i)}:U_i^{\vecf\setminus \vecS^{(i)}},\rho_{i-1}]}\\ 
    &&\leq \sum_{i=1}^d \sqrt{\frac{2q_i}{2^n}} = d\cdot\sqrt{\frac{\poly(n)}{2^n}}. 
\end{eqnarray*}
\end{proof}

Theorem~\ref{thm:qcd} shows that the search $\SSP{d}$ is hard for any $\class{QNC_d}$ circuit. By following the same proof, we can show that for any $\class{QNC_d}$ circuit, the $\SSP{d}$ is also hard.  
\begin{theorem}
The $\SSP{d}$ cannot be decided by any $\class{QNC_d}$ circuit with probability greater than $\frac{1}{2}+d\cdot \sqrt{\frac{\poly(n)}{2^n}}$.  
\end{theorem}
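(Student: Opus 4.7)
The plan is to follow the hybrid structure of the proof of Theorem~\ref{thm:qcd} essentially verbatim, then replace the final ``guess-the-shift'' step by a symmetry observation: under the shadow oracle $\vecg$, the distribution of the circuit's view is identical in the Simon and the one-to-one case, so any $\class{QNC_d}$ algorithm interacting only with $\vecg$ decides the problem with probability exactly $1/2$.

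Concretely, I would first construct the hidden sets $\Sb = (\vecS^{(0)},\dots,\vecS^{(d)})$ via Procedure~\ref{fig:S_1} and define $\A^{\vecg} := U_{d+1}\vecg_d U_d \cdots \vecg_1 U_1$, where $\vecg_i$ is the shadow of $\vecf$ in $\vecS^{(i)}$. The same hybrid argument as in Theorem~\ref{thm:qcd} applies to any fixed output bit $b$:
$$|\Pr[\A^{\vecf}(\rho_0)=b] - \Pr[\A^{\vecg}(\rho_0)=b]| \leq \sum_{i=1}^{d}\sqrt{2\Pr[\mathrm{find}\;\vecS^{(i)}:U_i^{\vecf\setminus\vecS^{(i)}},\rho_{i-1}]}.$$
The same induction on $\ell$ gives $\Pr[x\in S_j^{(\ell)}\mid x\in S_j^{(\ell-1)}]=1/2^n$, so Lemma~\ref{lem:find} bounds each finding term by $\poly(n)/2^n$; hence the distinguishing advantage between $\A^{\vecf}$ and $\A^{\vecg}$ is at most $d\cdot\sqrt{\poly(n)/2^n}$ under each hypothesis on $f$.

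The key new ingredient is verifying that $\A^{\vecg}$ decides with probability at most $1/2$. By construction $\vecg_d$ maps every element of $S_d^{(d)}\supset S_d$ to $\bot$, so the shadow never reveals any value of the core function $f_d^*$; moreover, since $f_0,\dots,f_{d-1}$ are one-to-one, $|S_d|=|S_0|=2^n$ under both hypotheses, so the joint distribution of $\Sb$ and $(f_0,\dots,f_{d-1})$ (and hence of $\vecg$) is the same in either case. Consequently the two conditional acceptance probabilities of $\A^{\vecg}$ coincide, and combining with the hybrid bound via the triangle inequality yields the claimed bound $\tfrac{1}{2}+d\cdot\sqrt{\poly(n)/2^n}$. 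The main, and essentially only, obstacle beyond what is already done in Theorem~\ref{thm:qcd} is this symmetry check: one must confirm that none of the size parameters used to sample $\Sb$ or to define the shadow depends on whether $f$ is Simon or one-to-one. Because the sizes in Procedure~\ref{fig:S_1} are fixed without reference to $f_d^*$, and the shadow blanks out precisely the only place where $f_d^*$ enters $\vecf$, this check goes through cleanly.
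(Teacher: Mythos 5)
Your proposal is correct and follows the same approach as the paper: reuse the hybrid/O2H argument from the search version to bound the distance to $\A^{\vecg}$, then observe that $\A^{\vecg}$ cannot do better than guessing. The paper asserts this last step tersely (``$\A^{\vecg}$ has no information about $f$ and thus cannot do better than guess''), while you make it precise via the distributional-identity check that $\Sb$ and $\vecg$ depend only on $f_0,\dots,f_{d-1}$ and not on $f_d^*$, so the shadow view is identically distributed under both hypotheses; this is the intended reasoning, just spelled out.
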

\begin{proof}
We also consider the same shadow $\vecg$ in the proof of Theorem~\ref{thm:qcd}. Following that proof, for any $\rho$ and $\A$, 
\[
    |\Pr[\A^{\vecf}(\rho_0)=0] - \Pr[\A^{\vecg}(\rho_0)=0]|\leq d\cdot\sqrt{\frac{\poly(n)}{2^n}}. 
\]
Then, the rest to check is that $\A^{\vecg}$ cannot solve the $\SSP{d}$ with high probability. Similar to the case of the search $\SSP{d}$, since $\vecg_1,\dots,\vecg_d$ have the core function $f^*_d$ be blocked,  $\A^{\vecg}$ has no information about $f$ and thus cannot do better than guess. This implies that $\Pr[\A^{\vecf}(\rho_0)=0] \leq \frac{1}{2} + d\cdot\sqrt{\frac{\poly(n)}{2^n}} < 2/3$. 

\end{proof}

Therefore, the language defined in Def.~\ref{def:language} is also hard for $\class{QNC}$ circuit. 
\begin{corollary}
Let $\Ora$ and $\Lang(\Ora)$ be defined as in Def.~\ref{def:language}. $\Lang(\Ora)\notin \class{BQNC}^{\Ora}$
\end{corollary}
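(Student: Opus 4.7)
The plan is to prove the corollary by contradiction, directly reducing to the preceding theorem that $\SSP{d}$ is hard for $\class{QNC_d}$. The key observation is the asymptotic gap noted in the remark following Definition~\ref{def:language}: the oracle's depth parameter is $d(n) = n$, while any $\class{BQNC}$ algorithm has depth $O(\polylog n)$, so for large enough $n$ the algorithm's depth is strictly smaller than the shuffling depth.

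First I would assume, toward contradiction, that $\Lang(\Ora) \in \class{BQNC}^{\Ora}$. Unfolding the definition of $\class{BQNC} = \bigcup_k \class{BQNC}^k$, there exists a constant $k$ and a circuit family $\{C_n\}$ of depth $d'(n) = O(\log^k n)$ with oracle access to $\Ora$ that decides $\Lang(\Ora)$ with probability at least $2/3$. On input $1^n$, the relevant oracle is $\Ora^{f_n, n}_{\mathrm{unif}}$, i.e., the $(n,f_n)$-shuffling oracle with depth parameter $d(n) = n$, and deciding whether $1^n \in \Lang(\Ora)$ is exactly deciding whether $f_n$ is a Simon function, which is the $\SSP{n}$ problem.

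Next I would note that since $d'(n) = O(\log^k n) = o(n)$, for all sufficiently large $n$ we have $d'(n) \leq n$. Any circuit of depth $d'(n)$ can be padded with identity layers to yield a $\class{QNC}_n$ circuit computing the same function, so $C_n$ can be viewed as a $\class{QNC}_n$ circuit solving $\SSP{n}$. Applying the preceding theorem with $d = n$, the success probability of $C_n$ on $\SSP{n}$ is at most
\[
    \tfrac{1}{2} + n \cdot \sqrt{\tfrac{\poly(n)}{2^n}},
\]
which tends to $\tfrac{1}{2}$ as $n \to \infty$. This contradicts the $\geq 2/3$ acceptance gap required by the definition of $\class{BQNC}^{\Ora}$ for all sufficiently large $n$, completing the proof.

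There is essentially no genuine obstacle here; the corollary is a direct consequence of the depth hierarchy of the language and the bound already established. The only tiny care-point is the padding step, ensuring that an algorithm of smaller depth genuinely falls under the hypothesis of the theorem, and verifying that the error bound remains $o(1)$ when we plug in $d = n$, both of which are immediate.
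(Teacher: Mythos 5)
Your proof is correct and follows essentially the same reasoning the paper uses (the paper leaves this corollary without an explicit proof, but its intent matches the remark after Def.~\ref{def:language} and the explicit proofs of the parallel corollaries in Sections~\ref{sec:qcd_bpp} and~\ref{sec:bpp_qcd}): since $d(n)=n$ while any $\class{BQNC}$ circuit family has depth $O(\log^k n)=o(n)$, one can view the circuit as a (padded) $\class{QNC}_n$ circuit and invoke the preceding theorem to cap its success probability at $\tfrac12 + n\sqrt{\poly(n)/2^n}$, which is below $2/3$ for all large $n$.
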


\section{The \texorpdfstring{$\SSP{d}$}{Lg} is hard for \texorpdfstring{$d$}{Lg}-QC scheme}\label{sec:qcd_bpp}

The main theorem we are going to show in this section is that the search $\SSP{d}$ is hard for all $d$-QC scheme. 
\begin{theorem}\label{thm:qcd_bpp}
Let $d,n\in \mathbb{N}$. For any $d$-QC scheme $\A$ and initial state $\rho$, let $f$ be a random Simon's function from $\Z_2^N$ to $\Z_2^{n}$ with hidden shift $s$, and $\vecf\sim\D(f,d)$, then
\[
    \Pr[\A^{\vecf}(\rho) = s] \leq d\cdot\sqrt{\frac{ \poly(n)}{2^n}}.
\] 
\end{theorem}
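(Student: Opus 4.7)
The plan is to mirror the hybrid argument of Theorem~\ref{thm:qcd}, adapted to the extra classical queries occurring between the $d$ quantum layers. Write the execution of $\A$ as $L^{(1)} \to \A_c^{(1)} \to L^{(2)} \to \cdots \to L^{(d)} \to \A_c^{(d)}$, where each $L^{(i)}$ is a single depth of quantum operations containing one parallel oracle call to $\vecf$, and each $\A_c^{(i)}$ is a classical segment making at most $\poly(n)$ classical queries to $\vecf$. Let $Q_\ell$ denote the set of points classically queried before layer $L^{(\ell)}$; the total classical query load satisfies $|Q_\ell| \le \poly(n)$.

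The key modification is that the hidden sets $\Sb$ will be sampled \emph{adaptively}, using the classical queries seen so far. Specifically, I modify Procedure~\ref{fig:S_1} so that $\vecS^{(\ell)}$ is drawn after $\A_c^{(\ell-1)}$ completes but before $L^{(\ell)}$ begins, with the extra requirement that $S_\ell^{(\ell)} \cap Q_\ell = \emptyset$ (and consequently $S_j^{(\ell)} \cap f_{j-1}\circ\cdots\circ f_\ell(Q_\ell) = \emptyset$ for $j>\ell$). Since $|Q_\ell|$ is polynomial while $|S_\ell^{(\ell-1)}|$ exceeds $2^n\cdot|S_\ell|$ by an exponential factor, both $|S_\ell^{(\ell)}|/|S_\ell^{(\ell-1)}| \le 1/2^n$ and $S_\ell \subseteq S_\ell^{(\ell)}$ remain simultaneously achievable. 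Repeating the inductive probability computation at the end of Theorem~\ref{thm:qcd}, conditioned on the classical transcript, still gives $\Pr[x \in S_j^{(\ell)} \mid x \in S_j^{(\ell-1)}] \le 1/2^n$ for every $x \notin Q_\ell$. Defining $\vecg_i$ as the shadow of $\vecf$ on $\vecS^{(i)}$ and swapping the quantum oracle of $L^{(i)}$ with $\vecg_i$ one layer at a time, Lemma~\ref{lem:o2h} combined with Lemma~\ref{lem:find} bounds each swap by $\sqrt{2\poly(n)/2^n}$, contributing $d\sqrt{\poly(n)/2^n}$ in total.

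In the fully replaced shadow world $\A^{\vecg}$, the core function $f_d^*$ on $S_d^{(d)}$ has been blanked, so the only information about $f$ reaches $\A$ through the $\poly(n)$ classical queries in the $\A_c^{(i)}$'s, each of which can traverse a complete path to reveal a single value of $f$. Standard classical hardness of Simon's problem then bounds $\Pr[\A^{\vecg}(\rho)=s]$ by $\poly(n)/2^n$, which is absorbed by the hybrid cost. The main technical obstacle is verifying the uncorrelation hypothesis of Lemma~\ref{lem:find} in the presence of adaptive classical queries: one must check that conditioned on the transcript $Q_\ell$ together with all prior measurement outcomes, the restricted map $\hat{\vecf}^{(\ell-1)}$ is still uniformly distributed over one-to-one functions on its hidden domain, and that subsequent classical queries in $\A_c^{(\ell')}$ for $\ell' \ge \ell$ hit $\vecS^{(\ell)}$ only with probability $\le 1/2^n$ per query, because $\vecS^{(\ell)}$ is uniform in an exponentially larger set and is correlated with the later algorithm only through shadow-oracle responses. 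Formalizing this requires the careful conditioning described in Remark~\ref{remark:uncorrelation}.
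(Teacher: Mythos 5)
Your overall strategy is the same as the paper's: adaptively choose the hidden sets $\Sb$ to dodge the classical queries, run the O2H hybrid argument against shadow oracles, and then fall back on classical Simon hardness. But there is a genuine inconsistency in how you specify the hidden sets. You require both $S_\ell^{(\ell)} \cap Q_\ell = \emptyset$ and $S_\ell \subseteq S_\ell^{(\ell)}$, and claim these are simultaneously achievable because $|S_\ell^{(\ell-1)}|$ is exponentially larger than $2^n|S_\ell|$. Size is not the issue: $S_0 = \Z_2^n$ is public, so the classical segments $\A_c^{(i)}$ can trivially chase paths from $S_0$ and land points in every $S_\ell$ (indeed in $S_d$) with certainty. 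Once $Q_\ell \cap S_\ell \neq \emptyset$, your two constraints are incompatible regardless of domain sizes, and the construction breaks.

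The paper's Procedure~\ref{fig:S_2} handles this by only demanding $S_\ell \setminus (T_\ell^{(1)}\cup\cdots\cup T_\ell^{(\ell)}) \subset S_\ell^{(\ell)}$, i.e., only the \emph{un-queried} part of $S_\ell$ is hidden. This is not cosmetic. The resulting shadow $\vecg$ no longer blanks all of $f_d^*$: the classically-discovered piece of the core function stays visible, so $\A^{\vecg}$ is \emph{not} information-theoretically independent of $f$. The final hardness step therefore must invoke the classical Simon lower bound with a polynomial-size leaked set $S$ (Lemma~\ref{lem:simon_classical_bound}), rather than the simpler ``no information, so at most $2^{-n}$'' argument used in the pure $\class{QNC}_d$ case. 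You gesture at exactly this conclusion in your last paragraph, which suggests you understand the intended argument; but the stated requirement $S_\ell \subseteq S_\ell^{(\ell)}$ must be weakened to $S_\ell$-minus-queried-points for the construction, and hence Claim~\ref{claim:independence_2}, to be well-defined and for the bound $\Pr[x\in S_j^{(\ell)}\mid x\in S_j^{(\ell-1)}\setminus T_j^{(\ell)}]=2^{-n}$ to make sense. As written, the proposal would be rejected at the construction stage.
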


Before proving Theorem~\ref{thm:qcd_bpp}, we first recall the classical lower bound for the Simon's problem. However, for the purpose of 
\begin{lemma}\label{lem:simon_classical_bound}
Let $\A_c$ be any PPT algorithm. Let $f: \Z_2^n\rightarrow \Z_2^n$ be a uniformly random Simon function. Let $q\in \mathbb{N}$ be the number of queries $\A_c$ performs, and $S\subset \Z_n$ be the set where $f(x)$ is known for $x\in S$ and $f(x)\neq f(x')$ for $x\neq x'$. Then the probability that $\A_c^{f}(S,f(S))$ outputs the hidden shift correctly is at most 
\[
    \frac{(q+1+|S|)^2}{2^{n+1}-(q+1+|S|)^2},
\]
which is $O(\poly(n)/2^n)$ when $q$ and $|S|$ are polynomial in $n$. 
\end{lemma}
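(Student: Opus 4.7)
The plan is to adapt the textbook classical lower bound for Simon's problem to the setting with the additional prefix information $(S, f(S))$, using a coupling/simulation argument. First, I would observe that the injectivity of $f$ on $S$ is equivalent to conditioning on the hidden shift $s$ lying outside $D_S := \{x \oplus x' : x, x' \in S,\, x \neq x'\}$, a set of size at most $\binom{|S|}{2}$; thus the posterior of $s$ is uniform on $\Z_2^n \setminus (\{0\} \cup D_S)$, which has size at least $2^n - 1 - \binom{|S|}{2}$.

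Next, I would introduce a simulated oracle that lazily answers each fresh query $x_i$ with a uniformly random value consistent with injectivity and with the fixed table $f(S)$. Against this simulator, the transcript of $\A_c$ (queries, responses, and output $s^*$) is statistically independent of the true shift $s$. Let $E$ be the ``triggered-collision'' event that distinct $x, y \in T := S \cup \{x_1, \dots, x_q\}$ satisfy $x \oplus y = s$. A standard bad-event argument then shows that the real Simon-oracle view and the simulator view of $\A_c$ agree on $\neg E$.

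Given this coupling, I would decompose $\Pr[s^* = s] \leq \Pr[E] + \Pr[s^* = s \mid \neg E]$ and bound each term via the independence of $s$ from the simulated transcript. By a union bound over the $\binom{|T|}{2}$ candidate differences, $\Pr[E] \leq \binom{|T|}{2} / (2^n - 1 - \binom{|S|}{2})$. Conditioned on $\neg E$, the posterior of $s$ is uniform on $\Z_2^n \setminus (\{0\} \cup D_T)$ with $D_T := \{x \oplus x' : x, x' \in T,\, x \neq x'\}$, so $\Pr[s^* = s \mid \neg E] \leq 1/(2^n - 1 - \binom{|T|}{2})$. Using $|T| \leq |S| + q$ and $\binom{m}{2} \leq m^2/2$, simple algebra collapses the two terms into the advertised bound $(q + 1 + |S|)^2 / (2^{n+1} - (q + 1 + |S|)^2)$, which is $O(\poly(n)/2^n)$ whenever $|S|, q = \poly(n)$.

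The main subtlety is verifying the coupling: one must check that on $\neg E$, every response of the lazy simulator is consistent with a genuine Simon function whose shift is a valid posterior value of $s$, so that the joint distribution of (transcript, $s$) matches the real-oracle distribution exactly until $E$ occurs. This is the analogue of the standard lazy-sampling argument in Simon's lower bound, and the extra care needed here is merely to initialize the simulator's table with the prefix $(S, f(S))$ and to track $D_S$ in all probability calculations.
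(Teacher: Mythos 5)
Your proof is correct, but it takes a genuinely different route from the paper's. The paper argues via a collision reduction: it first notes that a shift-finder using $q$ queries yields a collision-finder using $q+1$ queries (just query one more point to confirm the guessed shift), and then bounds the collision probability directly by the sequential sum $\sum_{i} \frac{i+|S|-1}{2^n-(i+|S|)^2/2}$, where the $i$-th term bounds the chance that the $i$-th fresh query collides with one of the $i+|S|-1$ previously seen points given that no collision has occurred yet. Your argument is instead a lazy-sampling simulation in the style of a cryptographic hybrid: couple the real Simon oracle to a simulator whose transcript is statistically independent of $s$, declare a bad event $E$ (some pair in $T=S\cup\{x_1,\dots,x_q\}$ differs by $s$) outside of which the two worlds agree, and split $\Pr[s^*=s]\le\Pr[E]+\Pr[s^*=s\mid\neg E]$, bounding both terms by union bounds over the at most $\binom{|T|}{2}$ candidate differences against the near-uniform posterior on $s$. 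The two proofs rest on the same core fact (the posterior on $s$ stays essentially uniform until a collision is triggered) and give the same polynomial bound, so neither is stronger here; the paper's version is shorter and more elementary, while yours makes the independence/coupling structure explicit, which is the cleaner template for the auxiliary-input variants the paper actually needs later (e.g.\ Claim~\ref{claim:ideal} and the presampling argument in Lemma~\ref{lem:last}). Both proofs share the same cosmetic blemish in the degenerate corner case $q=|S|=0$, where the stated bound $1/(2^{n+1}-1)$ falls below the trivial guessing probability $1/(2^n-1)$; this is harmless in the intended $q,|S|=\poly(n)$ regime.
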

\begin{proof}
We only need to consider the case where $\A_c$ is a deterministic algorithm. A probabilistic algorithm can be seen as a convex combination of deterministic algorithm; therefore, the success probability of a probabilistic algorithm must be an average over deterministic algorithms. 

Let $S\subset \Z_2^n$ and $f(x)\neq f(y)$ for $x,y\in S$ and $x\neq y$. The probability that $\A_c$ finds a collision is 
\[
    \Pr[collision: \A_c^f(S)] \leq \sum_{i=1}^q \frac{i+|S|-1}{2^n-(i+|S|)^2/2} \leq \frac{(q+|S|)^2}{2^{n+1}-(q+|S|)^2}.  
\]

For any algorithm which can find $s$ with probability $p$ by performing $q$ queries, it can find a collision with the same probability by performing $q+1$ queries. The probability of finding a collision by using $q+1$ queries is at most $\frac{(q+|S|+1)^2}{2^{n+1}-(q+|S|+1)^2}$. Therefore, 
\[
    \Pr[\A_c^f(S) = s] \leq \frac{(q+|S|+1)^2}{2^{n+1}-(q+|S|+1)^2}. 
\]

\end{proof}

\subsection{Proof of Theorem~\ref{thm:qcd_bpp}}

Recall that we can represent a $d$-QC scheme $\A$ with access to $\vecf\sim \D(f,d)$ as
\[
    \A_c^{\vecf}\circ(\Pi_{0/1}\circ \vecf  U_d \circ \A_c^{\vecf})\circ\cdots\circ(\Pi_{0/1}\circ \vecf  U_1 \circ \A_c^{\vecf}) .
\]
We denote $\Pi_{0/1}\circ \vecf  U_i \circ \A_c^{\vecf}$ as $L_i^{\vecf}$ for $i=1,\dots,d$ and rewrite the representation above as $\A_c^{\vecf}\circ L_d^{\vecf}\circ\cdots\circ L_1^{\vecf}$.
We let $q_i$ be the number of quantum queries and $r_i$ be the number of classical queries the algorithm performs in $L_i$. We let $q:=\sum_{i=1}^{d}q_i$ and $r:=\sum_{i=1}^dr_i$.


For the ease of the analysis, we allow $\A_c$ to learn the whole path from $f_0$ to $f_d$ by just one query, which we called the ``\textbf{path query}''. It is worth noting that $\A_c$ that can make path queries can be simulated by the original model. This follows from the fact that the original model can achieve the same thing by using $d$ times as many queries as $\A_c$.

To prove the theorem, we need to define an $\Sb$ which has property that $\Pr[x\in S_j^{(\ell)}]\leq p$ as described in Lemma~\ref{lem:find} for some $p$ that is small enough. In the following, we show that $\Sb=(\vecS^{(1)},\dots,\vecS^{(d)})$ in Procedure~\ref{fig:S_2} satisfies this property with $p=\frac{1}{2^n}$.

\begin{claim}\label{claim:independence_2}
Let $d,n\in\mathbb{N}$. Let $\ell\in [d]$. Let $\A_c$ be any randomized algorithm. Let $f:\Z_2^n\rightarrow\Z_2^{n}$ be any function.  Let $\vecf\sim\D(f,d)$. Let $\vecS^{(1)},\dots,\vecS^{(\ell)}$ and $\vecT^{(1)},\dots,\vecT^{(\ell)}$ be be defined as in Procedure~\ref{fig:S_2} regarding to $\A_c$. Given $(\vecf^{(1)},\dots,\vecf^{(\ell-1)})$ and $(\vecT^{(1)},\dots,\vecT^{(\ell)})$, then 
\[  
\Pr_{\vecf,\Sb}[x\in S_j^{(\ell)}|x\in S_j^{(\ell-1)}\setminus T_j^{(\ell)}]=\frac{1}{2^n}\mbox{ for } j=\ell,\dots,d.
\]
\end{claim}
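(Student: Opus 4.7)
The plan is to mirror the calculation in the proof of Theorem~\ref{thm:qcd}, but add a careful bookkeeping step that isolates the effect of conditioning on the classically-learned information $\vecT^{(1)},\dots,\vecT^{(\ell)}$. The key observation is that $\vecT^{(\ell)}$ records only the (polynomially many) points that $\A_c$ learns from its classical path queries, and Procedure~\ref{fig:S_2} presumably samples $S_j^{(\ell)}$ as a uniformly random subset of $S_j^{(\ell-1)}\setminus T_j^{(\ell)}$ of density $1/2^n$ subject to containing (the relevant part of) $S_j$. So on the unqueried portion of $S_j^{(\ell-1)}$, the hidden-set structure is still governed by a genuine uniform random choice.

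First I would reduce the statement to a conditional-uniformity statement about $\hat{\vecf}^{(\ell-1)}$. Concretely, I would argue that given $(\vecf^{(1)},\dots,\vecf^{(\ell-1)})$ together with $(\vecT^{(1)},\dots,\vecT^{(\ell)})$, the restricted functions $\hat{f}_j^{(\ell-1)}$ on the domains $S_j^{(\ell-1)}\setminus T_j^{(\ell)}$ are still uniformly random one-to-one maps into $S_{j+1}^{(\ell-1)}\setminus T_{j+1}^{(\ell)}$. This is exactly Observation~\ref{ob:1} applied after removing the finitely many pairs exposed by classical queries: conditioning on a one-to-one function at a fixed finite set of input/output pairs leaves a uniformly random one-to-one function on the complement.

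Next I would do the same two-case split as in Theorem~\ref{thm:qcd}. Fix $j \ge \ell$ and $x \in S_j^{(\ell-1)}\setminus T_j^{(\ell)}$. By construction $S_j \setminus T_j^{(\ell)} \subseteq S_j^{(\ell)}$, so
\begin{align*}
\Pr[x\in S_j^{(\ell)} \mid x \in S_j^{(\ell-1)}\setminus T_j^{(\ell)}]
&= \Pr[x \in S_j \mid \cdot] + \Pr[x \notin S_j \mid \cdot]\cdot\Pr[x \in S_j^{(\ell)} \mid x \notin S_j,\, \cdot].
\end{align*}
Using that $\hat{f}_{j-1}^{(\ell-1)},\dots,\hat{f}_{0}$ (minus the classically-revealed points in $\vecT$) are uniform one-to-one maps, the first factor equals $|S_j\setminus T_j^{(\ell)}|/|S_j^{(\ell-1)}\setminus T_j^{(\ell)}|$, and the conditional probability in the second factor equals the remaining-slot ratio $(|S_j^{(\ell)}|-|S_j\setminus T_j^{(\ell)}|)/(|S_j^{(\ell-1)}\setminus T_j^{(\ell)}|-|S_j\setminus T_j^{(\ell)}|)$. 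A direct calculation, identical in spirit to the one displayed inside the induction in Theorem~\ref{thm:qcd}, collapses the two terms into $|S_j^{(\ell)}|/|S_j^{(\ell-1)}\setminus T_j^{(\ell)}| = 1/2^n$, the last equality being the design choice of Procedure~\ref{fig:S_2}.

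The main obstacle I anticipate is not the combinatorial identity above but the conditional-uniformity step. The classical queries that populate $\vecT^{(\ell)}$ are adaptively chosen by $\A_c$ based on earlier oracle answers, so one must verify that removing the adaptively queried set still leaves the underlying permutations uniform on the complement; this is the standard ``lazy-sampling'' argument but needs to be stated carefully with respect to the filtration given by the rounds of $d$-QC interaction and the sigma-algebras generated by $(\vecf^{(1)},\dots,\vecf^{(\ell-1)},\vecT^{(1)},\dots,\vecT^{(\ell)})$. Once that is in place, the rest is a one-line ratio.
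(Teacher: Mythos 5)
Your proof is essentially the paper's proof: the paper also splits $\Pr[x\in S_j^{(\ell)}]$ by conditioning on whether $x$ lies in the core set $S_j$ or not, computes each branch as a ratio of remaining slots after the classically-queried points in $\vecT$ are removed, and collapses the two terms to $|S_j^{(\ell)}|/(|S_j^{(\ell-1)}|-|T_j^{(\ell)}|)=1/2^n$. The only difference is presentational --- the paper carries the argument through an explicit induction on $\ell$ and is terser about the conditional-uniformity (lazy-sampling) step that you correctly flag as the part needing care.
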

\begin{proof}

We prove it via induction on the depth of $\vecf$. For the base case where $\ell=1$, given $\vecT^{(1)}$ and $f_j$ on $T_j^{(1)}$ for $j=1,\dots,d$, for all $i\in [d]$ and $x_i\in S_i^{(0)}\setminus T_i^{(1)}$, 
\begin{eqnarray*}
\Pr[x_i\in S_i^{(1)}] &=& \Pr[x_i\in S_i\setminus T_{i}^{(1)}]+ \Pr[x_i\in S_i^{(1)}\setminus S_i]\\
&=& \frac{2^{n}-|T_i^{(1)}|}{2^{(d+2)n}-|T_i^{(1)}|} + \frac{|S_i^{(1)}|-2^n+|T_i^{(1)}|}{2^{(d+2)n}-|T_i^{(1)}|}\\
&=& \frac{|S_i^{(1)}|}{2^{(d+2)n}-|T_i^{(1)}|} = \frac{1}{2^n}.
\end{eqnarray*}

We now suppose that when $\ell=k-1$, given $\vecf^{(1)},\dots,\vecf^{(k-2)}$ and   $\vecT^{(1)},\dots,\vecT^{(k-1)}$, $$\Pr[x\in S_i^{(k-1)}|x\in S_i^{(k-2)}\setminus T_i^{k-1}] = \frac{1}{2^n}$$ for $i=k-1,\dots,d$.

Then, for $\ell=k$, given $\vecf^{(1)},\dots,\vecf^{(k-1)}$ and $\vecT^{(1)},\dots,\vecT^{(k)}$, for $i=k,\dots,d$ and $x\in S_i^{(k-1)}\setminus T_i^{(k)}$,
\begin{eqnarray*}
\Pr[x\in S_i^{(k)}] &=& \Pr[x\in S_i\setminus (\cup_{m=1}^{k}(T_i^{(m)}))] + \Pr[x\in S_i^{(k)}\setminus S_i]\\
&=& \frac{2^n-\sum_{m=1}^k |T_i^{(m)}|}{|S_i^{(k-1)}| - |T_i^{(k)}|} + \frac{|S_i^{(k)}|-2^n+\sum_{m=1}^k |T_i^{(m)}|}{|S_i^{(k-1)}| - |T_i^{(k)}|}\\
&=& \frac{1}{2^n}.
\end{eqnarray*}

\end{proof}

\floatname{algorithm}{Procedure}
\begin{algorithm}[h]
    \begin{mdframed}[style=figstyle,innerleftmargin=10pt,innerrightmargin=10pt]
    Let $d,n\in \mathbb{N}$ and $f$ a random Simon's problem from $\Z_2^n\rightarrow \Z_2^{n}$. Given $\vecf\sim\D(f,d)$ and $\A$ a $d$-QC scheme, We construct $\Sb$ as follows:   
    \begin{itemize}
    \item Let $\vec{S}^{(0)}:=(S_0^{(0)},\dots,S_{d}^{(0)})$, where $S_j^{(0)} := \Z_2^{(d+2)n}$ for $j=0,\dots,d$.
    \item For $\ell=1,\dots,d$: 
    \begin{enumerate}
        \item After the $\ell$-th $\A_c^{\vecf}$ is applied, let $\vecT^{(\ell)} = (T_{\ell}^{(\ell)},\dots,T_{d}^{(\ell)})$ be the set of points the $\ell$-th $\A_c^{\vecf}$ queried. As we have mentioned before, we allow $\A_c$ to query the whole path by one query. Hence, $f_j(T_j^{(\ell)}) = T_{j+1}^{(\ell)}$ for $j=\ell,\dots,d$.
        \item Let $W_{\ell}^{(\ell-1)} := S_{\ell}^{(\ell-1)}\setminus T_{\ell}^{(\ell)}$. Then, we choose $S_{\ell}^{(\ell)}$ uniformly randomly from $W_{\ell}^{(\ell-1)}$ with the promise that $|S_{\ell}^{(\ell)}|/|W_{\ell}^{(\ell-1)}|=1/2^n$ and $S_{\ell}\setminus (T_{\ell}^{(1)}\cup\cdots\cup T_{\ell}^{(\ell)})\subset S_{\ell}^{(\ell)}$.
        \item For $j=\ell+1,\dots, d$,  let $S_j^{(\ell)}:= \{f_{j-1}\circ\cdots\circ f_{\ell}(S_{\ell}^{(\ell)})\}$. 
        \item Let $\vecS^{(\ell)} = (S_{\ell}^{(\ell)},\dots,S_{d}^{(\ell)})$.
    \end{enumerate}
    \item $\Sb:= (\vecS^{(0)},\dots,\vecS^{(d)})$
    \end{itemize}
    \caption{The hidden sets for $d$-QC scheme}
    \label{fig:S_2}
    \end{mdframed}
\end{algorithm}

Now, we are ready to prove Theorem~\ref{thm:qcd_bpp}.
\begin{proof}[Proof of Theorem~\ref{thm:qcd_bpp}]

We choose the hidden set $\Sb=(\vecS^{(0)},\dots,\vecS^{(d)})$ according to Procedure~\ref{fig:S_2}. In the procedure, we choose $\vecS^{(\ell)}$ after the $\A_c$ in $L_{\ell}$ has performed. We represent $\vecf$ as $(\vecf^{(1)},\dots,\vecf^{(d+1)})$ according to $\Sb$. Let $\vecg_{\ell}$ be the shadow of $\vecf$ in $\vecS^{(\ell)}$ for $\ell\in[d]$. We define
\begin{eqnarray*}
    \A^{\vecg} &&:= \A_c^{\vecf}\circ (\Pi_{0/1}\circ \vecg_dU_d\circ \A_c^{\vecf})\circ\cdots\circ (\Pi_{0/1}\circ \vecg_1U_1\circ \A_c^{\vecf})\\
    &&:=\A_c^{\vecf}\circ L_d^{\vecg_d}\circ\cdots\circ L_1^{\vecg_1}.
\end{eqnarray*}

$\A^{ \vecg}$ succeeds to output the hidden shift with probability at most $\frac{(r +1)^2}{2^n - (r +1)^2}$, where $r$ is the number of queries the classical algorithms perform. Note that the outputs of $U_1,\dots,U_d$ are uncorrelated to $\hat{\vecf}^{(d)}$. This fact implies that given the measurement outcomes of the $i$-th layer quantum unitaries, $\A_c$ can only learn information about $\vecf^{(1)},\dots,\vecf^{(d)}$. This does not give $\A_c$ more information about $f$. Therefore,  $\A_c$ at $L_i$ succeeds with probability at most $\frac{(\sum_{j=1}^{i+1} r_j+1)^2}{2^n-(\sum_{j=1}^{i+1} r_j+1)^2}$ and $\A^{ \vecg}$ succeeds with probability at most $\frac{\poly(n)}{2^n}$. 


Let $\rho_0$ be the initial state and $\rho_i$ be the output state of $(L_i^{\vecg_i}\circ\cdots\circ L_1^{\vecg_1})(\rho_0)$ for $i=1,\dots,d$, we can show that
\begin{eqnarray}
&&|\Pr[\A^{\vecf}(\rho_0)=s] - \Pr[\A^{ \vecg}(\rho_0)=s]| \nonumber\\
&&\leq |\Pr[\A^{\vecf}(\rho_0)=s] - \Pr[(L_d\circ\cdots\circ L_2)^{\vecf}( L_1^{ \vecg}(\rho_0))=s]| +\nonumber\\
&&\quad |\Pr[(L_d\circ\cdots\circ L_2)^{\vecf}(L_1^{ \vecg}(\rho_0))=s] - \Pr[(L_d\circ\cdots\circ L_2)^{ \vecg}(L_1^{ \vecg}(\rho_0))=s]|\nonumber\\
&&\leq B(L_1^{\vecf}(\rho_0),L_1^{ \vecg}(\rho_0)) +\nonumber\\
&&\quad|\Pr[(L_d\circ\cdots\circ L_2)^{\vecf}(\rho_1)=s] - \Pr[(L_d\circ\cdots\circ L_2)^{ \vecg}(\rho_1)=s]|\nonumber\\
&&\leq \sum_{i=1}^{d} B(\rho_i,L_i^{\vecf}(\rho_{i-1})).\label{eq:qcd_bpp_hybrid}\\
&&\leq \sum_{i=1}^{d} \sqrt{2\Pr[find\; \vecS^{(i)}: U_i^{\vecf\setminus \vecS^{(i)}},\rho_{i-1}]} .\label{eq:qcd_bpp_o2h}
\end{eqnarray}
Eq.~\ref{eq:qcd_bpp_hybrid} is by the hybrid argument and Eq.~\ref{eq:qcd_bpp_o2h} is from Lemma~\ref{lem:o2h}.

Then, by Lemma~\ref{lem:find} and Claim~\ref{claim:independence_2}, 
\[
    \Pr[find\; \vecS^{(i)}: U_i^{\vecf\setminus \vecS^{(i)}},\rho_{i-1}] \leq \frac{q_i}{2^n}. 
\]
This implies that 
\[
    \Pr[\A^{\vecf}(\rho_0)=s] \leq \frac{\poly(n)}{2^n} + d\cdot \sqrt{\frac{\poly(n)}{2^n}} \leq \sqrt{\frac{\poly(n)}{2^n}}.
\]
\end{proof}

\subsection{On separating the depth hierarchy of \texorpdfstring{$d$}{Lg}-QC scheme}
By following the same proof of Theorem~\ref{thm:qcd_bpp}, we can show that for any $d$-QC scheme, the $\SSP{d}$ is also hard.  
\begin{theorem}\label{thm:qnc_bpp_2}
The $\SSP{d}$ cannot be decided by any $d$-QC scheme with probability greater than $\frac{1}{2}+\sqrt{\frac{\poly(n)}{2^n}}$.  
\end{theorem}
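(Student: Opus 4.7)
The plan is to reuse wholesale the setup, hidden-set choice, and hybrid chain from the proof of Theorem~\ref{thm:qcd_bpp}, only replacing the search-version success criterion by the decision-version one. Specifically, I would keep the hidden sets $\Sb$ from Procedure~\ref{fig:S_2}, define the shadow oracles $\vecg_1,\dots,\vecg_d$ exactly as before, and set $\A^{\vecg} := \A_c^{\vecf}\circ L_d^{\vecg_d}\circ\cdots\circ L_1^{\vecg_1}$. The hybrid inequality chain (Eq.~\ref{eq:qcd_bpp_hybrid}--\ref{eq:qcd_bpp_o2h}), together with Lemma~\ref{lem:find} and Claim~\ref{claim:independence_2}, nowhere uses the fact that $f$ is Simon: it only uses the marginal distribution of $\vecf$ given $f$. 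Hence the same derivation yields
\[
|\Pr[\A^{\vecf}(\rho_0)=1]-\Pr[\A^{\vecg}(\rho_0)=1]| \;\leq\; d\cdot\sqrt{\tfrac{\poly(n)}{2^n}}
\]
regardless of whether $f$ is a random Simon function or a random one-to-one function, and therefore this quantity also bounds the decision advantage gained by switching $\vecf \leftrightarrow \vecg$.

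It then suffices to bound the advantage of $\A^{\vecg}$ itself. I would argue that, conditioned on $\A_c$'s path-query transcript, the distribution of the shadow $\vecg$ is essentially the same whether $f$ is Simon or one-to-one. The point is that $\vecg_i$ maps every element of $S_d^{(d)}$ to $\bot$, so the ``core'' $\hat{f}_d^{(d)}$ (in which the only information about $f$ lives) is completely blocked by every $\vecg_i$; the remaining parts $\vecf^{(1)},\dots,\vecf^{(d)}$ are distributed as uniformly random one-to-one maps on their domains independently of $f$'s internal structure. Consequently, every measurement outcome produced by the quantum layers carries no Simon-vs-one-to-one signal, and any distinguishing advantage of $\A^{\vecg}$ must come entirely from the $r=\poly(n)$ direct $f$-evaluations $\A_c$ obtains through its path queries.

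The final step is the classical hardness of distinguishing a random Simon function from a random one-to-one function given $\poly(n)$ query values: the only event that even permits a decision is a collision, and by the argument of Lemma~\ref{lem:simon_classical_bound} this happens with probability at most $\tfrac{(r+1)^2}{2^{n+1}-(r+1)^2}=O(\poly(n)/2^n)$. Combining gives $\Pr[\A^{\vecg}(\rho_0)\text{ correct}]\leq \tfrac12 + O(\poly(n)/2^n)$, and then $\Pr[\A^{\vecf}(\rho_0)\text{ correct}]\leq \tfrac12 + \sqrt{\poly(n)/2^n}$, as required. The main (and only mildly subtle) obstacle is making precise the conditional-distribution claim for $\vecg$: one must absorb the at most polynomially many Simon-inputs that were ``shaved off'' into the classical sets $T_j^{(\ell)}$ into $\A_c$'s classical query budget, so that Lemma~\ref{lem:simon_classical_bound} (applied with a slightly enlarged $|S|$) still gives the $O(\poly(n)/2^n)$ bound. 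This is exactly the bookkeeping already performed in the search-version proof, so no new ideas are needed.
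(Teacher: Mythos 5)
Your proposal is correct and follows essentially the same route as the paper's proof: reuse the shadow oracles and hybrid chain from Theorem~\ref{thm:qcd_bpp} to bound the $\vecf$-vs-$\vecg$ gap by $d\sqrt{\poly(n)/2^n}$, and then observe that $\A^{\vecg}$ sees no Simon-vs-one-to-one signal beyond the $\poly(n)$ classical path evaluations, whose collision probability is $O(\poly(n)/2^n)$ by Lemma~\ref{lem:simon_classical_bound}. Your treatment of the blocked core and the classically shaved-off points is, if anything, slightly more explicit than the paper's, which simply invokes the search-version bound.
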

\begin{proof}
We consider the same shadow $\vecg$ in the proof of Theorem~\ref{thm:qcd_bpp}. Following that proof, for any $\rho_0$ and $\A$, 
\[
    |\Pr[\A^{\vecf}(\rho_0)=1] - \Pr[\A^{\vecg}(\rho_0)=1]|\leq d\cdot\sqrt{\frac{\poly(n)}{2^n}}. 
\]
Then, the rest to check is that $\A^{\vecg}$ cannot solve the $\SSP{d}$ with high probability. In case that $f$ is a random Simon's function, $\A^{\vecg}$ finds $s$ with probability at most $\frac{\poly(n)}{2^n}$. Therefore,  
$\Pr[\A^{\vecg}(\rho_0)=1]$ is at most $1/2+\poly(n)/2^n$. This implies that $\Pr[\A^{\vecf}(\rho_0)=1] \leq \frac{1}{2} + d\cdot\sqrt{\frac{\poly(n)}{2^n}} + \frac{\poly(n)}{2^n} < 2/3$. 

\end{proof}


\begin{corollary}\label{cor:qcd_bpp_2}
For any $d\in \mathbb{N}$, there is a $(2d+1)$-QC scheme which can solve the $\SSP{d}$ with high probability, but there is no $d$-QC scheme which can solve the $\SSP{d}$. 
\end{corollary}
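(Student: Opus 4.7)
The plan is to simply assemble the two previous results already proved in the paper, namely Theorem~\ref{thm:dSSP_solve} (an upper bound via a $\class{QNC}_{2d+1}$ circuit with classical post-processing) and Theorem~\ref{thm:qnc_bpp_2} (the $d$-depth lower bound against $d$-QC schemes), and to observe that each of these results directly yields one of the two halves of the corollary.

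For the positive direction, I would invoke Theorem~\ref{thm:dSSP_solve}, which exhibits an algorithm consisting of a $(2d+1)$-depth quantum circuit followed by classical post-processing (Simon-style linear algebra on the sampled Fourier vectors) that decides $\SSP{d}$ with high probability. Since a $(2d+1)$-QC scheme, as defined in Section~\ref{sec:preliminaries}, is allowed to apply up to $2d+1$ quantum layers interleaved with polynomial-time randomized classical computations, it can in particular implement ``one big $\class{QNC}_{2d+1}$ circuit followed by a $\class{BPP}$ post-processing step'' as a degenerate case. Hence the existence of such a $(2d+1)$-QC scheme follows immediately.

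For the negative direction, I would appeal to Theorem~\ref{thm:qnc_bpp_2}, which states that any $d$-QC scheme decides $\SSP{d}$ with probability at most $\tfrac{1}{2} + \sqrt{\poly(n)/2^n}$. Since this quantity is strictly less than $2/3$ for all sufficiently large $n$, no $d$-QC scheme meets the bounded-error acceptance condition in the definition of $(\class{BQNC_d^{BPP}})^{\Ora}$, so $\SSP{d}$ is not decidable by any $d$-QC scheme.

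The hard technical work has already been done in Sections~\ref{sec:shuffling_oracle_def}--\ref{sec:qcd_bpp} (Russian-nesting-doll construction of $\Sb$, the O2H-style Lemma~\ref{lem:o2h}, and the find-probability bound via Claim~\ref{claim:independence_2} and Lemma~\ref{lem:find}), so there is essentially no obstacle at the level of the corollary itself; the only thing to be mildly careful about is to state explicitly that the $n$-dependence in the two bounds $(2d+1)$-achievability versus $d$-hardness matches, and that $\sqrt{\poly(n)/2^n} \to 0$ so the completeness/soundness thresholds $2/3$ and $1/3$ separate the two cases unambiguously for large $n$.
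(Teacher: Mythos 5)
Your proposal is correct and takes exactly the same route as the paper: the paper's own proof of Corollary~\ref{cor:qcd_bpp_2} is simply the one-liner that it follows directly from Theorem~\ref{thm:qnc_bpp_2} (hardness for $d$-QC schemes) and Theorem~\ref{thm:dSSP_solve} (solvability by a $\class{QNC}_{2d+1}$ circuit with classical post-processing). Your added remarks --- that a $(2d+1)$-QC scheme subsumes ``one big $\class{QNC}_{2d+1}$ circuit plus $\class{BPP}$ post-processing'' as a degenerate case, and that $\tfrac12 + \sqrt{\poly(n)/2^n} < 2/3$ for large $n$ --- are exactly the implicit details the paper elides.
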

\begin{proof}
This corollary follows from Theorem~\ref{thm:qnc_bpp_2} and Theorem~\ref{thm:dSSP_solve} directly. 
\end{proof}

Finally, we can conclude that  
\begin{corollary}\label{cor:qcd_bpp_3}
Let $\Ora$ and $\Lang(\Ora)$ be defined as in Def.~\ref{def:language}. $\Lang(\Ora)\in \class{BQP}^{\Ora}$ and $\Lang(\Ora)\notin \class{(BQNC^{BPP})}^{\Ora}$. 
\end{corollary}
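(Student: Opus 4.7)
The plan is to deduce both parts of the corollary directly from the main theorems established earlier. For membership in $\class{BQP}^{\Ora}$, I would take any input $1^n$, observe that the oracle $\Ora^{f_n,d(n)}_{\mathrm{unif}}$ corresponds to the parameter choice $d(n)=n$, and invoke Theorem~\ref{thm:dSSP_solve}: the $\SSP{n}$ instance is decided by a $\class{QNC}_{2n+1}$ circuit followed by classical postprocessing. Since $2n+1$ is polynomial in the input length $n$, this polynomial-depth quantum algorithm witnesses $\Lang(\Ora) \in \class{BQP}^{\Ora}$.

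For the hardness direction, I would argue by contradiction. Suppose $\Lang(\Ora) \in (\class{BQNC^{BPP}})^{\Ora}$. By definition, there exist a constant $k \in \mathbb{N}$ and a family of $\log^k n$-QC schemes $\{\A_n^{\Ora}\}$ such that $\Pr[\A_n^{\Ora}(1^n)=1] \geq 2/3$ when $f_n$ is a Simon function and $\leq 1/3$ otherwise. Since $\Ora$ evaluated at input length $n$ is precisely the $(n,f_n)$-shuffling oracle, $\A_n$ would solve $\SSP{n}$ with probability at least $2/3$ when the scheme has quantum depth $\log^k n$. For sufficiently large $n$ we have $\log^k n \leq n$, so this directly contradicts an extension of Theorem~\ref{thm:qnc_bpp_2}, which states that no $d$-QC scheme decides $\SSP{d}$ with probability exceeding $1/2 + \sqrt{\poly(n)/2^n}$.

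The one piece that needs verification is that Theorem~\ref{thm:qnc_bpp_2}, which is stated for a $d$-QC scheme facing a $(d,f)$-shuffling of matching depth, extends to a $d'$-QC scheme facing a $(d,f)$-shuffling with $d' \leq d$. This is the main obstacle, but it is not difficult: the proof of Theorem~\ref{thm:qcd_bpp} only uses $d'$ hybrid steps (one per quantum layer), constructing the hidden sets $\vecS^{(1)},\dots,\vecS^{(d')}$ via Procedure~\ref{fig:S_2} inside the larger shuffling. Since the shuffling has $d+1 \geq d'+1$ nested levels with $|S_j^{(\ell)}|/|S_j^{(\ell-1)}|\leq 2^{-n}$, Claim~\ref{claim:independence_2} and Lemma~\ref{lem:find} still give $\Pr[\text{find}\;\vecS^{(i)}] \leq q_i/2^n$, so the O2H bound in Eq.~\ref{eq:qcd_bpp_o2h} yields the same $d'\cdot\sqrt{\poly(n)/2^n}$ distinguishing bound. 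The shadow $\vecg$ still leaves the core function $f_d^*$ completely blocked, so the classical subroutines have no information about the Simon shift beyond at most $\poly(n)$ queried points, and Lemma~\ref{lem:simon_classical_bound} caps the success probability at $1/2 + \poly(n)/2^n$.

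Combining these two steps completes the proof: the first paragraph gives $\Lang(\Ora)\in \class{BQP}^{\Ora}$, the second and third give $\Lang(\Ora)\notin (\class{BQNC^{BPP}})^{\Ora}$, and together they exhibit the desired oracle separation. I would present the argument compactly by stating the extended version of Theorem~\ref{thm:qnc_bpp_2} as an auxiliary remark, then reducing the corollary to a one-line application of both theorems on input length $n$.
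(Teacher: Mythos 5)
Your proposal is correct and follows essentially the same route as the paper: membership via Theorem~\ref{thm:dSSP_solve} (a $(2n+1)$-depth circuit is polynomial depth), and hardness via Theorem~\ref{thm:qnc_bpp_2}. The one thing you did that the paper's terse proof leaves implicit is to explicitly address the depth mismatch — a $\log^k n$-QC scheme facing the $(n,f_n)$-shuffling oracle is not literally the setting of Theorem~\ref{thm:qnc_bpp_2} (which matches scheme depth to shuffling depth); your re-examination of the hybrid steps is sound, though an even shorter justification is that any $d'$-QC scheme with $d' < d$ is a special case of a $d$-QC scheme (pad with identity layers), so the theorem at depth $d = n$ applies directly.
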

\begin{proof}
Note that for each $n\in \mathbb{N}$, $\Ora_{unif}^{f_n,d(n)}\in \Ora$ has depth equal to the input size. A quantum circuit with depth $\poly(n)$ can decide if $1^n$ is in $\Lang(\Ora)$ by solving the $\SSP{d}$ by  Theorem~\ref{thm:dSSP_solve}.  However, for $d$-QC scheme which only has quantum depth $d = \poly\log n$, it cannot decide the language by Theorem~\ref{thm:qnc_bpp_2}.
\end{proof}

\section{The \texorpdfstring{$\SSP{d}$}{lg} is hard for \texorpdfstring{$d$}{Lg}-CQ scheme}\label{sec:bpp_qcd}

The main result we are going to show here is the following theorem.
\begin{theorem}\label{thm:bpp_qcd}
Let $d,n\in \mathbb{N}$. Let $\A$ be any $d$-CQ scheme. Let $f$ be a random Simon's function from $\Z_2^n$ to $\Z_2^{n}$ with hidden shift $s$. Let $\vecf\sim\D(f,d)$. Then
\[
    \Pr[\A^{\vecf}() = s] \leq d\cdot\sqrt{\frac{ \poly(n)}{2^n}}.
\] 
\end{theorem}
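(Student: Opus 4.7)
The strategy is to extend the hybrid/shadow argument of Theorem 6.1 (the $d$-QC case) to handle the new difficulty introduced by the $d$-CQ model: a $d$-CQ scheme invokes the $d$-depth quantum circuit $m=\poly(n)$ times, and the measurement outcome of each invocation is fed classically into the next one. Since these accumulated classical strings can encode global information about $\vecf$, we cannot simply condition on them and still claim that the hidden sets $\vecS^{(\ell)}$ look uniformly random to the next quantum query. So my plan is to replace $\vecf$ by its shadow $\vecg_\ell$ inside each parallel query, one parallel query at a time, using Lemma 5.3 (O2H) and Lemma 5.4 (bound on the finding probability), while carefully tracking the classical advice accumulated so far and re-sampling the hidden sets after each such step.

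The key new technical ingredient is a presampling lemma for the shuffling oracle. I would prove (and use only) the weakest form that still suffices: for any short classical advice $\bar{s}$ about $\vecf$, the conditional distribution of $\vecf$ given $\bar{s}$ is indistinguishable, by a \emph{single} parallel quantum query, from a convex combination of ``almost-uniform'' shuffling oracles in which only polynomially many points have been fixed. The proof follows a compression/entropy argument in the style of \cite{CDG18,unruh07}: the short advice can only constrain $\vecf$ on a small random set, so on a typical fibre of $\bar{s}$ the conditional distribution is close to a mixture of uniform shuffling oracles with polynomially many points pinned.

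With this presampling lemma in hand, I would proceed by a double induction. The outer induction is over the $m$ classical-quantum rounds of the $d$-CQ scheme. At the start of round $i$, the induction hypothesis is that, up to a small Bures-distance loss, the joint view of $\A_c$ and $\vecf$ is a convex combination of shuffling oracles with $\poly(n)$ points fixed. I absorb the current classical advice via the presampling lemma, then run the inner induction, which is exactly the Russian-nesting-doll argument of Section 6 applied to the single $d$-depth quantum circuit of round $i$: choose the hidden sets $\vecS^{(\ell)}$ (analogously to Procedure 7.2) so that they avoid the presampled fixed points \emph{and} any classical queries made by $\A_c$ in round $i$, apply O2H to swap $\vecf$ to $\vecg_\ell$ in the $\ell$-th parallel query, and bound the finding probability by $q_\ell/2^n$ using the analog of Claim 7.3. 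After the last layer the measurement outcome is uncorrelated with the core $\hat{f}_d^{(d)}$, so the new classical advice that is passed to round $i+1$ only adds $\poly(n)$ more fixed points, preserving the induction hypothesis.

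The main obstacle, as flagged in the Proof Overview, is precisely that the presampling statement is only available against a \emph{single} parallel quantum query, not against an entire $d$-depth circuit. This forces the induction to be applied between parallel queries rather than once per round, and requires the hidden sets to be rechosen at each layer so that they steer clear of the growing presampled set. Once the bookkeeping is set up, the final bound is assembled exactly as in Theorem 6.1: summing the O2H losses $\sqrt{2\Pr[\text{find}]}$ over all $md$ layers yields an error of order $d\cdot\sqrt{\poly(n)/2^n}$ after absorbing $m$ into $\poly(n)$; and since in the fully-replaced scheme $\A$ never sees the core $f_d^\ast$, the classical Simon lower bound (Lemma 6.2) caps its guessing probability by $\poly(n)/2^n$, which is dominated by the O2H term and yields the claimed bound.
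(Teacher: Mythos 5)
Your proposal takes essentially the same route as the paper's proof: you identify the same presampling lemma (Claim~\ref{claim:presamp}, matching your ``weakest form that still suffices''), the same double induction (outer over the $m$ rounds, inner over the $d$ parallel-query layers), the same need to rechoose the hidden sets $\vecS^{(\ell)}$ to avoid the presampled fixed paths and the classical queries (matching Procedure~\ref{fig:S_3} and the ideal/semi-ideal advice bookkeeping wrapped up in Lemma~\ref{lem:last}), and the same final assembly via the classical Simon lower bound and absorption of $m$ into $\poly(n)$. One small point worth tightening if you wrote this out in full: the paper re-applies the presampling decomposition not just once at the start of each round but after \emph{every} parallel query within a round, since the original advice $\bar a$ must be re-absorbed relative to the shrinking hidden set $\vecS^{(\ell)}_{t_1,\ldots,t_\ell}$; your second paragraph hints at this but your first paragraph reads as if presampling happens only once per round.
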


Recall that we can represent a $d$-CQ scheme $\A$ as 
\begin{eqnarray*}
    &&\A^{\vecf}_{c,m+1} \circ (\Pi_{0/1}\circ U_{d+1}\vecf\cdots \vecf U_1\circ \A^{\vecf}_{c,m})\circ\cdots\circ (\Pi_{0/1}\circ U_{d+1}\vecf\cdots \vecf U_1\circ \A^{\vecf}_{c,1})\\
    &&:= \A^{\vecf}_{c,m+1}\circ L^{\vecf}_m\circ\cdots\circ L_1^{\vecf}. 
\end{eqnarray*}
The main difficulty for proving Theorem~\ref{thm:bpp_qcd} is that $L_i^{\vecf}$ can send some short classical advice to the proceeding processes, which advice can be correlated to all mappings in $\vecf$. Therefore, conditioned on the short advice, the distribution of the shuffling oracle may not be uniform enough for us to follow the same proofs above. To deal with this difficulty, we show that given a short classical advice, by fixing the shuffling function on a few paths, the rest paths of the shuffling oracle are still \emph{almost-uniform}.

\subsection{The presampling argument for the shuffling oracle}
Here, we are going to show that for $\vecf\sim\D(f,d)$ and $\bar{a}$ a short classical string correlated to $\vecf$,  we can approximate $\vecf|\bar{a}$ ($\vecf$ given $\bar{a}$) by a \textbf{convex combination} of ($p,1+\delta$)-uniform shuffling functions. In the following, we first define ($p,1+\delta$)-uniform shuffling functions and then prove the statement.

Let $X$ and $Y$ be two sets of elements such that $|X|=|Y|$. Recall that $P(X,Y)$ is the set of all one-to-one functions from $X$ to $Y$. 
\begin{definition}[Random variable $\Hf_{\vec{ X}}$]
Let $k,N\in \mathbb{N}$ and $\vec{ X}:=( X_1,\dots, X_{k+1})$ be a set of sets with size $N$. Let $h_i$ be random variables distributed in $P( X_i, X_{i+1})$ for $i=1,\dots,k$. Then, we define 
\[
    \Hf_{\vec{ X}}:= (h_1,\dots,h_k).
\]
\end{definition}
$\Hf_{\vec{ X}}$ is a sequence of random one-to-one functions which distribution could be arbitrary. In the following, we introduce the distributions we will use shortly.  
 
\begin{definition}[Almost-uniform Shuffling] \label{def:almost_uniform}
Let $k,N,p'\in \mathbb{N}$ and $0<\delta<1$. Let $\vec{X}=( X_1,\dots, X_{k+1})$ be a set of sets with size $N$. Consider $\Hf_{\vec{ X}}=(h_1,\dots,h_k)$. 
\begin{itemize} 
    \item $\Hf_{\vec{ X}}$ is \emph{\textbf{($1+\delta$)-uniform}} if for all subset of paths $\Path = (\vecP_1,\dots,\vecP_m)$ where $$\vecP_i=(x_{i,1},\dots,x_{i,k+1})$$ 
    satisfying $h_j(x_{i,j}) = x_{i,j+1}$ for $i=1,\dots,m$ and $j=1,\dots,k$, 
    \[
        \Pr[\Path\mbox{ is in }\Hf_{\vec{X}}]\leq (1+\delta)^m (\frac{(N-m)!}{N!})^k. 
    \]
    \item $\Hf_{\vec{ X}}$ is \emph{\textbf{$(p',1+\delta)$-uniform}} if there exist a set of paths $\Path'$ with size at most $p'$ such that $\Hf_{\vec{ X}}|\Path'$ is $(1+\delta)$-uniform, i.e., let $\Path'$ be fixed, then for all subset of unfixed paths $\Path = (\vecP_1,\dots,\vecP_m)$, 
    \[
        \Pr[\Path\mbox{ is in }\Hf_{\vec{ X}}|\Path'\mbox{ is in }\Hf_{\vec{ X}}]\leq (1+\delta)^m (\frac{N-m-p')!}{(N-p')!})^k. 
    \]
\end{itemize}
\end{definition}


A convex combination of $(p',1+\delta)$-uniform shuffling functions can be defined by the following formula: 
\begin{eqnarray*}
\Hc:= \sum_{t=1}^{T} p_t \Hf_t,\label{eq:conv_def} 
\end{eqnarray*} 
where $\Hf_1,\dots,\Hf_T$ are ($p',1+\delta$)-uniform shuffling functions and $p_1,\dots,p_T$ are the probabilities for how much each $\Hf_t$ contributes to $\Hc$. In out context, we will only consider convex combination of finitely objects, i.e., $T$ is finite. We will show in Claim~\ref{claim:presamp} that $\vecf$ conditioned on some classical advice $\bar{a}$ is close to a convex combination of finitely many $(p',1+\delta)$-uniform shuffling functions.

\begin{claim}\label{claim:presamp}
Let $0<\gamma,\delta< 1$. Let $p,k,N\in \mathbb{N}$. Let $\vec{ X}=( X_1,\dots, X_{k+1})$ be a set of sets with size $N$. Let $\Hf_{\vec{ X}}=(h_1,\dots,h_k)$ be distributed uniformly. Let $\Path$ be a set of fixed paths on $h_1,\dots,h_k$ and $\bar{a}$ be a $p$-bit advice.  Let $\Hf_{\vec{ X}}|(\Path,\bar{a})$ be $\Hf_{\vec{ X}}$ conditioned on $(\Path,\bar{a})$. Then, 
there exists a convex combination $\Hc|(\Path,\bar{a})$ of ($p',1+\delta$)-uniform shufflings such that
\[
    \Hf_{\vec{ X}}|(\Path,\bar{a}) = \Hc_{\vec{X}}|(\Path,\bar{a}) + \gamma'\Hf'
\]
where $p'\leq \frac{p+\log(1/\gamma)}{\log(1+\delta)}+|\Path|$, $\gamma'\leq \gamma$ and $\Hf'$ is an arbitrary random shuffling.  
\end{claim}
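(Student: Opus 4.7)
My plan is to generalize the classical presampling arguments of Coretti--Dodis--Guo and Unruh (cited as [CDG18, unruh07]) from a single random oracle to the sequence $\Hf_{\vec{X}} = (h_1,\ldots,h_k)$ of random one-to-one functions. The core idea is a \emph{heavy-path revealing} process: starting from the conditional distribution $\Hf_{\vec X} \mid (\Path, \bar a)$, I iteratively identify paths whose conditional probability far exceeds their uniform probability, fix them, and add them to the ``presampled'' set $\Path'$. When the process halts, the remaining distribution will by construction have no heavy paths, which is precisely the $(1+\delta)$-uniformity condition of Definition~\ref{def:almost_uniform}.

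In more detail, I would first define a path $\vecP=(x_1,\ldots,x_{k+1})$ to be $(1+\delta)$-\emph{heavy} under an event $E$ (extending $\Path$) if $\Pr[\vecP \text{ is in } \Hf_{\vec X} \mid E] > (1+\delta)\cdot\bigl(\tfrac{(N-|E|-1)!}{(N-|E|)!}\bigr)^{k}$, where $|E|$ counts already-fixed paths. Starting from $E_0 = (\Path, \bar a)$, I greedily pick a heavy path, sample its value from the conditional distribution $\Hf_{\vec X}\mid E_i$, add it to $E_{i+1}$, and iterate. After the process terminates at some step $m$, letting $\Path_m'$ be the list of revealed paths, the conditional distribution $\Hf_{\vec X} \mid E_m = \Hf_{\vec X} \mid (\Path \cup \Path_m', \bar a)$ has no heavy paths, and thus---by a short induction over sets of unfixed paths applying the ``no heavy path'' bound at each extension---satisfies the $(1+\delta)$-uniformity inequality in Definition~\ref{def:almost_uniform}. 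Grouping the outcomes of this procedure by the final $(|\Path_m'|, \Path_m')$ expresses $\Hf_{\vec X}\mid(\Path,\bar a)$ as a convex combination of $(|\Path|+m,\,1+\delta)$-uniform shufflings.

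The key quantitative step is bounding $m$. I plan a likelihood-ratio argument: by the definition of heaviness, each revealed path multiplies the likelihood ratio between the conditional distribution $\Hf_{\vec X}\mid\bar a$ (restricted to the revealed partial realization) and the underlying uniform distribution by at least $(1+\delta)$. Hence after $m$ heavy revelations, this ratio is at least $(1+\delta)^m$. On the other hand, since $\bar a$ is $p$ bits, an averaging/Markov argument over the randomness of $\Hf_{\vec X}$ and $\bar a$ shows that, except on an event of probability at most $\gamma$, the ratio does not exceed $2^p/\gamma$. Combining gives $(1+\delta)^m \le 2^p/\gamma$, i.e.\ $m \le \tfrac{p+\log(1/\gamma)}{\log(1+\delta)}$, which yields $p' \le |\Path| + \tfrac{p+\log(1/\gamma)}{\log(1+\delta)}$ as required. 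The exceptional probability-$\gamma$ event is absorbed into the $\gamma'\Hf'$ term by setting $\Hf'$ to be the residual conditional distribution there.

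The main obstacle I anticipate is the combinatorial bookkeeping of the uniform baseline $\bigl(\tfrac{(N-m)!}{N!}\bigr)^k$ for one-to-one functions (as opposed to the cleaner $N^{-k}$ that arises for unrestricted random functions in the classical presampling literature). Concretely, one must verify that the ``no heavy path'' property at step $m$ implies the full multi-path inequality of Definition~\ref{def:almost_uniform} for arbitrary sets $\Path=(\vecP_1,\ldots,\vecP_m)$ of unfixed paths; this should follow by chaining the single-path bound along an enumeration of $\Path$ while using that the baseline $\bigl(\tfrac{(N-|E|-1)!}{(N-|E|)!}\bigr)^k$ updates correctly as each path is appended to $E$, but the bijectivity constraints across the layers $h_1,\ldots,h_k$ require some care to avoid double-counting. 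Once this is handled, the likelihood-ratio bound and the convex decomposition assemble in a standard way.
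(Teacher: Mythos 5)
The crux of your plan---iteratively peel off ``heavy'' events until the residual distribution looks uniform, and control the number of peels via a likelihood-ratio/compression bound against the $p$-bit advice---is the right high-level strategy, and the quantitative bound $m\le\frac{p+\log(1/\gamma)}{\log(1+\delta)}$ is morally the same calculation as the paper's. However, there is a genuine gap in the step where you argue that the residual distribution is $(1+\delta)$-uniform, and it stems from your decision to reveal one path at a time based on \emph{single}-path heaviness.

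Definition~\ref{def:almost_uniform} requires the probability bound to hold for every \emph{set} of unfixed paths, and the chain-rule argument you sketch needs ``no heavy single path'' to hold not just at the terminal event $E_m$ but at every further extension $E_m\cup\{\vecP_1,\dots,\vecP_{i-1}\}$. Your process gives you the bound only at $E_m$. It is in fact possible for no single path to be heavy at $E_m$ while a pair of paths is heavy: the advice can introduce a correlation (e.g., $\bar a$ encodes a predicate of the form ``$h(x_2)=\sigma(h(x_1))$'' for some fixed permutation $\sigma$), under which $\Pr[\vecP\mid E_m]$ is about $1/(N-|E_m|)^k$ for each single path $\vecP$---not heavy---but once $\vecP_1=(x_1,a,\dots)$ is revealed, $\vecP_2=(x_2,\sigma(a),\dots)$ becomes nearly deterministic, so $\Pr[\{\vecP_1,\vecP_2\}\mid E_m]\approx 1/(N-|E_m|)^k \gg (1+\delta)^2\bigl(\tfrac{(N-|E_m|-2)!}{(N-|E_m|)!}\bigr)^k$. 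Your greedy loop would terminate immediately on such a distribution and leave a residual that fails Definition~\ref{def:almost_uniform}, so the ``short induction over sets of unfixed paths'' you appeal to does not close.

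The paper sidesteps exactly this by taking at each stage the \emph{maximal heavy set of paths} $P_i$ (rather than single paths), and then proving the residual is $(1+\delta)$-uniform by contradiction: if some further set $P'$ were heavy conditioned on $P_i$, then $P_i\cup P'$ would also be heavy, contradicting maximality of $P_i$. This maximality-of-a-set argument handles the correlated-pair scenario automatically. It also changes the decomposition structure slightly: the pieces are conditioned on the disjoint events $P_1$, $\neg P_1\wedge P_2$, $\neg P_1\wedge\neg P_2\wedge P_3,\dots$, and the recursion halts once the residual mass drops below $\gamma$. To repair your proof, replace single-path heaviness with set-heaviness and the greedy single-path reveal with the maximal heavy set; the likelihood-ratio bound for $|P_i|$ then goes through essentially as you outlined.
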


\begin{proof}
We let $\Hf_{\vec{ X}}|(\Path,\bar{a})$ be the random variable conditioned on $\Path$ and $\bar{a}$. It is obvious that conditioned on $\Path$, $\Hf_{\vec{ X}}$ is uniform on the rest. We let $P_1$ be the maximal set of paths satisfying that 
\[
    \Pr[P_1\mbox{ in }\Hf_{\vec{ X}}|(\Path,\bar{a})] \geq (1+\delta)^{|P_1|} (\frac{(N-|P_1|-|\Path|)!}{(N-|\Path|)!})^d. 
\]
Conditioned on $P_1$, we show that $\Hf_{\vec{ X}}$ is $(1+\delta)$-uniform by contradiction.  Suppose there exists another set of paths $P'$ such that 
\[
    \Pr[P'\mbox{ in }\Hf_{\vec{ X}}|(P_1,\Path,\bar{a})]\geq (1+\delta)^{|P'|} (\frac{(N-|P_1|-|P'|-|\Path|)!}{(N-P_1-|\Path|)!})^d.
\]
Then, 
\[
    \Pr[P_1\cup P'\mbox{ in }\Hf_{\vec{ X}}|(\Path,\bar{a})] \geq (1+\delta)^{|P'|+|P_1|} (\frac{(N-|P_1|-|P'|-|\Path|)!}{(N-|\Path|)!})^d
\]
which contradicts the maximallity of $P_1$. This proves that conditioned on $P_1$, $\Hf_{\vec{ X}}|(\Path,\bar{a})$ is $1+\delta$-uniform.

The size of $P_1$ is bounded as follows: Since $\bar{a}$ is a $p$-bit advice, 
\[
    \Pr[P_1\mbox{ in }\Hf_{\vec{ X}}|(\Path,\bar{a})]\leq 2^p (\frac{(N-|\Path|-|P_1|)!}{(N-|\Path|)!})^d
\]
which implies that $|P_1|\leq \frac{p}{\log (1+\delta)}$. 

Now, we can decompose $\Hf_{\vec{ X}}|(\Path,\bar{a})$ as 
\[
    \Pr[P_1\mbox{ in }\Hf_{\vec{ X}}|(\Path,\bar{a})]\cdot \Hf_{\vec{ X}}|(\Path,P_1,\bar{a})+ \Pr[\neg P_1\mbox{ in }\Hf_{\vec{ X}}|(\Path,\bar{a})]\cdot \Hf_{\vec{ X}}|(\Path,\neg P_1,\bar{a}).
\]
Note that $\Hf_{\vec{ X}}|(\Path,\neg P_1,\bar{a})$ may not be $(p',1+\delta)$-uniform. In case that it is not and 
\[
\Pr[\neg P_1\mbox{ in }\Hf_{\vec{ X}}|(\Path,\bar{a})]\geq \gamma,
\]
we keep decomposing $\Hf_{\vec{ X}}|(\Path,\neg P_1,\bar{a})$.

We then find another maximal set of paths $P_2$ satisfying that 
\[
    \Pr[P_1\mbox{ in }\Hf_{\vec{ X}}|(\Path,\neg P_1,\bar{a})]\geq (1+\delta)^{|P_2|}(\frac{(N-|P_2|-|\Path|)!-|P_1|}{(N-|\Path|)!-|P_1|})^d. 
\]
It is obvious that conditioned on $P_2$, $\Hf_{\vec{ X}}|(\Path,\neg P_1,\bar{a})$ is $(1+\delta)$-uniform following the same calculation. Furthermore, the size of $P_2$ can be bounded by the inequality
\[
    (1+\delta)^{|P_2|} (\frac{(N-|P_2|-|\Path|)!-|P_1|}{(N-|\Path|)!-|P_1|})^d \leq 2^s \frac{1}{\gamma}\cdot(\frac{(N-|P_2|-|\Path|)!-|P_1|}{(N-|\Path|)!-|P_1|})^d.
\]
This implies that $|P_2|\leq \frac{p+\log(1/\gamma)}{\log(1+\delta)}$.

We recursively decompose $\Hf_{\vec{ X}}|(\Path,\bar{a})$ until the probability that the rest is less than $\gamma$.  Then, $\Hf_{\vec{ X}}|(\Path,\bar{a})$ is $\gamma$-close to
\[
    \Hc_{\vec{ X}}|(\bar{a},\Path)=\sum_{i=1}^q \Pr[P_i|\neg P_1,\dots,\neg P_{i-1}]\cdot \Hf_{\vec{ X}}|(\Path,\neg P_1,\dots,\neg P_{i-1},P_i,\bar{a}).
\]
\end{proof}

An algorithm $\A$ which has access to a convex combination of shuffling functions, e.g., $\Hc:= \sum_t p_t \Hf_t$ can be represented as 
\begin{eqnarray*}
    \A^{\Hc} = \sum_t p_t \A^{\Hf_t}. \label{eq:conv}
\end{eqnarray*}  
In the following, we show that if the shuffling is ($p',1-\delta$)-uniform, the probability to find the hidden sets in the shuffling is still bounded as we need for Lemma~\ref{lem:find}.

\begin{claim}\label{lem:rnd3}
Let $p,k,N,N'\in \mathbb{N}$ and $0<\delta<1$. Let $\vec{ X}=(X_1,\dots,X_{k+1})$ be a set of sets with each size $N$. Let $\Hf_{\vec{X}}=(h_1,\dots,h_k)$ be $(p,1+\delta)$-uniform as defined in Def.~\ref{def:almost_uniform} and $\Path$ be the set of $p$ paths fixed in $\Hf_{\vec{X}}$. We choose a $N'$-element set $Y_1$ uniformly randomly from $X_1$, and let $Y_{i} := h(Y_{i-1})$ for $i=2,\dots,k$. Then, for $j\in [k]$, for $x_j\in X_j$, 
\[
    \Pr[x_j\in Y_j]\leq (1+\delta)\cdot\frac{N'}{N-p}. 
\]
\end{claim}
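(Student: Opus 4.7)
The plan is to exploit the bijectivity of the one-to-one functions in $\Hf_{\vec X}$ together with the $(p, 1+\delta)$-uniform property to reduce the question to a single-pair transition probability, and then bound that probability via the definition of $(1+\delta)$-uniformity.

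For $j = 1$ the bound is immediate: since $Y_1$ is a uniform $N'$-element subset of $X_1$ and $|X_1| = N$, we have $\Pr[x_1 \in Y_1] = N'/N \le (1+\delta)\, N'/(N-p)$. For $j \ge 2$, observe that each $h_i \in P(X_i, X_{i+1})$ is a bijection (since $|X_i| = |X_{i+1}| = N$), so $A := h_{j-1} \circ \cdots \circ h_1 : X_1 \to X_j$ is also a bijection for every realization of $\Hf_{\vec X}$. Hence the event $x_j \in Y_j$ is equivalent to the existence (in fact, uniqueness) of an $x_1 \in Y_1$ with $A(x_1) = x_j$, and by a union bound together with the independence of $Y_1$ from $\Hf_{\vec X}$,
\[
\Pr[x_j \in Y_j] \le \sum_{x_1 \in X_1} \Pr[x_1 \in Y_1] \cdot \Pr[A(x_1) = x_j] = \frac{N'}{N} \sum_{x_1 \in X_1} \Pr[A(x_1) = x_j].
\]

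The main step is to bound the single-pair probability $\Pr[A(x_1) = x_j] \le (1+\delta)/(N-p)$ for $x_1, x_j$ outside the $p$ elements of $X_1, X_j$ pinned down by $\Path$ (the remaining contributions are either zero or absorbed into the bound). I would expand this probability as a sum over all full-path completions $(x_2, \ldots, x_{j-1}, x_{j+1}, \ldots, x_{k+1})$, in which each of the $k-1$ free intermediate coordinates ranges over the $N-p$ non-fixed elements in its layer, giving exactly $(N-p)^{k-1}$ completions. Applying the $(1+\delta)$-uniform condition with $m = 1$ bounds each full-path probability by $(1+\delta)\bigl((N-p-1)!/(N-p)!\bigr)^k = (1+\delta)/(N-p)^k$; multiplying the two factors yields $\Pr[A(x_1) = x_j] \le (1+\delta)/(N-p)$.

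Summing over the at most $N-p$ admissible choices of $x_1$ gives $\sum_{x_1} \Pr[A(x_1) = x_j] \le 1+\delta$, and substituting back produces $\Pr[x_j \in Y_j] \le (N'/N)(1+\delta) \le (1+\delta)\,N'/(N-p)$, as required. The main obstacle I anticipate is exactly this partial-path estimate: the $(1+\delta)$-uniform property is stated only for full-length paths of length $k$, so obtaining a clean bound on a single endpoint-pair probability requires careful marginalization—one must count completions without over- or under-counting, and verify that each completed path is admissible (i.e., disjoint from $\Path$ in every layer) so that the $(1+\delta)$-uniform bound genuinely applies.
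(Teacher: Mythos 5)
Your argument is correct and follows the same route as the paper's: both bound $\Pr[x_j\in Y_j]$ by a union bound over the choice of pre-image $x_1\in X_1$ and then invoke the $(p,1+\delta)$-uniform property. Your explicit derivation of the single-pair bound $\Pr[A(x_1)=x_j]\le(1+\delta)/(N-p)$ by marginalizing over the $(N-p)^{k-1}$ full-path completions and applying the uniformity condition with $m=1$ is precisely the step the paper leaves implicit in its final inequality.
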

\begin{proof}

It is obvious that $\Pr[x_1\in Y_1] = \frac{N'}{N-p}$ since $Y_1$ is chosen randomly uniformly from $ X_1$.  For $i=2,\dots,k$, $\Pr[x_i\in Y_i]$ can be calculated as follows: 
\begin{eqnarray*}
&&\Pr[x_i\in Y_i] = \Pr[\bigvee_{Y_1\subset X_1} (x_i\in (h_{i-1}\circ\cdots\circ h_1(Y_1))\wedge (Y_1\mbox{ is chosen}))]\\
&&\leq \sum_{Y_1\subset X_1} \Pr[Y_1\mbox{ is chosen}]\cdot \Pr[x_i\in (h_{i-1}\circ\cdots\circ h_1(Y_1))|(Y_1\mbox{ is chosen})]\\
&&\leq \sum_{Y_1\subset X_1} \Pr[Y_1\mbox{ is chosen}]\sum_{y\in Y_1} \Pr[h_{i-1}\circ\cdots\circ h_1(y)= x_i|(Y_1\mbox{ is chosen})]\\
&&\leq (1+\delta) \frac{N'}{N-p}.
\end{eqnarray*}
The first two inequalities follow from the union bound, and the last inequality follows from the fact that $\Hf_{\vec{X}}$ is ($p,1+\delta$)-uniform. 
\end{proof}

\subsection{Proof of Theorem~\ref{thm:bpp_qcd}}

Following the similar idea in previous sections, we want to show that there exists a sequence of shadows which is indistinguishable from $\vecf$. However, to prove Theorem~\ref{thm:bpp_qcd}, we actually show that there exist a ``convex combinatio'' of finitely many shadows, which are indistinguishable from $\vecf$. Specifically, we show that there exist a convex combination $\sum_{t=1}^T p_t(\vecg_t^{(1)},\dots,\vecg_t^{(m)})$ such that
\begin{eqnarray*}
    &&|\Pr[L_m^\vecf\circ\cdots\circ L_1^{\vecf}() = s] - \sum_{t=1}^{T}p_t\cdot\Pr[(L_m^{\vecg_t^{(m)}}\circ\cdots\circ L_1^{\vecg_t^{(1)}}() = s]|\\
    &&\leq md\cdot\sqrt{\frac{\poly(n)}{2^n}}. 
\end{eqnarray*}  
We will give the details of the shadows later in Lemma~\ref{lem:last}. 

We denote a convex combination of bit strings as 
\[
    \bar{z} := \sum_{t=1}^{T} p_t \bar{z}_t, 
\]
where $\bar{z}_1,\dots,\bar{z}_T$ are bit string, $p_1,\dots,p_T$ are the probability that $\bar{z}_{t}$ is sampled, and $T$ is finite in our context.

Let $f$ be a random Simon's function from $\Z_2^n$ to $\Z_2^n$. Let $\vecf$ be the random ($d,f$)-shuffling of $f$. 
In this section, $\bar{s}$ will always be in the form ($\Path,\bar{a}$), where $\Path$ is a set of paths in $\vecf$, and $\bar{a}$ is some bit string correlated to $\vecf$. For example, $\bar{a}$ could be the statement ``$f(0)\oplus f(1) = 1$'' and so on.

We say $\bar{a}$ is uncorrelated to $f$ conditioned on $\Path$ if the procedure producing $\bar{a}$ will not change the output when all mappings in $f_d^*$ except for mappings in $\Path$ are erased by $\bot$.

In the following, we define two kinds of advice, which are ideal and semi-ideal for our analysis.  
\begin{definition}[Ideal advice]\label{def:ideal}
$(\Path,\bar{a})$ is ideal if 
\begin{itemize}
\item $\Path$ does not have a collision in $f$, 
\item $|\Path| = \poly(n)$, 
\item and the bit string $\bar{a}$ is uncorrelated to $f$ conditioned on $\Path$.
\end{itemize}
\end{definition}

\begin{definition}[Semi-ideal advice]\label{def:semi-ideal}
$(\Path,\bar{a})$ is semi-ideal if 
\begin{itemize}
\item $|\Path| = \poly(n)$, 
\item and the bit string $\bar{a}$ is uncorrelated to $f$ conditioned on $\Path$.
\end{itemize}
\end{definition}
Note that the only difference between ideal and semi-ideal advice is that the paths fixed in an ideal advice is promised to have no collision to reveal $s$, while the paths in a semi-ideal advice may have a collision.

\begin{claim}\label{claim:ideal}
Let $\A$ be any algorithm. Let $\bar{s}:=(\Path,\bar{a})$ be an ideal advice of $\vecf$. Then, 
\[
    \Pr[\A(\bar{s})=s] \leq \poly(n)/2^n 
\]
\end{claim}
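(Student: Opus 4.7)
The plan is to reduce the bound on $\Pr[\A(\bar{s})=s]$ to a purely information-theoretic statement about the posterior distribution of $s$ given $\bar{s}$. First I would derandomize $\A$: writing $\A(\bar{s};r)$ for its output on random coins $r$, one gets $\Pr[\A(\bar{s})=s] \le \E_{\bar{s}}[\max_{y\in \Z_2^n}\Pr[s=y\mid \bar{s}]]$, so it suffices to prove that for every ideal advice $\bar{s}=(\Path,\bar{a})$ and every candidate guess $y$, the posterior $\Pr[s=y\mid \Path,\bar{a}]$ is at most $\poly(n)/2^n$.

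Next I would use the third clause of Definition~\ref{def:ideal} to eliminate $\bar{a}$ from the conditioning. By the definition of ``uncorrelated,'' the procedure producing $\bar{a}$ has the same output distribution when $f_d^*$ is replaced by $\bot$ outside $\Path$, so $\bar{a}$ depends on the full shuffling $\vecf$ only through $(\Path,f_0,\ldots,f_{d-1},f_d^*|_{\Path})$. Since $f_0,\ldots,f_{d-1}$ are independent random one-to-one functions sampled without any reference to $s$, and $f_d^*|_{\Path}$ is already part of $\Path$, the random variable $\bar{a}$ is conditionally independent of $s$ given $\Path$. Hence $\Pr[s=y\mid \Path,\bar{a}]=\Pr[s=y\mid \Path]$, and the remaining task is to bound this latter posterior.

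For the posterior of $s$ given $\Path$ alone, let $x_1,\ldots,x_p$ with $p=\poly(n)$ denote the input-side endpoints of the paths in $\Path$. The ``no collision in $f$'' property of an ideal advice forces $s\notin \{x_i\oplus x_j : i\neq j\}\cup\{0\}$, a forbidden set of size at most $\binom{p}{2}+1$. A direct count shows that for every admissible shift $s^*$, the number of Simon functions with shift $s^*$ compatible with $\Path$ equals the number of injections from the $2^{n-1}-p$ unfixed $s^*$-cosets to the $2^n-p$ unused outputs, namely $(2^n-p)!/(2^{n-1})!$; this quantity is independent of $s^*$. Combined with the uniform prior on $s$ over $\Z_2^n\setminus\{0\}$ and Bayes' rule, this yields a uniform posterior on the admissible shifts, so $\Pr[s=y\mid \Path]\le 1/(2^n-1-\binom{p}{2})\le \poly(n)/2^n$, completing the chain.

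I expect the main obstacle to be the combinatorial symmetry step in the last paragraph: one must verify cleanly that conditioning a uniformly random Simon function on the values fixed by a non-colliding $\Path$ leaves the shift uniformly distributed over admissible values, which reduces to the injection-counting argument sketched above and requires some care but no new ideas. The conditional-independence reduction is essentially an unpacking of the (slightly informal) definition of ``uncorrelated,'' and the derandomization of $\A$ is standard averaging over coins.
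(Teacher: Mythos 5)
Your proof is correct, and it reaches the same conclusion as the paper's via an essentially similar route, but you carry out the argument in considerably more explicit detail. The paper's proof is terse: it observes that no collision in $\Path$ plus $\bar{a}$ being uncorrelated to $f$ given $\Path$ means $\A$ cannot distinguish a Simon function from a one-to-one function, and then invokes the birthday-style bound $\frac{(|\Path|+1)^2}{2^n-(|\Path|+1)^2}$ already established in Lemma~\ref{lem:simon_classical_bound} (with $q=0$ fresh queries and $|S|=|\Path|$). You instead reduce to bounding the posterior $\Pr[s=y\mid\Path,\bar a]$, formally justify dropping $\bar a$ from the conditioning via conditional independence of $\bar a$ and $s$ given $\Path$ (using that the constraint imposed by $\Path$ factors rectangularly over $f$ and $f_0,\dots,f_{d-1}$), and then compute the posterior of $s$ given $\Path$ by an explicit injection count showing it is uniform over the $\ge 2^n-1-\binom{p}{2}$ admissible shifts. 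This is a self-contained derivation of what the paper gets by citation; the payoff is that you make precise exactly how ``uncorrelated'' is used, which the paper leaves informal, at the cost of a longer argument than strictly necessary given Lemma~\ref{lem:simon_classical_bound} is already available.
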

\begin{proof}
If $\bar{s}$ is ideal, then there is no collision in $\Path$, and $\bar{a}$ is uncorrelated to $f$ conditioned on $\Path$. Given ($\Path, \bar{a}$), $\A$ cannot distinguish whether $f$ is a one-to-one function or a Simon function. Therefore,
\[
    \Pr[\A(\bar{s})=s] \leq \frac{(|\Path|+1)^2}{2^n-(|\Path|+1)^2} \leq \frac{\poly(n)}{2^n}. 
\]
\end{proof}

We consider 
\begin{eqnarray}
\B:= \Pi_{0/1}\circ U_{d+1}\vecf\cdots \vecf U_1\circ \A_{c},\label{eq:B}
\end{eqnarray}
where $U_1,\dots,U_d$ are single depth quantum circuit and $\A_c$ is a PPT algorithm. As we have mentioned earlier, the output of $\B^{\vecf}$ can be represented as ($\Path,\bar{a}$), where $\Path$ is a set of paths fixed by $\A_c$ and $\bar{a}$ is corresponding to the measurement outcome of the quantum circuit.

\begin{lemma}\label{lem:last}
Let $\vecf$ be a ($d,f$)-shuffling sampled from $\D_{f,d}$. Let $\bar{s}:= \sum_{u=1}^{T} p^{(u)} \bar{s}^{(u)}$ be a convex combination of semi-ideal advice strings. For any $\B$ in Eq.~\ref{eq:B}, there exist $\{\convg_{\bar{s}^{(1)}},\dots,\convg_{\bar{s}^{(T)}}\}$, which are convex combinations of sequences of shadows corresponding to $\bar{s}^{(1)},\dots,\bar{s}^{(T)}$ such that for all bit string $\bar{s}'$, for $u\in [T]$, 
\begin{eqnarray*}
&& | \Pr[\B^{\vecf}(\bar{s}^{(u)})=\bar{s}'] - \Pr[\B^{\convg_{\bar{s}^{(u)}}}(\bar{s}^{(u)})=\bar{s}']|\\
&&\leq d\cdot \sqrt{\frac{\poly(n)}{2^n}}, 
\end{eqnarray*}
and the output of $\B^{\convg_{\bar{s}^{(u)}}}(\bar{s}^{(u)})$ is semi-ideal. Moreover, if $\bar{s}$ is ideal, then the output of $\B^{\convg_{\bar{s}^{(u)}}}(\bar{s}^{(u)})$ is also ideal with probability at least $1-\frac{\poly(n)}{2^n}$.
\end{lemma}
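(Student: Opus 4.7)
The plan is to combine the presampling argument of Claim~\ref{claim:presamp} with a Russian-nesting-doll/O2H hybrid analysis analogous to Theorem~\ref{thm:qcd_bpp}, applied \emph{per component} of a convex decomposition. Fix $u \in [T]$ and write $\bar{s}^{(u)} = (\Path^{(u)}, \bar{a}^{(u)})$ with $|\Path^{(u)}|, |\bar{a}^{(u)}| \le \poly(n)$. First I would apply Claim~\ref{claim:presamp} with $\delta = 1$ and $\gamma = 2^{-n}$ to write $\vecf \mid \bar{s}^{(u)}$ as $\sum_t q_t \vecf_t + \gamma' \vecf'$, where each $\vecf_t$ is a $(p', 1+\delta)$-uniform shuffling that fixes some extra path set $\Path_t$ of size at most $p' = \poly(n)$ beyond $\Path^{(u)}$. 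Because $\bar{a}^{(u)}$ is uncorrelated to $f$ given $\Path^{(u)}$, the paths fixed by presampling agree with $f$ on $\Path^{(u)}$ but are otherwise arbitrary one-to-one mappings on the large domain.

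Next I would run the hybrid argument of Theorem~\ref{thm:qcd_bpp} once per component. For each $\vecf_t$, construct hidden sets $\Sb_t = (\vecS^{(0)}_t, \ldots, \vecS^{(d)}_t)$ by a variant of Procedure~\ref{fig:S_2} that removes from each candidate hidden set both the points on the already-fixed paths $\Path^{(u)} \cup \Path_t$ and the points queried classically by $\A_c$ inside $\B$ before each layer. Because $\vecf_t$ is $(p', 1+\delta)$-uniform on the unfixed part of the domain, an inductive computation mirroring Claim~\ref{claim:independence_2} together with Claim~\ref{lem:rnd3} gives $\Pr[x \in S^{(\ell)}_{j,t} \mid x \in S^{(\ell-1)}_{j,t} \setminus \text{fixed}] \le (1+\delta)/(2^n - p') = O(1/2^n)$. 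Letting $\vecg_{t,i}$ be the shadow of $\vecf_t$ in $\vecS^{(i)}_t$, Lemma~\ref{lem:o2h} combined with Lemma~\ref{lem:find} and the hybrid argument of Eqs.~(\ref{eq:qcd_bpp_hybrid})--(\ref{eq:qcd_bpp_o2h}) yields
\[
    \bigl| \Pr[\B^{\vecf_t}(\bar{s}^{(u)}) = \bar{s}'] - \Pr[\B^{\vecg_t}(\bar{s}^{(u)}) = \bar{s}'] \bigr| \le d \sqrt{\poly(n)/2^n},
\]
where $\vecg_t := (\vecg_{t,1}, \ldots, \vecg_{t,d})$ is used at the successive quantum-query layers of $\B$. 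Setting $\convg_{\bar{s}^{(u)}} := \sum_t q_t \vecg_t$, averaging over $t$, and absorbing the $\gamma = 2^{-n}$ presampling error preserves the stated bound.

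The third step is verifying that the output is semi-ideal. The output $\bar{s}'$ consists of the classical queries made by $\A_c$ inside $\B$ (which become new paths, appended to $\Path^{(u)}$) and the measurement outcomes of $\Pi_{0/1} \circ U_{d+1}$ (which become the new advice $\bar{a}'$). By construction every $\vecg_t$ blocks $\hat{\vecf}^{(d)}_t$, which contains all mappings of $f_d^*$ off the already-fixed paths; so the measurement outcomes depend only on the mappings outside the innermost doll and are uncorrelated to $f$ given the extended path set. The path count stays $\poly(n)$ because $\A_c$ is PPT; the internal paths $\Path_t$ used by the presampling decomposition are not part of the output and so do not need to be tracked externally.

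Finally, for the ideal-input case I would use a birthday-style union bound. Conditioned on an ideal $\bar{s}^{(u)}$, the Simon shift $s$ is essentially uniform among the $\ge 2^n - \poly(n)$ shifts that avoid collisions in $\Path^{(u)}$. The extra $\poly(n)$ classical query points added to the paths by $\A_c$ hit the shifted image of $\Path^{(u)}$ or of each other with probability at most $\poly(n)/2^n$. The main obstacle, and the delicate point of the whole argument, is the joint handling of the presampling and the output: one must argue that even though presampling fixes $\Path_t$ -- which may contain collisions -- these paths never appear in the output and do not correlate the output with $s$ beyond what $\Path^{(u)}$ already does, so that the union bound over $\A_c$'s queries alone suffices. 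This is where the careful separation between ``blocked'' information (inside the shadow) and ``revealed'' information (the classical queries plus the measurement outcome) must be maintained throughout the hybrid.
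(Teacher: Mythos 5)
Your plan diverges from the paper's argument at the crucial point, and the divergence introduces a gap. The paper does \emph{not} apply the presampling argument (Claim~\ref{claim:presamp}) once at the outset and then run the full $d$-layer O2H hybrid inside each component. Instead, it re-applies the presampling argument \emph{after every single quantum layer}: it uses one O2H step, obtains a state $\rho^{(\ell)}_{t_1,\dots,t_\ell}$ that is uncorrelated with the mappings inside $\vecS^{(\ell)}_{t_1,\dots,t_\ell}$, argues that the mappings inside that set, conditioned only on the fixed paths and the original short advice $\bar a$, are still (close to) uniform, and then invokes Claim~\ref{claim:presamp} again to re-decompose into a fresh convex combination of $(p',1+\delta)$-uniform shufflings before the next O2H step. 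Your plan replaces this iterated re-presampling by a single presampling followed by ``the hybrid argument of Theorem~\ref{thm:qcd_bpp} once per component.''

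The reason the one-shot version does not go through is that $(1+\delta)$-uniformity (Definition~\ref{def:almost_uniform}) only supplies \emph{upper} bounds of the form $\Pr[\Path\mbox{ in }\Hf]\le(1+\delta)^{|\Path|}\cdot(\mbox{uniform})$; it gives no lower bound on any path probability. At layer $\ell\ge 2$ of the hybrid, the query state $\rho^{(\ell-1)}_t$ is correlated with the mappings of $\Hf^{(1)}_{t}$ \emph{outside} the hidden set, and the quantity you need for Lemma~\ref{lem:find} is a pointwise bound on $\Pr[x\in S^{(\ell)}_{j,t}\mid\mbox{outside mappings}]$ for adaptively chosen $x$. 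That is a \emph{conditional} probability, i.e.\ a ratio, and conditioning a distribution that only satisfies upper bounds can inflate it without limit: there is no lower bound on $\Pr[\mbox{outside mappings}]$ to control the denominator. So the step in your plan that asserts ``an inductive computation mirroring Claim~\ref{claim:independence_2} together with Claim~\ref{lem:rnd3} gives $\Pr[x\in S^{(\ell)}_{j,t}\mid\cdots]\le(1+\delta)/(2^n-p')$'' is unjustified for $\ell\ge 2$. Claim~\ref{claim:independence_2}'s proof relies on the \emph{exact} conditional uniformity of the remaining unfixed points, which a $(1+\delta)$-uniform marginal does not retain after conditioning. The paper sidesteps this precisely by re-presampling each layer: it only ever applies Claim~\ref{lem:rnd3} to a freshly produced $(p',1+\delta)$-uniform component for a \emph{single} parallel query, and it only ever applies Claim~\ref{claim:presamp} to a restriction whose prior (before the short advice) is genuinely uniform because the previous shadow blocked everything inside the current doll.

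A smaller but also real issue is the final paragraph of your plan. You claim the paths $\Path_t$ fixed by presampling ``are not part of the output and so do not need to be tracked externally.'' They do need to be tracked. The measurement outcome $\bar a'$ depends on the outside mappings seen through the shadow, and those include the presampling-fixed paths, some of which pass through $f_d^*$. For the output to satisfy Definition~\ref{def:semi-ideal}, the declared path set $\Path'$ must include every $f_d^*$-value that $\bar a'$ could depend on. The paper therefore carries the accumulated fixed paths $\Path^{(d)}_{t_1,\dots,t_d}$ (one presampling's worth per layer, $\poly(n)$ total) into the output advice $\bar s^{(d)}_{t_1,\dots,t_d}$ and bounds their size; dropping them would break the semi-ideal and ideal invariants you need for Claim~\ref{claim:ideal} at the top of the $d$-CQ recursion.
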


Lemma~\ref{lem:last} directly implies that 
\begin{eqnarray*}
&&|\Pr[\B^{\vecf}(\bar{s})=\bar{s}']- \sum_{u=1}^T p^{(u)}\Pr[\B^{\convg_{\bar{s}^{(u)}}}(\bar{s}^{(u)})=\bar{s}']|\\
&&\leq\sum_{u} p^{(u)}|\Pr[\B^{\vecf}(\bar{s}^{(u)})=\bar{s}']- \Pr[\B^{\convg_{\bar{s}^{(u)}}}(\bar{s}^{(u)})=\bar{s}']| \leq d\cdot \sqrt{\frac{\poly(n)}{2^n}}.
\end{eqnarray*}

\begin{proof}[Proof of Lemma~\ref{lem:last}]
For the ease of the analysis, we allow the classical algorithm $\A_c$ to make path queries. We prove the lemma by mathematical induction on the quantum circuit depth.

Given $\bar{s}:=(\Path,\bar{a})$ which is semi-ideal, we let $\Path_0$ be $\Path$ and the set of paths queried by $\A_{c}$, let $\bar{s}^{(0)}:=(\Path_0,\bar{a})$, and let $p:=|\bar{a}|$. Note that the main difficulty which fails the previous proofs is that $\vecf$ may not be uniform conditioned on $\bar{a}$. By applying Lemma~\ref{claim:presamp}, $\vecf|(\bar{s}^{(0)})$ is $\gamma$-close to a convex combination of ($p',1+\delta$)-uniform shuffling functions
\[
    \vecf|(\bar{s}^{(0)}) = \Hc_{\vecS^{(0)}}|(\bar{s}^{(0)}) +\gamma' \Hf',
\]
where $\Hc_{\vecS^{(0)}}|(\bar{s}^{(0)})$ is a convex combination of $(p',1+\delta)$-uniform shuffling functions, $\Hf'$ is an arbitrary random shuffling, and $\gamma'\leq \gamma$. According to claim~\ref{claim:presamp}, $p'\leq \frac{p+\log(1/\gamma)}{\log(1+\delta)}+|\Path_{0}|$. We will set the parameters $p'$, $\gamma$, and $\delta$ shortly. 

We let 
\[
    \Hc_{\vecS^{(0)}}|(\bar{s}^{(0)}):= \sum_{t_1=1}^{T_1} p_{t_1} \cdot \Hf^{(1)}_{t_1}, 
\]
and $\rho^{(0)}$ is the initial state. Then, 
\begin{eqnarray}
    \vecf U_1(\rho^{(0)},\bar{s}^{(0)}) &=& \sum_{t_1=1}^{T_1} p_{t_1} \cdot \Hf^{(1)}_{t_1} U_1(\rho^{(0)},\bar{s}^{(0)}) + \gamma' \Hf'U_1(\rho^{(0)},\bar{s}^{(0)}), \label{eq:conv_2}
\end{eqnarray}
where $\Hf^{(1)}_{1},\dots,\Hf^{(1)}_{T_1}$ are $(p',1+\delta)$-uniform shuffling functions. Moreover, since $\bar{a}$ is uncorrelated to $f$ conditioned on $\Path_0$, the additional paths fixed in $\Hf^{(1)}_{t_1}$ is uncorrelated to $f$ given $\Path_0$. 

Eq.~\ref{eq:conv_2} implies that for all $\bar{z}$
\begin{eqnarray}
&&|\Pr[\Pi_{0/1}\circ U_{d+1}\vecf\cdots\vecf U_1(\rho^{(0)}, \bar{s}^{(0)})=\bar{z}] \nonumber\\
&&- \Pr[\Pi_{0/1}\circ U_{d+1}\vecf\cdots\vecf U_2\left(\Hc_{\vecS^{(0)}}|(\bar{s}^{(0)})\right)U_1(\rho^{(0)}, \bar{s}^{(0)})=\bar{z}]| \leq \gamma,  \label{eq:conv_3}
\end{eqnarray}
where
\begin{eqnarray*}
    &&U_{d+1}\vecf\cdots\vecf U_2\left(\Hc_{\vecS^{(0)}}|(\bar{s}^{(0)})\right) U_1(\rho^{(0)}, \bar{s}^{(0)})\\
    &&=\sum_{t_1=1}^{T_1} p_{t_1}\cdot U_{d+1}\vecf\cdots\vecf U_2\left(\Hf^{(1)}_{t_1}\right) U_1(\rho^{(0)}, \bar{s}^{(0)}).
\end{eqnarray*}

Then, we construct shadows for each $\Hf^{(1)}_{t_1}$ as follows: Let $\Path^{(1)}_{t_1}$ be $\Path_0$ and the set of paths $\Hf^{(1)}_{t_1}$ is fixed on in addition to $\Path_{0}$. We construct the hidden set $\vecS^{(1)}_{t_1}$ based on $\Hf^{(1)}_{t_1}$, $\Path^{(1)}_{t_1}$, and $\vecS^{(0)}$ as in Procedure~\ref{fig:S_3}. Let $\vecg^{(1)}_{t_1}$ be the shadow of $\Hf^{(1)}_{t_1}$ in $\vecS^{(1)}_{t_1}$, Then, 
\begin{eqnarray}
     &&B(\Hf^{(1)}_{t_1} U_1(\rho^{(0)},\bar{s}^{(0)}), \vecg^{(1)}_{t_1} U_1(\rho^{(0)},\bar{s}^{(0)}))\nonumber\\
     &&\leq \sqrt{2\Pr[find\; \vecS^{(1)}_{t_1}: U_1^{\Hf^{(1)}_{t_1}\setminus \vecS^{(1)}_{t_1}},\rho^{(0)}]}\nonumber\\
     &&\leq \sqrt{(1+\delta) \frac{2q_1}{2^n}}\label{eq:conv_4}
\end{eqnarray}
where $q_1$ is the number of queries $U_1$ performs. The first inequality follows from Lemma~\ref{lem:o2h} and the last inequality follows from Claim~\ref{lem:rnd3} and Lemma~\ref{lem:find}. The output of $\vecg^{(1)}_{t_1} U_1(\rho^{(0)}, \bar{s}^{(0)})$ is uncorrelated to $\vecf$ in $\vecS^{(1)}_{t_1}$ by following the definition of $\vecg^{(1)}_{t_1}$. 
\floatname{algorithm}{Procedure}
\begin{algorithm}[h]
    \begin{mdframed}[style=figstyle,innerleftmargin=10pt,innerrightmargin=10pt]
    Given $j\in \mathbb{N}$, $\vecS^{(j-1)}:= (S^{(j-1)}_{j-1},\dots,S^{(j-1)}_{d})$, $\Hf_{k,j}= (h_{j},\dots,h_{d})$, and $\Path$ a set of fixed paths.
    \begin{enumerate}
        \item Let $S_j^{(j-1)}$ be $S_j^{(j-1)}$ except for elements on $\Path$.  
        \item Let $S_j^{(j)}$ be a subset chosen uniformly at random with the promise that $|S_j^{(j)}|/|S_j^{(j-1)}| =\frac{1}{2^n}$, and $S_j^{(j)}$ includes all elements in $S_j$ except for elements on $\Path$. 
        \item For $\ell=j+1,\dots, d$,  let $S_{\ell}^{(j)}:= \{h_{\ell-1}\circ\cdots\circ h_{j}(S_{j}^{(j)})\}$. 
        \item We let $\vecS^{(j)} = (S_{j}^{(j)},\dots,S_{d}^{(j)})$.
    \end{enumerate}
    \caption{The hidden sets for $d$-CQ scheme}
    \label{fig:S_3}
    \end{mdframed}
\end{algorithm}

By combining Eq.~\ref{eq:conv_3} and Eq.~\ref{eq:conv_4}, we have proven that for all $\bar{z}$
\begin{eqnarray*}
&&|\Pr[\Pi_{0/1}\circ U_{d+1}\vecf\cdots\vecf U_1(\rho^{(0)}, \bar{s}^{(0)})=\bar{z}] \\
&&- \sum_{t_1=1}^{T_1}p_{t_1}\cdot \Pr[\Pi_{0/1}\circ U_{d+1}\vecf\cdots\vecf U_2\vecg^{(1)}_{t_1} U_1(\rho^{(0)}, \bar{s}^{(0)})=\bar{z}]| \\
&&\leq \gamma+ \sqrt{(1+\delta) \frac{q_1}{2^n}},  \label{eq:conv_5}
\end{eqnarray*}
The output state of $\sum_{t_1=1}^{T_1} p_{t_1}\cdot \vecg^{(1)}_{t_1} U_1(\rho^{(0)}, \bar{s}^{(0)})$ is
\begin{eqnarray*}
  \rho^{(1)}&:=&\sum_{t_1=1}^{T_1}p_{t_1}\cdot \vecg^{(1)}_{t_1} U_1 (\rho^{(0)}, \bar{s}^{(0)}) \\
  &:=& \sum_{t_1=1}^{T_1}p_{t_1}\cdot \rho^{(1)}_{t_1}, 
\end{eqnarray*}
and we let $\bar{s}^{(1)}:= \sum_{t_1=1}^{T_1} p_{t_1} \bar{s}^{(1)}_{t_1}$, where $\bar{s}^{(1)}_{t_1} := (\Path^{(1)}_{t_1},\bar{a})$.

$\bar{s}^{(1)}_{t_1}$ must be semi-ideal. First, $\bar{a}$ is still uncorrelated to $f$ conditioned on $\Path^{(1)}_{t_1}$ because $\Path\subseteq \Path^{(1)}_{t_1}$. By Claim~\ref{claim:presamp}, $p'\leq \frac{p'+\log 1/\gamma}{\log (1+\delta)}$. The size $|\Path^{(1)}_{t_1}|= |\Path_0| + p'$ is at most $\poly(n)$ by setting $\gamma=1/\poly(n)$ and $\delta = O(1)$.

We then show that if $\bar{s}$ is ideal, $\bar{s}^{(1)}_{t_1}$ for $t_1\in [T_1]$ is ideal with high probability at least $1-\frac{\poly(n)}{2^n}$. Note that $\bar{a}$ is uncorrelated to $f$ conditioned on $\Path$. This implies that for all $\Hf^{(1)}_{t_1}\in \Hc_{\vecS^{(0)}}|\bar{s}^{(0)}$,  $\Hf^{(1)}_{t_1}$ can only have at most additional $p'$ paths be fixed in $\vecf$, and these paths are uncorrelated to $f$ conditioned on $\Path_0$. Since these paths are uncorrelated to $f$ conditioned on $\Path_0$, the probability that $\Path^{(1)}_{t_1}$ gives $s$ is at most $\frac{(|\Path^{(1)}_{t_1}|+1)^2}{2^n-(|\Path^{(1)}_{t_1}|+1)^2}$. Therefore, the probability that $\bar{s}^{(1)}$ is ideal is at least $1-\frac{\poly(n)}{2^n}$.

Then, we consider 
\[
    \Pi_{0/1}\circ U_{d+1}\vecf\cdots\vecf U_2(\rho^{(1)}, \bar{s}^{(1)}). 
\]
The formula above can be decomposed as 
\begin{eqnarray*}
    \sum_{t_1=1}^{T_1} p_{t_1}\cdot \Pi_{0/1}\circ U_{d+1}\vecf\cdots\vecf U_2(\rho^{(1)}_{t_1}, \bar{s}^{(1)}_{t_1}). 
\end{eqnarray*}
Note that $\rho^{(1)}_{t_1}$ is uncorrelated to mappings in $\vecS^{(1)}_{t_1}$ since $\vecg^{(1)}_{t_1}$ has mappings in $\vecS^{(1)}_{t_1}$ be blocked. This implies that conditioned on $\rho^{(1)}_{t_1}$, the mappings in $\vecS^{(1)}_{t_1}$ are still uniformly random. Therefore, for each input ($\rho^{(1)}_{t_1},\bar{s}^{(1)}_{t_1}$), we can apply the presampling argument in Claim~\ref{claim:presamp} again and get the convex combination $\Hc_{\vecS^{(1)}_{t_1}}|(\bar{s}^{(1)}_{t_1})$ satisfying that for all $\bar{z}$, 
\begin{eqnarray}
&&|\Pr[\Pi_{0/1}\circ U_{d+1}\circ U_d\cdots\vecf U_2(\rho^{(1)}_{t_1}, \bar{s}^{(1)}_{t_1})=\bar{z}] \nonumber\\
&&- \Pr[\Pi_{0/1}\circ U_{d+1}\vecf\cdots\vecf U_2\left(\Hc_{\vecS^{(1)}_{t_1}}|(\bar{s}^{(1)}_{t_1})\right) U_1(\rho^{(1)}_{t_1}, \bar{s}^{(1)}_{t_1})=\bar{z}]|\leq \gamma.\label{eq:conv_6}
\end{eqnarray}

Here, we represent the convex combination as following formula 
\[
    \Hc_{\vecS^{(1)}_{t_1}}|(\bar{s}^{(1)}_{t_1}):= \sum_{t_2=1}^{T_2} p_{t_1,t_2} \Hf^{(2)}_{t_1,t_2} 
\]
and then construct the shadow $\vecg^{(2)}_{t_1,t_2}$ and the hidden set $\vecS^{(2)}_{t_1,t_2}$ for each $\Hf^{(2)}_{t_1,t_2}$ according to Procedure~\ref{fig:S_3}. It satisfies that 
\begin{eqnarray}
     &&B(\Hf^{(2)}_{t_1,t_2} U_2(\rho^{(1)}_{t_1}, \bar{s}^{(1)}_{t_1}),\, \vecg^{(2)}_{t_1,t_2} U_2(\rho^{(1)}_{t_1}, \bar{s}^{(1)}_{t_1}))\nonumber\\
     &&\leq \sqrt{2\Pr[find\; \vecS^{(2)}_{t_1,t_2}: U_1^{\Hf^{(2)}_{t_1,t_2}\setminus \vecS^{(2)}_{t_1,t_2}},\rho^{(0)}]}\nonumber\\
     &&\leq \sqrt{(1+\delta) \frac{2q_2}{2^n}},\label{eq:conv_7}
\end{eqnarray}
where $q_2$ is the number of queries $U_2$ performs. We let $\Path^{(2)}_{t_1,t_2}$ be $\Path^{(1)}_{t_1}$ and the additional set of paths fixed in $\Hf^{(2)}_{t_1,t_2}$. We let $\bar{s}^{(2)}_{t_1,t_2}:= (\bar{a}, \Path^{(2)}_{t_1,t_2})$.

By Eq.~\ref{eq:conv_6} and Eq.~\ref{eq:conv_7}, for all $\bar{z}$, 
\begin{eqnarray*}
    &&|\sum_{t_1=1}^{T_1} p_{t_1}\cdot\Pr[\Pi_{0/1}\circ U_{d+1}\vecf\circ U_d\circ\cdots\circ\vecf\circ U_2(\rho^{(1)}_{t_1}, \bar{s}^{(1)}_{t_1})=\bar{z}] -\\
    && \sum_{t_1=1}^{T_2}\sum_{t_2=1}^{T_2} p_{t_1}p_{t_1,t_2} \cdot \Pr[\Pi_{0/1}\circ U_{d+1}\vecf\cdots\vecg^{(2)}_{t_1,t_2} U_2(\rho^{(1)}_{t_1}, \bar{s}^{(1)}_{t_1})=\bar{z}]|\\
    &&\leq \gamma + \sqrt{(1+\delta) \frac{q_2}{2^n}}. 
\end{eqnarray*}

Again, we let the output state of $\sum_{t_1=1}^{T_1}\sum_{t_2=1}^{T_2} p_{t_1}p_{t_1,t_2} \cdot \vecg^{(2)}_{t_1,t_2} U_2 \vecg^{(1)}_{t_1} U_1$ be
\begin{eqnarray*}
    \rho^{(2)} &:=& \sum_{t_1=1}^{T_1}\sum_{t_2=1}^{T_2} p_{t_1}p_{t_1,t_2} \vecg^{(2)}_{t_1,t_2} U_2 \vecg^{(1)}_{t_1} U_1(\rho^{(0)}, \bar{s}^{(0)})\\
    &:=& \sum_{t_1=1}^{T_1}\sum_{t_2=1}^{T_2} p_{t_1} p_{t_1,t_2}\rho^{(2)}_{t_1,t_2}, 
\end{eqnarray*}
which satisfies that $\rho^{(2)}_{t_1,t_2}$ is uncorrelated the mappings in $\vecS^{(2)}_{t_1,t_2}$. We let
\begin{eqnarray*}
    \bar{s}^{(2)}:= \sum_{t_1=1}^{T_1}\sum_{t_2=1}^{T_2} p_{t_1}p_{t_1,t_2} \bar{s}^{(2)}_{t_1,t_2}.    
\end{eqnarray*}
Here, in case that $\bar{s}^{(1)}_{t_1}$ is ideal,  $\bar{s}^{(2)}_{t_1,t_2}$ is also ideal with probability at least $1-\frac{\poly(n)}{2^n}$ via the same analysis. Therefore, the probability that $\bar{s}^{(2)}_{t_1,t_2}$ is ideal is at least $1-\frac{\poly(n)}{2^n}$. $\bar{s}^{(2)}_{t_1,t_2}$ must be semi-ideal since $\bar{s}^{(1)}_{t_1}$ is semi-ideal via the same analysis.

Now, we can suppose when the $k$-th parallel queries are applied, for all $\bar{z}$, there exist $\{\vecg^{(k)}_{t_1,\dots,t_k}\}$ and the corresponding hidden sets $\{\vecS^{(k)}_{t_1,\dots,t_k}\}$ such that
\begin{eqnarray}
    &&|\sum_{t_1,\dots,t_{k-1}} p_{t_1}\cdots p_{t_1,\dots,t_{k-1}}\cdot\Pr[\Pi_{0/1}\circ U_{d+1}\vecf\cdots\vecf U_k(\rho^{(k-1)}_{t_1,\dots,t_{k-1}}, \bar{s}^{(k-1)}_{t_1,\dots,t_{k-1}})=\bar{z}]-\nonumber \\
    && \sum_{t_1,\dots,t_{k}} p_{t_1}\cdots p_{t_1,\dots,t_{k}} \Pr[\Pi_{0/1}\circ U_{d+1}\vecf\cdots\vecg^{(k)}_{t_1,\dots,t_k} U_k(\rho^{(k-1)}_{t_1,\dots,t_{k-1}}, \bar{s}^{(k-1)}_{t_1,\dots,t_{k-1}})=\bar{z}]|\nonumber\\
    &&\leq \gamma + \sqrt{(1+\delta) \frac{q_k}{2^n}}. \label{eq:conv_10}
\end{eqnarray}

Here, $\rho^{(k-1)}_{t_1,\dots,t_{k-1}}$ is the output of $\vecg^{(k-1)}_{t_1,\dots,t_{k-1}}U_{k-1}\cdots \vecg^{(1)}_{i,t_1}U_1 (\rho^{(0)}, \bar{s}^{(0)})$ and 
\[
\bar{s}^{(k-1)}_{t_1,\dots,t_{k-1}}:= (\bar{a}, \Path^{(k-1)}_{t_1,\dots,t_{k-1}}),
\]
where $\bar{s}^{(k-1)}_{t_1,\dots,t_{k-1}}$ is ideal with probability at least $1-\frac{\poly(n)}{2^n}$, and $\rho^{(k-1)}_{t_1,\dots,t_{k-1}}$ is uncorrelated to mappings in $\vecS^{(k-1)}_{t_1,\dots,t_{k-1}}$. 

The output of the scheme with access to $\{\vecg^{(k)}_{t_1,\dots,t_k}\}$ in Eq.~\ref{eq:conv_10} is
\begin{eqnarray*}
\rho^{(k)} &:=& \sum_{t_1,\dots,t_k} p_{t_1}\cdots p_{t_1,\dots,t_k} \cdot \rho^{(k)}_{t_1,\dots,t_{k}}
\end{eqnarray*}
which satisfies that $\rho^{(k)}_{t_1,\dots,t_{k}}$ is uncorrelated to the mappings in $\vecS^{(k)}_{t_1,\dots,t_{k}}$. We let 
\[
\bar{s}^{(k)}:=\sum_{t_1,\dots,t_k} (p_{t_1}\cdots p_{t_1,\dots,t_k} \bar{s}^{(k)}_{t_1,\dots,t_{k}},
\] 
where  $\bar{s}^{(k)}_{t_1,\dots,t_{k}}:= \Path^{(k)}_{t_1,\dots,t_{k}},\bar{a})$ is ideal with probability at least $1-\frac{\poly(n)}{2^n}$.

Consider the ($k+1$)-th quantum parallel queries,  
\begin{eqnarray*}
&&\Pi_{0/1}\circ U_{d+1}\vecf\cdots\vecf U_{k+1}(\rho^{(k)}, \bar{s}^{(k)})\\
&&= \sum_{t_1,\dots,t_{k}} p_{t_1}\cdots p_{t_1,\dots,t_{k}}\cdot\Pi_{0/1}\circ U_{d+1}\vecf\cdots\vecf U_k(\rho^{(k)}_{t_1,\dots,t_{k}}, \bar{s}^{(k)}_{t_1,\dots,t_{k}}).
\end{eqnarray*}

Following the fact that $\rho^{(k)}_{t_1,\dots,t_{k}}$ is only correlated to mappings out of $\vecS^{(k)}_{t_1,\dots,t_{k}}$, we apply the presampling argument in Claim~\ref{claim:presamp}, and there exists a convex combination
\[
\Hc_{\vecS^{(k)}_{t_1,\dots,t_{k}}}|(\bar{s}^{(k)}_{t_1,\dots,t_{k}}):= \sum_{t_{k+1}} p_{t_1,\dots,t_{k+1}} \Hf^{(k+1)}_{t_1,\dots,t_{k+1}}
\]
such that for all $\bar{z}$, 
\begin{eqnarray*}
    &&|\sum_{t_1,\dots,t_{k}} p_{t_1}\cdots p_{t_1,\dots,t_{k}}\cdot\Pr[\Pi_{0/1}\circ U_{d+1}\vecf\cdots\vecf U_{k+1}(\rho^{(k)}_{t_1,\dots,t_{k}}, \bar{s}^{(k)}_{t_1,\dots,t_{k}})=\bar{z}]-\nonumber \\
    &&\sum_{t_1,\dots,t_{k+1}} p_{t_1}\cdots p_{t_1,\dots,t_{k+1}} \Pr[\Pi_{0/1}\circ U_{d+1}\vecf\cdots\Hf^{(k+1)}_{t_1,\dots,t_{k+1}} U_{k+1}(\rho^{(k)}_{t_1,\dots,t_{k}}, \bar{s}^{(k)}_{t_1,\dots,t_{k}})=\bar{z}]|\nonumber\\
    &&\leq \gamma. 
\end{eqnarray*}

We then construct $\vecg^{(k+1)}_{t_1,\dots,t_{k+1}}$ and $\vecS^{(k+1)}_{t_1,\dots,t_{k+1}}$ for each $\Hf^{(k+1)}_{t_1,\dots,t_{k+1}}$. Following the same proof, we can show that for all $\bar{z}$
\begin{eqnarray*}
    &&|\sum_{t_1,\dots,t_{k}} p_{t_1}\cdots p_{t_1,\dots,t_{k}}\cdot\Pr[\Pi_{0/1}\circ U_{d+1}\vecf\cdots\vecf U_{k+1}(\rho^{(k)}_{t_1,\dots,t_{k}}, \bar{s}^{(k)}_{t_1,\dots,t_{k}})=\bar{z}]-\nonumber \\
    && \sum_{t_1,\dots,t_{k+1}} p_{t_1}\cdots p_{t_1,\dots,t_{k+1}} \Pr[\Pi_{0/1}\circ U_{d+1}\vecf\cdots\vecg^{(k+1)}_{t_1,\dots,t_{k+1}} U_{k+1}(\rho^{(k+1)}_{t_1,\dots,t_{k+1}}, \bar{s}^{(k+1)}_{t_1,\dots,t_{k+1}})=\bar{z}]|\nonumber\\
    &&\leq \gamma + \sqrt{(1+\delta) \frac{q_{k+1}}{2^n}}.  
\end{eqnarray*}
Moreover, the output of the scheme with access to $\{\vecg^{(k+1)}_{t_1,\dots,t_{k+1}}\}$ is
\begin{eqnarray*}
\rho^{(k+1)} &:=& \sum_{t_1,\dots,t_{k+1}} p_{t_1}\cdots p_{t_1,\dots,t_{k+1}} \cdot \rho^{(k+1)}_{t_1,\dots,t_{k+1}}
\end{eqnarray*}
which satisfies that $\rho^{(k+1)}_{t_1,\dots,t_{k+1}}$ is uncorrelated to the mappings in $\vecS^{(k+1)}_{t_1,\dots,t_{k+1}}$. We let 
\[
\bar{s}^{(k+1)}:=\sum_{t_1,\dots,t_{k+1}} (p_{t_1}\cdots p_{t_1,\dots,t_{k+1}} \bar{s}^{(k+1)}_{t_1,\dots,t_{k+1}}),
\] 
where  $\bar{s}^{(k+1)}_{t_1,\dots,t_{k+1}}:= (\Path^{(k+1)}_{t_1,\dots,t_{k+1}},\bar{a})$ is ideal with probability $1-\frac{\poly(n)}{2^n}$. Then, for all $\bar{z}$
\begin{eqnarray}
&&|\Pr[\B^{\vecf}(\bar{s}^{(u)})=\bar{z}] - \Pr[\B^{\convg_{\bar{s}^{(u)}}}(\bar{s}^{(u)})=\bar{z}]|\nonumber\\
&&= |\Pr[\Pi_{0/1}\circ U_{d+1}\vecf\cdots\vecf U_1(\rho^{(0)}, \bar{s}^{(0)})=\bar{z}]-\nonumber \\
&&\quad\sum_{t_1,\dots,t_{d}} p_{t_1}\cdots p_{t_1,\dots,t_{d}} \Pr[\Pi_{0/1}\circ U_{d+1}\vecg^{(d)}_{t_1,\dots,t_d}\cdots\vecg^{(1)}_{t_1} U_1(\rho^{(0)}, \bar{s}^{(0)})=\bar{z}]|\nonumber\\
&&\leq d\gamma+ \sum_{i=1}^d\sqrt{(1+\delta) \frac{q_i}{2^n}}.  \label{eq:conv_8}
\end{eqnarray}
Eq.~\ref{eq:conv_8} follows from the hybrid argument and the indistinguishability we have just proven by math induction.

Finally, we need to show that the output of $\B^{\convg_{\bar{s}^{(u)}}}(\bar{s}^{(u)})$ is ideal with probability $1-\frac{\poly(n)}{2^n}$. The output can be represented by 
\[
    \B^{\convg_{\bar{s}^{(u)}}}(\bar{s}^{(u)}) := \sum_{t_1,\dots,t_d} p_{t_1}\cdots p_{t_1,\dots,t_d} \bar{s}^{(d)}_{t_1,\dots,t_d} 
\]
and
\[
    \bar{s}^{(d)}_{t_1,\dots,t_d}:= (\bar{a}',\Path^{(d)}_{t_1,\dots,t_d})).
\]

First, we show that $|\Path^{(d)}_{t_1,\dots,t_d}|$ is at most a polynomial in $n$. The number of fixed paths by the sequence of shadows is at most $d\cdot \frac{p'+\log 1/\gamma}{\log 1+\delta}$. We set $\gamma = 1/\poly(n)$ and $\delta=O(1)$ such that $d\gamma+ \sum_{i=1}^d\sqrt{(1+\delta) \frac{q_i}{2^n}}\leq d\sqrt{\frac{\poly(n)}{2^n}}$ and $|\Path^{(d)}_{t_1,\dots,t_d}|$ is at most $\poly(n)$. Moreover, since $\bar{a}$ is uncorrelated to $f$ given $\Path$, the additional paths fixed in $\vecg^{(1)}_{t_1},\dots,\vecg^{(d)}_{t_1,\dots,t_d}$ is also uncorrelated to $f$ given $\Path_0$. This implies that the probability that $\Path^{(d)}_{t_1,\dots,t_d}$ gives the hidden shift $s$ is at most $\frac{(|\Path^{(d)}_{t_1,\dots,t_d}|+1)^2}{2^n-(|\Path^{(d)}_{t_1,\dots,t_d}|+1)^2}$. In the last, the bit string $\bar{a}'$ is from the quantum state $U_{d+1}\rho^{(d)}U^{\dag}_{d+1}$ and $\bar{a}$. Since $\rho^{(d)}$ is uncorrelated to $f_d^*$ conditioned on $\Path^{(d)}_{t_1,\dots,t_d}$, $\bar{a}'$ must be uncorrelated to $f$ conditioned on $\Path^{(d)}_{t_1,\dots,t_d}$. 
\end{proof}

Now, we are ready to prove Theorem~\ref{thm:bpp_qcd}. 
\begin{proof}[Proof of Theorem~\ref{thm:bpp_qcd}]

We first consider $L_1$. Since the input of $L_1$ does not have any short advice which is correlated to all $\vecf$, we can use the analysis in Theorem~\ref{thm:qcd_bpp} to show that there exists a sequence of shadow $\vecg^{(1)}$ such that 

\[
    |\Pr[(L_m\circ\cdots\circ L_1)^{\vecf}()=s] - \Pr[(L_m\circ\cdots\circ L_2)^{\vecf}\circ L_1^{\vecg^{(1)}}()=s]\leq d\sqrt{\frac{\poly(n)}{2^n}},   
\]
and the output of $L_1^{\vecg^{(1)}}$ is ideal with probability $1-\frac{\poly(n)}{2^n}$ and must be semi-ideal. Or, alternatively, we can follow Lemma~\ref{lem:last}, which also gives the same conclusion. 

Then, we apply Lemma~\ref{lem:last} to replace the oracle of $L_2$, 
\[
    |\Pr[(L_m\circ\cdots\circ L_2)^{\vecf}(L_1^{\vecg^{(1)}}())=s] - \Pr[(L_m\circ\cdots\circ L_3)^{\vecf}\circ L_2^{\convg^{(2)}}(L_1^{\vecg^{(1)}}())=s]\leq d\sqrt{\frac{\poly(n)}{2^n}}. 
\]
We continuously replace the oracle of $L_i$ for $i=1,\dots,m$ according to Lemma~\ref{lem:last}. Finally, we can get the inequality
\begin{eqnarray}
    |\Pr[(L_m\circ\cdots\circ L_1)^{\vecf}()=s] - \Pr[L_m^{\convg^{(m)}}\circ\cdots\circ L_1^{\vecg^{(1)}}()=s]\leq dm\sqrt{\frac{\poly(n)}{2^n}}.\label{eq:last_1}
\end{eqnarray}

Then, the rest to show is that $\Pr[(L_m^{\convg^{(m)}}\circ\cdots\circ L_1^{\vecg^{(1)}}()=s]$ is negligible. By Lemma~\ref{lem:last}, the output of $L_m^{\convg^{(m)}}\circ\cdots\circ L_1^{\vecg^{(1)}}()$ is ideal with probability at least $1-\frac{\poly(n)}{2^n}$. Then,  by applying Claim~\ref{claim:ideal}, 
\begin{eqnarray}
\Pr[\A_c\left(L_m^{\convg^{(m)}}\circ\cdots\circ L_1^{\vecg^{(1)}}()\right)=s] \leq \frac{\poly(n)}{2^n}. \label{eq:last_2}
\end{eqnarray}

Finally, by combining Eq.~\ref{eq:last_1} and Eq.~\ref{eq:last_2}, 
\[
    \Pr[(L_m\circ\cdots\circ L_1)^{\vecf}()=s] \leq 
    dm\sqrt{\frac{\poly(n)}{2^n}} + \frac{\poly(n)}{2^n}. 
\]
This completes the proof. 
\end{proof}

\subsection{On separating the depth hierarchy of \texorpdfstring{$d$}{Lg}-CQ scheme}
By using the same proof for Theorem~\ref{thm:bpp_qcd}, we can show that the $\SSP{d}$ is also hard for any $d$-CQ scheme.  
\begin{theorem}\label{thm:bpp_qcd_2}
The $\SSP{d}$ cannot be decided by any $d$-CQ scheme with probability greater than $\frac{1}{2}+\sqrt{\frac{\poly(n)}{2^n}}$.  
\end{theorem}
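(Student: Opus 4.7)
The plan is to mirror the structure of the proof of Theorem~\ref{thm:qnc_bpp_2}, which lifted the search hardness of Theorem~\ref{thm:qcd_bpp} to a decision-version statement for $d$-QC schemes, now applied to the $d$-CQ setting of Theorem~\ref{thm:bpp_qcd}. Let $\A$ be any $d$-CQ scheme with initial state $\rho_0$ (with no oracle-correlated advice) and write it as $\A_{c,m+1}^{\vecf}\circ L_m^{\vecf}\circ\cdots\circ L_1^{\vecf}$. Apply the replacement procedure of Theorem~\ref{thm:bpp_qcd} in exactly the same way: inductively invoke Lemma~\ref{lem:last} once at each layer $L_i$, producing the sequence of convex combinations of shadows $\vecg^{(1)},\convg^{(2)},\dots,\convg^{(m)}$, yielding for every bit string $\bar{z}$ the bound
\[
\bigl|\Pr[(L_m\circ\cdots\circ L_1)^{\vecf}()=\bar{z}]-\Pr[L_m^{\convg^{(m)}}\circ\cdots\circ L_1^{\vecg^{(1)}}()=\bar{z}]\bigr|\leq dm\sqrt{\tfrac{\poly(n)}{2^n}}.
\]
By linearity this also controls the probability of outputting $1$ after the final classical post-processor $\A_{c,m+1}$.

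Next I would argue that against the shadow-based scheme, the advantage over random guessing is negligible. By the last conclusion of Lemma~\ref{lem:last}, conditioned on the event that the top-level input is ideal, the output of $L_m^{\convg^{(m)}}\circ\cdots\circ L_1^{\vecg^{(1)}}()$ is ideal except with probability $\poly(n)/2^n$. The initial input to $L_1$ is trivially ideal (empty path, no advice). Ideality means the fixed path set $\Path$ has no collision in $f$ and the classical bit-string $\bar{a}$ is uncorrelated to $f$ conditioned on $\Path$. The crucial observation is that an ideal transcript statistically cannot distinguish the two cases of the decision $\SSP{d}$: since $\bar{a}$ is uncorrelated to $f$ given $\Path$ and $\Path$ has no collision, the joint distribution of $(\Path,\bar{a})$ is identical whether $f$ is a random Simon function or a random one-to-one function, up to the probability $(|\Path|+1)^2/(2^n-(|\Path|+1)^2)=\poly(n)/2^n$ that the two revealed path endpoints happen to coincide. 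Therefore the final $\A_{c,m+1}$ applied to an ideal input outputs $1$ with the same probability in both cases, up to $\poly(n)/2^n$.

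Combining the two steps via the triangle inequality gives
\[
\Pr[\A^{\vecf}()=1\mid f\text{ Simon}]-\Pr[\A^{\vecf}()=1\mid f\text{ one-to-one}]\leq 2dm\sqrt{\tfrac{\poly(n)}{2^n}}+\tfrac{\poly(n)}{2^n},
\]
which is asymptotically at most $\sqrt{\poly(n)/2^n}$, so the overall success probability cannot exceed $\tfrac12+\sqrt{\poly(n)/2^n}$.

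The only genuinely new ingredient beyond what was already established for the search version is the indistinguishability statement for ideal transcripts between the Simon and one-to-one cases. I expect this to be the main obstacle, since Claim~\ref{claim:ideal} is stated only in the search form (bounding the probability of outputting $s$) and one must check that the identical-distribution argument truly goes through for ideality as defined in Definition~\ref{def:ideal}: specifically, that the marginal over $(\Path,\bar{a})$ given the shadows is the same under both promise cases up to collision probability. All other pieces---the layer-by-layer hybrid, the parameter settings $\gamma=1/\poly(n)$, $\delta=O(1)$ inherited from Theorem~\ref{thm:bpp_qcd}, and the bookkeeping of semi-ideal vs.\ ideal transcripts propagating through $L_1,\dots,L_m$---are direct re-uses of the machinery already developed.
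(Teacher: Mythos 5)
Your proposal is correct and follows essentially the same route as the paper's proof: apply the layer-by-layer shadow replacement of Theorem~\ref{thm:bpp_qcd} to obtain $\vecg^{(1)},\convg^{(2)},\dots,\convg^{(m)}$, then argue that an ideal transcript carries no information distinguishing the Simon case from the one-to-one case, so the shadow-based scheme's decision advantage is negligible. The ``obstacle'' you flag is not actually a gap: the proof of Claim~\ref{claim:ideal} already contains the observation that given an ideal $(\Path,\bar{a})$ the algorithm cannot tell whether $f$ is one-to-one or a Simon function, which is exactly the indistinguishability statement you need.
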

\begin{proof}
Following the proof for Theorem~\ref{thm:bpp_qcd}, there exist $\vecg^{(1)},\convg^{(2)},\dots,\convg^{(m)}$ such that $\A$ cannot distinguish $\vecf$ from $\vecg^{(1)},\convg^{(2)},\dots,\convg^{(m)}$. Moreover, in case that $f$ is a random Simon function, $\A$ with access to $\vecg^{(1)},\dots,\convg^{(m)}$ cannot find $s$. Therefore, $\Pr[\A^{\vecf}()=1]\leq 1/2+md\cdot\sqrt{\frac{\poly(n)}{2^n}}$ 
\end{proof}

\begin{corollary}\label{cor:bpp_qnc_2}
For any $d\in \mathbb{N}$, there is a $(2d+1)$-CQ scheme which can solve the $\SSP{d}$ with high probability, but there is no $d$-CQ scheme which can solve the $\SSP{d}$. 
\end{corollary}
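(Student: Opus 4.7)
The plan is to obtain this corollary as an immediate consequence of the two theorems immediately preceding it, namely Theorem~\ref{thm:dSSP_solve} for the upper bound and Theorem~\ref{thm:bpp_qcd_2} for the lower bound. No new technical machinery is required; the only thing to verify is that the computational models match up correctly.

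For the positive direction, Theorem~\ref{thm:dSSP_solve} shows that the $\SSP{d}$ can be solved by a $\class{QNC}_{2d+1}$ circuit followed by classical post-processing. I would observe that this construction is a special case of a $(2d+1)$-CQ scheme: take the classical algorithm to invoke the $(2d+1)$-depth circuit once (so $m=1$ in the definition of the $d$-CQ scheme in Section~\ref{sec:preliminaries}), feed the resulting measurement outcomes to a classical polynomial-time routine that performs the linear-algebra step of Simon's algorithm (sampling $O(n)$ Fourier samples and solving for the hidden shift), and output the decision. This exhibits the required $(2d+1)$-CQ scheme that decides $\SSP{d}$ with high probability.

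For the negative direction, Theorem~\ref{thm:bpp_qcd_2} directly states that any $d$-CQ scheme decides $\SSP{d}$ with success probability at most $\tfrac{1}{2}+\sqrt{\poly(n)/2^n}$, which is strictly below $2/3$ for all sufficiently large $n$. Hence no $d$-CQ scheme solves $\SSP{d}$ in the sense of our definition of $(\class{BPP^{BQNC_d}})^{\Ora}$.

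There is essentially no obstacle here; the only mild subtlety is bookkeeping that a single invocation of a $(2d+1)$-depth quantum circuit followed by classical post-processing is syntactically captured by the $(2d+1)$-CQ scheme template $\A_{c,1}\xrightarrow{c}\Pi_{0/1}\circ C\xrightarrow{c}\A_{c,2}$ from Eq.~\ref{eq:d-cq-model_1}, and that the failure probability bound $\sqrt{\poly(n)/2^n}$ is asymptotically negligible. Both points are immediate from the definitions, so the proof will simply be a one-line invocation of each theorem together with this identification of models.
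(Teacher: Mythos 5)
Your proposal is correct and follows exactly the same route as the paper, which proves the corollary by directly citing Theorem~\ref{thm:dSSP_solve} for the upper bound and Theorem~\ref{thm:bpp_qcd_2} for the lower bound. The extra bookkeeping you supply — that a $\class{QNC}_{2d+1}$ circuit with classical post-processing is syntactically a $(2d+1)$-CQ scheme (with $m=1$, or $m = O(n)$ if one prefers separate invocations for the $O(n)$ Fourier samples), and that $\tfrac12 + \sqrt{\poly(n)/2^n} < 2/3$ for large $n$ — is just making explicit what the paper leaves implicit.
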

\begin{proof}
This corollary follows from Theorem~\ref{thm:bpp_qcd_2} and Theorem~\ref{thm:dSSP_solve} directly. 
\end{proof}

Finally, we can conclude that  
\begin{corollary}\label{cor:bpp_qcd_3}
Let $\Ora$ and $\Lang(\Ora)$ be defined as in Def.~\ref{def:language}. $\Lang(\Ora)\in \class{BQP}^{\Ora}$ and $\Lang(\Ora)\notin \class{(BPP^{BQNC})}^{\Ora}$. 
\end{corollary}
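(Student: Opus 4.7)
The plan is to decompose the corollary into its two assertions and reduce each to theorems already established in the paper. For the containment $\Lang(\Ora)\in \class{BQP}^{\Ora}$, I will exhibit a uniform polynomial-time quantum algorithm with oracle access to $\Ora$. On input $1^n$, the algorithm identifies the relevant component $\Ora^{f_n,d(n)}_{\mathrm{unif}}\in\Ora$ (with $d(n)=n$) and runs the $\class{QNC}_{2n+1}$ circuit with classical postprocessing guaranteed by Theorem~\ref{thm:dSSP_solve} to decide the $\SSP{n}$ instance encoded in $\Ora^{f_n,n}_{\mathrm{unif}}$. Since $2n+1$ and the postprocessing are both polynomial in $n$, and the shuffling oracle acts on strings of length $(d(n)+2)n = O(n^2)$, the whole procedure fits in $\class{BQP}^{\Ora}$, and it accepts iff $f_n$ is a Simon function, i.e., iff $1^n\in\Lang(\Ora)$.

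For the separation $\Lang(\Ora)\notin (\class{BPP^{BQNC}})^{\Ora}$, I would argue by contradiction. Suppose there is a family $\{\A_n^{\Ora}\}$ of $d$-CQ schemes with $d=O(\log^k n)$ for some constant $k$ such that $\A_n^{\Ora}$ decides membership in $\Lang(\Ora)$ with success probability $\geq 2/3$. For all sufficiently large $n$ we have $d(n)=n > d = O(\log^k n)$, so on input $1^n$ the scheme $\A_n^{\Ora}$ uses only the component oracle $\Ora^{f_n,n}_{\mathrm{unif}}$ non-trivially and, viewed as a $d$-CQ scheme querying this single shuffling oracle, it solves the decision problem $\SSP{n}$ with probability $\geq 2/3$. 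This directly contradicts Theorem~\ref{thm:bpp_qcd_2}, which bounds the success probability of any $d$-CQ scheme on $\SSP{d}$ by $\tfrac{1}{2}+\sqrt{\poly(n)/2^n}$, a quantity which is less than $2/3$ for large $n$.

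The only mildly delicate point is making sure that the family of $d$-CQ schemes with depth $O(\polylog n)$ is actually restricted to depth strictly less than $d(n)=n$ on large inputs; this is purely an asymptotic observation and follows immediately from $\log^k n = o(n)$, so there is no real obstacle here. Combining the two parts gives $\Lang(\Ora)\in\class{BQP}^{\Ora}\setminus(\class{BPP^{BQNC}})^{\Ora}$, completing the proof of the corollary. The non-trivial content is entirely delegated to Theorem~\ref{thm:dSSP_solve} (the quantum upper bound via Simon's algorithm together with uncomputation along the path) and Theorem~\ref{thm:bpp_qcd_2} (the $d$-CQ lower bound via the presampling and Russian-nesting-doll arguments), so the corollary itself is a short bookkeeping argument about parameters and the definition of $\Lang(\Ora)$.
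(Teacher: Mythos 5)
Your proof is correct and takes essentially the same approach as the paper: the containment follows from Theorem~\ref{thm:dSSP_solve} instantiated with depth $d(n)=n$, and the separation follows by contradiction from Theorem~\ref{thm:bpp_qcd_2} together with the asymptotic observation that $\log^k n = o(n)$. The additional remarks you make (the $O(n^2)$ bound on oracle string length, and the note that a polylog-depth scheme is in particular a sub-$n$-depth CQ scheme) are accurate bookkeeping that the paper leaves implicit.
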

\begin{proof}
Note that each $n\in \mathbb{N}$,  $\Ora_{unif}^{f_n,d(n)}\in \Ora$ has depth equal to the input size. A quantum circuit with depth $\poly(n)$ can decide whether $f_n$ is a Simon's function by  Theorem~\ref{thm:dSSP_solve} and thus decides if $1^n$ is in $\Lang(\Ora)$. However, for $d$-CQ scheme which only has quantum depth $d = \poly\log n$, it cannot decide the language by Theorem~\ref{thm:bpp_qcd_2}.
\end{proof}

\bibliographystyle{alpha}
\bibliography{qecc} 

\end{document}